\documentclass[letterpaper, 11pt]{article}

\usepackage{amsmath, amssymb, amsfonts, amsthm}
\usepackage{bm}
\usepackage{mathtools}
\usepackage{mathrsfs}
\usepackage{geometry}
\usepackage{setspace}
\usepackage[utf8x]{inputenc}
\usepackage{algpseudocode,algorithmicx,algorithm}
\usepackage[T1]{fontenc}
\usepackage{lmodern}
\usepackage{graphicx}
\usepackage{enumerate}
\usepackage[natural, dvipsnames, pdftex]{xcolor}
\usepackage{tikz}
\usepackage{verbatim}
\usepackage{hyperref}
\usepackage{cancel}
\usepackage[all]{xy}
\usepackage{mdframed}
\usepackage[capitalize]{cleveref}
\usepackage{todonotes}
\usepackage{algpseudocode}
\usepackage{comment}
\usepackage{booktabs}
\usepackage{tabularx}
\usepackage{array}

\usepackage[numbers]{natbib}

\usepackage{enumitem}

\usepackage{indentfirst}
\theoremstyle{plain}
\newtheorem{thm}{Theorem}[section]
\newtheorem{lemma}[thm]{Lemma}
\newtheorem{prop}[thm]{Proposition}
\newtheorem{cor}[thm]{Corollary}

\theoremstyle{definition}
\newtheorem{defn}[thm]{Definition}

\theoremstyle{remark}
\newtheorem{rmk}[thm]{Remark}

\newcommand{\F}{\mathbb{F}}

\newcommand{\cC}{\mathcal{C}}

\def\g{{\mathfrak{g}}}

\begin{document}

\title{On Proximity Gaps for One-Point Algebraic Geometric Codes}

\author{Songsong Li\thanks{Shanghai Jiao Tong University, Shanghai, China. \texttt{songsli@sjtu.edu.cn}.}
\and  Chaoping Xing \thanks{Shanghai Jiao Tong University, Shanghai, China. \texttt{xingcp@sjtu.edu.cn}.}
\and Chen Yuan\thanks{Shanghai Jiao Tong University, Shanghai, China. \texttt{chen\_yuan@sjtu.edu.cn}.}
\and Ruiqi Zhu\thanks{Shanghai Jiao Tong University, Shanghai, China. \texttt{sjtuzrq7777@sjtu.edu.cn}.}
}
\date{}
\maketitle

\begin{abstract}
    Proximity gaps and correlated agreement underpin the soundness of interactive oracle proofs. We establish these properties for general one-point algebraic geometry (AG) codes, extending corresponding results for Reed-Solomon codes. For codes of length $n$ over $\mathbb F_q$ with designed distance $n-m$, write $\rho=m/n$ and fix $\rho\in(0,1)$. For polynomial curves of degree at most $\ell$ and any $0<\varepsilon<1$, we prove correlated agreement with multiplicative slack $\varepsilon$ and soundness $O(\ell(\ell+1)/(\varepsilon q))$ whenever $e\le(1-\varepsilon)(n-m)/2$, with no additional genus penalty. This gives asymptotically good AG code families from Garcia--Stichtenoth tower of unbounded length over a fixed field with soundness bounds independent of $n$, beyond the RS restriction $n\le q$. We also establish mutual correlated agreement up to radius $(1-\sqrt{\rho}-\eta)n$, with soundness $O(\ell n/(\eta^3q))$ for $0<\eta<1-\sqrt{\rho}$. Our half designed distance proof combines auxiliary divisors, simultaneous Berlekamp-Welch interpolation, and error-incidence counting to obtain a common polynomial representation without a Polishchuk-Spielman divisibility lemma. 
    
    Finally, two counterexamples on the Garcia-Stichtenoth tower show limitations beyond half the minimum distance, at radii below the Johnson bound and with positive slack. The first requires a positive constant soundness lower bound independent of $q$ and $n$, precluding a uniform $O(1/q)$ guarantee throughout the Johnson regime. The second fixes the field and lets the relative gap above half distance tend to zero while ruling out correlated agreement for every soundness parameter $\sigma<1$, with positive limiting rate and relative error radius.
\end{abstract}

\section{Introduction}\label{sec:introduction}
%the background of proximity gaps and short mention of its applications in proof systems and why this an active area. 
Random linear combinations are a basic soundness tool in algebraic proof systems. Let $\mathcal{C}\subseteq\F_{q}^n$ be a linear code. For any two words $\mathbf{u}_0, \mathbf{u}_1\in\F_q^n$, let $\mathbf r(z)=\mathbf u_0+z\mathbf u_1$ be a random linear combination for $z\in\F_q$. If $\mathbf{u}_0, \mathbf{u}_1$ agree with codewords $\mathbf{c}_0, \mathbf{c}_1$ on the same set of coordinates, then every $\mathbf r(z)$ agrees with $\mathbf{c}_0+z\mathbf{c}_1$ on that set. The converse is less straightforward since errors may cancel for some values of $z$ and a codeword close to $\mathbf r(z)$ may depend on $z$. A proximity gap at a given radius says that either every point on the line is close to the code, or only a small fraction of the parameters give points close to the code. %Correlated at a given radius says that either every point on the line is close to the code, or only a small fraction of the points are close.
Correlated agreement strengthens this conclusion by requiring codewords approximating the input words on a common set of coordinates. Mutual correlated agreement further controls the individual agreement sets of nearby codewords. These stronger properties are useful when a protocol must preserve not only proximity, but also the coordinates on which agreement holds~\cite{BCIKS23,WHIR25}.

This question arises in interactive oracle proofs of proximity (IOPPs). Random linear combinations allow a verifier to batch proximity claims before making a small number of oracle queries~\cite{RVW13,AHIV17,BKS18}. If the two input words have no nearby codewords on a common coordinate set, but $\mathbf{r}(z)$ is close to $\cC$ for at most $B$ choices of $z$, this randomization step contributes at most $B/|\F_q|$ to the soundness error. Proximity gaps therefore provide soundness guarantees and guide parameter choices for IOPP protocols. FRI and DEEP-FRI made it important to understand whether random linear combinations preserve distance from Reed-Solomon codes~\cite{FRI18,DEEP20}. More recently, STIR and WHIR have used proximity and correlated agreement guarantees to reduce query complexity and improve verification time~\cite{STIR24,WHIR25}. Through polynomial commitments, these improvements also support more efficient succinct proof systems.

For arbitrary linear codes with relative minimum distance $\Delta$, proximity gaps for affine lines hold below relative radius $\Delta/3$~\cite[Section~2.1]{DP24}. The \emph{1.5 Johnson radius} $1-(1-\Delta)^{1/3}$ was established with proximity loss in~\cite{DEEP20}. Gao et al.~\cite{GKL24} obtained soundness $O(n/q)$ below this radius without additional proximity loss. %for fixed relative parameters.

For Reed-Solomon codes, Ben-Sasson et al.~\cite{BCIKS23} established proximity gaps below half the designed distance and up to the Johnson radius. Ben-Sasson et al.~\cite{BCHKS25} later improved the soundness error from $O(n^2/q)$ to $O(n/q)$ below the Johnson bound, without additional proximity loss. They also obtained bounds independent of the block length below half the designed distance when slack is allowed.

Beyond half the minimum distance, however, the RS counterexample of~\cite{DEEP20}, recalled in~\cite[Section~1.1]{BCHKS25}, shows that correlated agreement can require soundness $\Omega(n/q)$ even below Johnson and with a small fixed positive slack. In this example, $\mathbf u_0+z\mathbf u_1$ is close to the code for $\Omega(n)$ values of $z$, but the words $(\mathbf{u}_0,\mathbf{u}_1)$ do not have the common agreement required by correlated agreement.

Recent work has extended list decoding of Reed-Solomon codes all the way to capacity. Brakensiek et al.~\cite[Corollary~5.1]{BCPZZ26} gave deterministic list decoding algorithms up to radius $1-R-\varepsilon$ over sufficiently large prime fields and arbitrary evaluation sets, with running time polynomial in $q$ for fixed $R,\varepsilon$. For fixed $R$ and $q=n^{O(1)}$, however, the resulting upper bound on the list size is $n^{(1/\varepsilon)^{O_R(1/\varepsilon)}}$. Jeronimo~\cite{Jer26} further obtained capacity list decoding over prime fields together with mutual correlated agreement at the same radius, without additional proximity loss, with list size $n^{O_\varepsilon(1)}$ and soundness error $n^{O_\varepsilon(1)}/q$. Very recently, Dao et al.~\cite{DKT26} further improved the list size and mutual correlated agreement soundness beyond the Johnson radius and gave better dependence on $\varepsilon$ up to capacity.

A complementary question concerns the relation between alphabet size and block length. Below half the designed distance, RS codes have soundness $O(1/q)$ when the relative parameters and the slack are fixed. This bound does not require $q$ to grow with $n$. The construction of RS codes, however, needs $n$ distinct evaluation points in $\mathbb F_q$, so $n\leq q$. Thus, realizing such a guarantee at unbounded lengths over a fixed field requires a different code family.

Algebraic-geometric (AG) codes provide a natural candidate. They generalize RS codes by evaluating functions with bounded poles at rational points of an algebraic curve. Over a sufficiently large fixed square field, towers of curves yield one-point AG codes of unbounded length with positive rate and positive relative distance~\cite{Sti09,GarciaStichtenoth95}. The RS proximity proofs, however, use properties of polynomials that do not apply directly to functions on a curve.

To prove correlated agreement near half the designed distance, the RS argument shows that many nearby messages have the form $f_z=v_0+zv_1$, where $v_0,v_1$ do not depend on $z$. The proof uses the Polishchuk-Spielman lemma, which is a divisibility result for bivariate polynomials with degree bounds. For AG codes, messages are functions on a curve with bounded poles, so the polynomial divisibility lemma does not apply directly. We therefore need a different argument to prove the relation $f_z=v_0+zv_1$.

In the Johnson proof for RS codes, the Taylor expansion of a message polynomial is finite. Once its coefficients are known, the polynomial can be evaluated at every code coordinate. For AG codes, a function's expansion at one point is generally infinite, and a finite truncation cannot simply replace the function when evaluating at other points. We therefore need to recover these function values from finitely many expansion coefficients and use them to obtain the same order of soundness error. This leads to the following question: To what extent do the RS proximity and agreement guarantees below half the designed distance and in the Johnson regime extend to one-point AG codes?

\subsection{Our contributions and comparisons with Reed-Solomon codes}
\label{subsec:intro-results}

\paragraph{Parameters and agreement notions.}
Let $\mathcal X/\mathbb F_q$ be a smooth projective geometrically irreducible curve of genus $\g$, and let $P_1,\ldots,P_n,P_\infty$ be pairwise distinct rational points. Fix a positive integer $m$ satisfying $2\g-1\leq m<n$, set two divisors $G=mP_\infty$ and $D=P_1+\cdots+P_n.$ The Riemann-Roch space $\mathcal L(G)$ consists of rational functions on $\mathcal X$ with no poles outside $P_{\infty}$ and pole order at most $m$ at $P_{\infty}$. We consider the one-point AG code
\[
  \mathcal C=\mathcal C(D,G)
  =\{f(D):=(f(P_1),\ldots,f(P_n)):f\in\mathcal L(G)\}.
\]
We simply call $f(D)$ the evaluations of $f$ on $D$. The assumptions on $m$ ensure that the $n$-point evaluation is injective and that code dimension is $m+1-\g$~\cite{Sti09}. Moreover, $\mathcal C$ has minimum distance at least its designed distance $d:=n-m$. Thus, $\cC$ is an $[n,m+1-\g,\geq n-m]$ linear code over $\F_q$. We write $\rho=m/n$, so that the code rate $R=\rho-\g/n+1/n$. We state our bounds in terms of $\rho$, consistent with the Johnson radius $1-\sqrt{\rho}$ for AG list decoding based on the designed minimum distance~\cite{GS99}.

Fix an integer $\ell\ge1$. We prove our results for polynomial curves of degree at most $\ell$
\[
    \mathbf r(Z)=\sum_{j=0}^{\ell}\mathbf u_jZ^j,
    \qquad \mathbf u_0,\ldots,\mathbf u_\ell\in\mathbb F_q^n.
\]
In particular, the case $\ell=1$ gives affine lines.

Following the parameter notation of~\cite{goyal2026optimal}, we use $\ell$ for the degree bound, $e$ for the Hamming radius, $\sigma$ for the soundness error and $\varepsilon$ for the multiplicative slack. Then $(\ell,e,\sigma,\varepsilon)$-correlated agreement (CA) says that, if more than a $\sigma$ fraction of challenges $z\in\F_q$ satisfy $d(\mathbf r(z),\mathcal C)\le e$, then there exist codewords $\mathbf c_0,\ldots,\mathbf c_\ell$ and a common set $I\subseteq [n]$ of size at least $n-(1+\varepsilon)e$ such that $\mathbf c_i|_I=\mathbf u_i|_I$ for every $i=0,\ldots,\ell.$ 

This implies $(\ell,e,\sigma,\varepsilon)$-proximity gaps (PG): under the same hypothesis, every point of the curve is within distance $(1+\varepsilon)e$ of the code. The stronger $(\ell,e,\sigma)$-mutual correlated agreement (MCA) requires that, outside a single exceptional set of at most $\sigma q$ challenges, \emph{every specified agreement set} of size at least $n-e$ for $\mathbf r(z)$ admits codewords matching all original words on that same set. We write $\varepsilon=0$ for CA and PG without slack and MCA is defined without slack. Formal definitions appear in Section~\ref{subsec:agreement-definitions}.

%We distinguish two parameters throughout. The slack $\varepsilon$ measures loss in the \emph{conclusion}; the Johnson margin $\eta$ measures distance from the \emph{radius boundary}. A theorem without slack has $\varepsilon=0$, even when $\eta>0$.

\begin{thm}[CA up to half the designed distance; see Theorem~\ref{thm:ag-corrlated-agreement}]
\label{thm:intro-unique}
Let $\cC$ be an $[n,m+1-\mathfrak{g}, \geq n-m]$ one-point AG code. Let $e\ge1$ be an integer and assume $e<\frac{n-m}{2}$. For every $0<\varepsilon<1$, $\mathcal C$ has $(\ell,e,\sigma_\varepsilon,\varepsilon)$-CA with $\sigma_\varepsilon=O\!\left(\max\left\{\frac{\ell}{\varepsilon q},\frac{\ell(\ell+1)n}{(n-m-2e)q}\right\}\right).$ It also has $(\ell,e,\sigma_0,0)$-CA with $\sigma_0=\min\left\{1,\max\left\{\frac{\ell(e+1)}{q},\frac{\ell(\ell+1)n}{(n-m-2e)q}\right\}\right\}.$
\end{thm}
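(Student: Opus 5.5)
The plan follows the Berlekamp--Welch strategy of the RS argument of Ben-Sasson et al., but derives the affine structure $f_z=\sum_j v_jz^j$ by an error-incidence count on the curve rather than by Polishchuk--Spielman divisibility. Let $S\subseteq\F_q$ be the set of good challenges, those with $d(\mathbf r(z),\cC)\le e$, and assume $|S|>\sigma q$. For each $z\in S$ let $f_z\in\mathcal L(mP_\infty)$ be the unique decoded message; it is unique because $e<(n-m)/2$. Let $E_z$ be the set of coordinates where $f_z(D)$ and $\mathbf r(z)$ disagree, so $|E_z|\le e$.

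\textbf{Step 1: simultaneous interpolation.} I fix an auxiliary divisor $A$ of degree about $e+\g$, supported away from $P_1,\dots,P_n,P_\infty$, so that $\mathcal L(A)$ contains a nonzero error locator vanishing at any prescribed $e$ points. I then consider the linear system in unknowns $W\in\mathcal L(A)$ and $N_0,\dots,N_\ell\in\mathcal L(A+mP_\infty)$ with $W(P_i)u_{j,i}=N_j(P_i)$ for all $i$ and $j$. I look for a solution for which $N(Z)=\sum_j N_jZ^j$ satisfies $N(z)=W f_z$ for many $z\in S$.

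To obtain it, I would treat $Z$ as a formal variable and solve over $\mathcal L(\cdot)\otimes\F_q[Z]$ with $Z$-degree bounded by $\ell$. For each fixed $z\in S$, the relation $W(P_i)r_i(z)=N(z)(P_i)$ holds on all $n$ coordinates whenever $W$ vanishes on $E_z$. If $\deg(A+mP_\infty)<n$, any two solutions then give the same ratio $N(z)/W$. This forces $N(z)=Wf_z$, since $Wf_z$ and $N(z)$ agree on at least $n$ points and lie in a space of dimension below $n$.

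\textbf{Step 2: incidence counting.} The difficulty is that a single $W$ must work for many $z$. Here I count incidences $(z,i)$ with $i\in E_z$. Each coordinate $i$ is either a \emph{bad} coordinate, lying in $E_z$ for more than $\ell$ values of $z$, or it lies in only $\le\ell$ of the sets $E_z$. For a bad coordinate, the degree-$\ell$ polynomial $r_i(Z)-(\text{interpolant of } f_z(P_i))$ has too many roots unless the fixed pairs already disagree there. The total number of bad coordinates is at most about $(1+\varepsilon)e$, because a double count of $\sum_z|E_z|\le e|S|$ against the at most $\ell$ incidences from the remaining coordinates gives a bound with slack controlled by $\ell/(\varepsilon q)$. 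This is the source of the term $\ell/(\varepsilon q)$, and the term $\ell(e+1)/q$ when $\varepsilon=0$.

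\textbf{Step 3: affine structure.} Choose $\ell+1$ challenges $z_0,\dots,z_\ell\in S$. Lagrange interpolation gives $v_0,\dots,v_\ell\in\mathcal L(mP_\infty)$ with $\sum_j v_jz_k^j=f_{z_k}$. For any other $z\in S$, the function $g_z=f_z-\sum_jv_jz^j$ lies in $\mathcal L(mP_\infty)$. By the incidence bound, $g_z$ vanishes on all coordinates outside $E_z\cup\bigcup_k E_{z_k}\cup B$, where $B$ is the set of bad coordinates. Controlling $|E_z\cup\bigcup_k E_{z_k}|$ on average requires choosing the $z_k$ randomly among good challenges. This yields $g_z=0$ unless $z$ lies in an exceptional set of size about $\ell(\ell+1)n/(n-m-2e)$, which explains the second term of $\sigma$.

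\textbf{Conclusion.} Once $f_z=\sum_jv_jz^j$ for more than $\ell$ good values of $z$ (indeed for more than $\ell e$ of them, by pigeonhole), each coordinate outside $B$ satisfies $u_{j,i}=v_j(P_i)$ for all $j$, by polynomial identity in $Z$. Taking $I=[n]\setminus B$ with $\mathbf c_j=v_j(D)$ finishes the proof.

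The main obstacle is Step 3: I must ensure that the vanishing set of $g_z$ has more than $m$ points without Polishchuk--Spielman. The first thing I would try is an averaging argument over random choices of $(z_0,\dots,z_\ell)$, using the gap $n-m-2e$ as the margin.
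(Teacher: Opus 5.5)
Your Step~3 is a genuine gap, and averaging over the choice of $z_0,\dots,z_\ell$ cannot close it. The only incidence information your argument provides is $\sum_{z\in S}|E_z|\le e|S|$, which grows with $|S|$. Nothing then prevents $|\bigcup_kE_{z_k}|$ from being about $(\ell+1)e$ for a typical tuple. So $g_z$ is forced to vanish at only $n-(\ell+2)e$ points, which exceeds $m$ only when $e<d/(\ell+2)$ (the $\Delta/3$-type regime for lines), not up to $d/2$.

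The missing idea is an error locator that depends on $Z$, with its $Z$-degree traded against the margin $d-2e$. The paper solves $A(P_i,Z)r_i(Z)=B(P_i,Z)$ for all $i$, with $A\in\mathcal L_K(H)[Z]$ of $Z$-degree at most $D_Z-1$ and $B\in\mathcal L_K(G+H)[Z]$ of $Z$-degree at most $D_Z+\ell-1$, where $\deg H=d-e+\mathfrak g-1$ and $(d-2e)D_Z>\ell e$. Only with both enlargements does the Riemann--Roch count guarantee a nonzero solution. Your $W\in\mathcal L(A)$ is independent of $Z$ and has $\deg A\approx e+\mathfrak g$, so the count guarantees nothing for $e\ge1$. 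Even with $\deg H$ and $D_Z=1$ one needs $e<d/(\ell+2)$. Moreover, a $Z$-free locator valid for all $z$ at once is essentially a common error locator for $[\mathbf u_0,\dots,\mathbf u_\ell]$, which is the conclusion itself.

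With the $Z$-dependent locator, specialization gives $B(\cdot,z)=A(\cdot,z)f_z$, so $i\in E_z$ forces $A(P_i,z)=0$. For $i\notin T:=\{i:A(P_i,Z)\equiv0\}$ this happens for at most $D_Z-1$ values of $z$. Hence $\sum_{z\in S}|E_z\setminus T|\le n(D_Z-1)$, a bound \emph{independent of $|S|$}. This is what produces $z_0,\dots,z_\ell$ with $\sum_h|E_{z_h}\setminus T|\le e$ once $|S|>(\ell+1)n(D_Z-1)/(e+1)$. With $D_Z=\lfloor\ell e/(d-2e)\rfloor+1$ it is the real source of the $\ell(\ell+1)n/((n-m-2e)q)$ term. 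Even then you cannot count zeros of $g_z$ directly, since errors inside $T$ are not controlled. Instead, the paper shows that every coefficient of $B-A\sum_jv_jZ^j$ vanishes at at least $n-e$ points, deduces $r_i(Z)=\sum_jv_j(P_i)Z^j$ for $i\notin T$, and then gets $f_z=\sum_jv_jz^j$ for every $z\in S$ from $A_j\bigl(f_z-\sum_hv_hz^h\bigr)=0$ for a nonzero coefficient $A_j$ of $A$.

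Two further problems. First, your uniqueness step incurs exactly the genus loss the theorem avoids. With only $n-e$ guaranteed agreements, a degree count gives $N(z)=Wf_z$ only if $\deg(A+mP_\infty)<n-e$ (not $<n$), i.e.\ $e<(d-\mathfrak g)/2$. The paper instead builds $H$ point by point over a finite extension $K$ (Lemmas~\ref{lem:finite-bad-point} and~\ref{lem:auxiliary-divisor}) so that $\mathcal L_K(G+H-D_I)=\{0\}$ for every $I$ with $|I|=n-e$, where $D_I=\sum_{i\in I}P_i$, even though these divisors have degree $\mathfrak g-1$. The same property is what guarantees $A\neq0$.

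Second, the bound in Step~2 does not follow: a coordinate lying in more than $\ell$ of the sets $E_z$ only yields $(\ell+1)|B|\le e|S|$. The slack comes \emph{after} the affine structure is known. A coordinate where $(u_{j,i})_j\neq(v_j(P_i))_j$ is an error for at least $a-\ell$ of any $a$ challenges in $S$, so the number $w$ of such coordinates satisfies $w(a-\ell)\le ae$. Taking $a=\ell+\lceil\ell/\varepsilon\rceil$ gives the $\ell/(\varepsilon q)$ term, and $a=\ell(e+1)+1$ gives $\ell(e+1)/q$ when $\varepsilon=0$. Your Conclusion is essentially this count, but $I$ should be the set where the coefficient tuples agree, not the complement of your $B$.
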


%need more illustrations
\begin{rmk}
\begin{enumerate}
    \item[(1)] Theorem~\ref{thm:intro-unique} achieves a radius approaching half the designed minimum distance, with no additional genus dependent loss. Specifically, for fixed $\rho<1$ and $0<\varepsilon<1$, whenever $e\le (1-\varepsilon)\frac{n-m}{2},$ the soundness error is bounded by $O\!\left(\frac{\ell(\ell+1)}{\varepsilon q}\right).$  For fixed curve degree $\ell$ and slack $\varepsilon$, this bound is independent of the block length $n$. Consequently, the soundness requirement alone does not force the field size to grow with $n$, allowing a prescribed soundness error over a fixed field whenever an asymptotically good family of AG codes is available.
\item[(2)] Without slack, for affine lines, if both {$e/n$} and $\frac{n-m-2e}{n}$ are fixed positive constants, the soundness bound becomes $\min\{1,(e+1)/q\}$ for all sufficiently large $n$. In particular, this gives a nontrivial guarantee in the regime $e+1<q<n$, extending the admissible block lengths beyond the restriction $n\le q$ of RS codes. The exact finite length bounds appear in Corollaries~\ref{cor:constant} and~\ref{cor:without-slack}.
%Since $e<n/2$, \LSSnote{this bound can still be nontrivial when $q<n$, beyond the RS restriction $n\le q$. The exact finite length bounds appear in Corollaries~\ref{cor:constant} and~\ref{cor:without-slack}.}

%Without slack, if both {$e/n$} and $\frac{n-m-2e}{n}$ are fixed positive constants, the soundness bound becomes $\min\{1,(e+1)/q\}$ for all sufficiently large $n$. Since $e<n/2$, {this bound can still be nontrivial when $q<n$, beyond the RS restriction $n\le q$. The exact finite length bounds appear in Corollaries~\ref{cor:constant} and~\ref{cor:without-slack}.}
\end{enumerate}
\end{rmk}

%The radius in Theorem~\ref{thm:intro-unique} has no additional genus loss relative to half the designed distance. For fixed $\rho<1$ and $0<\varepsilon<1$, if $e \le (1-\varepsilon)\frac{n-m}{2},$ then the theorem gives a soundness error of $O(\ell(\ell+1)/(\varepsilon q))$, independent of the block length $n$ for fixed $\ell$. Thus the field size needed for a prescribed soundness error need not grow with $n$. Without slack, for fixed $\ell$, if $e/n$ and $(n-m-2e)/n$ are fixed positive constants, the soundness bound becomes $\min\{1,\ell(e+1)/q\}$ for all sufficiently large $n$. For $\ell=1$, since $e<n/2$, this bound can still be nontrivial when $q<n$, beyond the standard RS restriction $n\le q$. The exact finite-length bounds appear in Corollaries~\ref{cor:constant} and~\ref{cor:without-slack}.}

%The radius in Theorem~\ref{thm:intro-unique} has no additional genus loss relative to half the designed distance. For fixed positive slack and a fixed positive margin $n-m-2e$, the theorem gives a soundness error of $O(1/q)$, independent of the block length $n$. Thus the field size needed for a prescribed soundness error need not grow with $n$. Without slack, if $e/n$ and $\frac{n-m-2e}{n}$ are fixed positive constants, the soundness bound becomes $\min\{1,(e+1)/q\}$ for all sufficiently large $n$. Since $e<n/2$, this bound can still be nontrivial when $q<n$, beyond the standard RS restriction $n\le q$. The exact finite-length bounds appear in Corollaries~\ref{cor:constant} and~\ref{cor:without-slack}.

% short introduction of GS tower.
We next instantiate Theorem~\ref{thm:intro-unique} with
asymptotically good AG codes over a fixed alphabet. The Garcia-Stichtenoth (GS) tower~\cite{GarciaStichtenoth95} provides an explicit sequence of curves $\mathcal X_i/\F_q$, for square $q$, whose genera $\mathfrak{g}_i$ tend to infinity and whose numbers of rational points $N_i$ satisfy $\lim_{i\to\infty}\frac{N_i}{\g_i}=\sqrt q-1$. Choose one rational point as $P_{\infty,i}$ and use the remaining $n_i=N_i-1$ rational points for evaluation. Then $ \lim_{i\to\infty}\frac{\mathfrak{g}_i}{n_i}=\frac{1}{\sqrt q-1}$. 

Fix a square prime power $q>9$ and choose $\frac{2}{\sqrt q-1}<\rho<1.$ Setting $m_i=\lfloor\rho n_i\rfloor$ ensures that
$2\g_i-1\le m_i<n_i$ for all sufficiently large $i$.
The resulting AG codes $\mathcal C_i$ satisfy
\[
    \lim_{i\to\infty}R_i
    =\rho-\frac{1}{\sqrt q-1}>0,\quad 
    \liminf_{i\to\infty}
    \frac{d(\mathcal C_i)}{n_i}
    \ge 1-\rho>0.
\]
%Thus, these codes form an asymptotically good family over the fixed field $\F_q$. 
Applying Theorem~\ref{thm:intro-unique} to this family yields the following corollary.% with a positive relative proximity radius and, for fixed $\ell$ and $\varepsilon$, a soundness bound uniform in the block length.

\begin{cor}
\label{cor:intro-fixed-alphabet}
Fix an integer $\ell\ge1$, $\rho\in(0,1)$, $0<\gamma<(1-\rho)/2$ and $0<\varepsilon<1$. Let $q$ be a fixed square prime power satisfying $\rho>\frac{2}{\sqrt q-1}$. There are one-point AG code families over $\mathbb F_q$ of unbounded length, with $m/n\longrightarrow\rho$ and $e=\lfloor\gamma n\rfloor$, that have $(\ell,e,\sigma_\varepsilon,\varepsilon)$-CA for soundness $\sigma_\varepsilon$ independent of $n$. Hence, for any prescribed constant soundness target $\sigma>0$, $q$ can be chosen depending on $\ell,\rho,\gamma,\varepsilon,\sigma$ but not on $n$.
\end{cor}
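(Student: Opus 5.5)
The corollary follows by instantiating Theorem~\ref{thm:intro-unique} on the Garcia--Stichtenoth family described just before the statement. Fix $\ell,\rho,\gamma,\varepsilon$ and a square $q$ with $\rho>2/(\sqrt q-1)$. Take the curves $\mathcal X_i$ of the tower with $N_i$ rational points and genus $\g_i$. Choose $P_{\infty,i}$ among the rational points, let $D_i$ be the sum of the remaining $n_i=N_i-1$ points, and set $m_i=\lfloor\rho n_i\rfloor$. Since $\g_i/n_i\to 1/(\sqrt q-1)<\rho/2$, we have $2\g_i-1\le m_i<n_i$ for all large $i$. So $\mathcal C_i=\mathcal C(D_i,m_iP_{\infty,i})$ is a valid one-point AG code in the setting of Theorem~\ref{thm:intro-unique}, with $m_i/n_i\to\rho$ and $n_i\to\infty$.

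Next I would check the radius hypothesis for $e_i=\lfloor\gamma n_i\rfloor$. We have
\[
n_i-m_i-2e_i\ \ge\ (1-\rho-2\gamma)n_i .
\]
Since $\gamma<(1-\rho)/2$, the constant $\delta:=1-\rho-2\gamma$ is positive, so $e_i<(n_i-m_i)/2$. We also have $e_i\ge1$ for large $i$.

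Then Theorem~\ref{thm:intro-unique} gives $(\ell,e_i,\sigma_\varepsilon,\varepsilon)$-CA with
\[
\sigma_\varepsilon=O\Big(\max\Big\{\tfrac{\ell}{\varepsilon q},\ \tfrac{\ell(\ell+1)n_i}{(n_i-m_i-2e_i)q}\Big\}\Big)\le O\Big(\max\Big\{\tfrac{\ell}{\varepsilon q},\ \tfrac{\ell(\ell+1)}{\delta q}\Big\}\Big).
\]
This bound depends only on $\ell,\varepsilon,\delta,q$ and not on $n_i$. CA with a smaller soundness parameter implies CA with any larger one, so we may take $\sigma_\varepsilon$ to be this $n$-independent quantity. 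The finitely many small $i$ can simply be discarded.

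For the last sentence of the statement, given a target $\sigma>0$, choose a square $q$ large enough that $\rho>2/(\sqrt q-1)$ and that the displayed bound is at most $\sigma$. This works because the bound is $C(\ell,\varepsilon,\delta)/q$, where the absolute constant comes from the $O(\cdot)$ in Theorem~\ref{thm:intro-unique}. Then $q$ depends only on $\ell,\rho,\gamma,\varepsilon,\sigma$.

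The only delicate point is making sure the implied constant in Theorem~\ref{thm:intro-unique} is absolute, i.e.\ independent of $n$ and $\g$. Otherwise the argument reduces to the verification of $2\g_i-1\le m_i$ via the limit of $\g_i/n_i$, and the uniform lower bound $\delta n_i$ on the margin $n_i-m_i-2e_i$. If the constant turned out to depend on the code, I would fall back on the explicit finite-length form in Corollary~\ref{cor:constant}.
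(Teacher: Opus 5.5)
Your argument is correct and is essentially the paper's own route (Remark~\ref{rmk:tower}): take the Garcia--Stichtenoth tower with $m=\lfloor\rho n\rfloor$ and $e=\lfloor\gamma n\rfloor$, get $m\ge2\g-1$ for large $n$ from $\g/n\to1/(\sqrt q-1)<\rho/2$, use the uniform margin $n-m-2e\ge(1-\rho-2\gamma)n$, and apply the explicit bound of Corollary~\ref{cor:constant}. Your concern about the implied constant does not arise there, since that corollary gives directly $\sigma_\varepsilon\le\max\{\ell(\ell+1)/(1-\rho-2\gamma),\,2\ell/\varepsilon\}/q$, which is independent of $n$ and $\g$.
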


The following theorem extends the Johnson range MCA for RS codes to general one-point AG codes. Compared with Theorem~\ref{thm:intro-unique}, it reaches a larger radius at the cost of a soundness error bound that depends on the block length.

\begin{thm}[MCA up to Johnson bound; see Theorem~\ref{thm:ag-johnson}]
\label{thm:intro-johnson}
For fixed $\rho\in (0,1)$, let $0<\eta<1-\sqrt\rho$. For every integer $0\le e\le (1-\sqrt\rho-\eta)n,$ $\mathcal C$ has $(\ell,e,\sigma)$-MCA with $\sigma=O\!\left(\frac{\ell n}{\eta^3q}\right).$
\end{thm}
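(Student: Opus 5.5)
We will follow the Guruswami--Sudan/BCIKS strategy for the Johnson regime, lifted to the curve $\mathcal X$ and with the challenge $Z$ treated as a formal variable. Work over $\mathbb F_q(Z)$, and regard $\mathbf r(Z)=\sum_j \mathbf u_j Z^j$ as a received word over the rational function field. Choose multiplicity $s$ and list parameter $D_Y$ with $s\approx 1/\eta$. Construct a nonzero interpolation polynomial
\[
Q(Y)=\sum_{k=0}^{D_Y} a_k(Z)\,Y^k,\qquad a_k\in\mathcal L\big((A-km)P_\infty\big)\otimes\mathbb F_q[Z],
\]
vanishing to order $s$ at every point $(P_i,\mathbf r(Z)_i)$. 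The $Z$-degree of the $a_k$ is bounded by $O(\ell s D_Y)$. A Riemann--Roch dimension count, with the $-\g$ terms absorbed because $m\ge 2\g-1$, shows that such a $Q$ exists whenever $A\approx\sqrt{\rho}\,n\,s$ up to an $\eta$-fraction. Specializing $Z=z$ gives a GS interpolant for $\mathbf r(z)$. Any $f\in\mathcal L(G)$ agreeing with $\mathbf r(z)$ on at least $n-e\ge(\sqrt\rho+\eta)n$ points satisfies $Q(z,f)\in\mathcal L(AP_\infty)$ with more than $A/s$ zeros counted with multiplicity, so $Q(z,f)=0$. The degenerate case where $Q(z,\cdot)$ vanishes identically happens for at most $\deg_Z$ many $z$, and we put those $z$ into the exceptional set.

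Next we pass to a root over the function field. Bad challenges $z$ (those with a close codeword) are counted with multiplicity along with their agreement sets. Each such $z$ gives a root $f_z$ of $Q(z,Y)$ in the function field $\mathbb F_q(\mathcal X)$. By Hensel lifting and the local expansion at $P_\infty$, or at a chosen rational place, the roots of $Q$ over $\mathbb F_q(Z)\otimes\mathbb F_q(\mathcal X)$ are power series in a local parameter with coefficients algebraic over $\mathbb F_q(Z)$. If more than $D_Y\cdot\deg_Z(\text{discriminant/leading data})=O(\ell n/\eta^3)$ challenges $z$ are bad, then by pigeonhole one irreducible factor $Y-F(Z)$ of $Q$ has degree one in $Y$. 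The specializations of that factor match $f_z$ for many $z$, and we get $F(Z)$ in $\mathbb F_q(Z)$-coefficients with $F(z)=f_z$. This is the step where we must overcome the obstacle flagged in the introduction: the expansion of $f_z$ at one point is infinite. The plan is to compare only the first $m+1$ local coefficients, which determine an element of $\mathcal L(G)$ uniquely since $\ell(G-(m+1)P)=0$. Then we show that these finitely many coefficients are polynomials in $z$ of degree at most $\ell$. This follows because they are rational functions of bounded degree agreeing at more than degree-many points with values that are affine images of codewords. Once this holds, $F(Z)=\sum_j v_jZ^j$ with $v_j\in\mathcal L(G)$, recovered by inverting the evaluation of the truncated expansions.

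Finally, we derive MCA. For each bad $z$ outside the exceptional set and any agreement set $S$ of size at least $n-e$ for $\mathbf r(z)$, the uniqueness argument above forces the codeword on $S$ to equal $\sum_j v_j(D)z^j$. For each coordinate $i\in S$, the polynomial $\sum_j (v_j(P_i)-\mathbf u_{j,i})Z^j$ of degree at most $\ell$ vanishes at $z$. We enlarge the exceptional set by all $z$ that are roots of some such nonzero polynomial, for at most $\ell n$ extra values. Then for every remaining bad $z$, $v_j(D)|_S=\mathbf u_j|_S$ for all $j$. The total exceptional count is $O(\ell n/\eta^3)$, giving $\sigma=O(\ell n/(\eta^3 q))$. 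The main obstacle is the second step: making the Hensel/power-series argument and the degree bookkeeping in $Z$ work on the curve, so that the truncated expansions are genuinely polynomial in $z$ with the stated $\eta^{-3}$ dependence.
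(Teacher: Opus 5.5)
\textbf{There is a genuine gap in the middle step.} Your interpolation step agrees with the paper's: a single $Q(Y,Z)$ with $\deg_Z Q=O(\ell sD_Y)$ and $Q(f_z,z)=0$ for every close $f_z$. Your closing device, discarding the roots of the nonzero polynomials $\sum_j(v_j(P_i)-u_{j,i})Z^j$, would be a legitimate substitute for the paper's $\ell(e+1)$ count. But it applies only on a group of parameters for which $f_z=\sum_j v_jz^j$ has already been established, and that middle step is where the proof lives.

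\begin{itemize}
\item Pigeonhole only gives an irreducible factor to which many bad $z$ are attached. That factor can have arbitrary $Y$-degree, possibly with $\partial_Y=0$ in characteristic $p$. That it is $Y-F(Z)$ with $F$ polynomial of degree at most $\ell$ in $Z$ is the statement to be proved, not a consequence of counting.
\item Your argument that the first $m+1$ local coefficients of $f_z$ are polynomials of degree at most $\ell$ in $z$, ``because they agree at more than degree-many points with affine images of codewords'', is circular. Nothing is known a priori about $z\mapsto[t^j]f_z$. The only data are $f_z(P_i)=r_i(z)$ for $i\in A_z$, and the sets $A_z$ vary with $z$.
\item The final appeal to ``the uniqueness argument'' fails in the Johnson regime. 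A bad $z$ can have several close codewords attached to different factors. You must fix one MCA-bad witness per $z$ and bound \emph{every} group, not exhibit a single $F$.
\end{itemize}

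\textbf{What is missing} is a per-coordinate dichotomy inside each group, followed by double counting. The paper supplies it in several steps.
\begin{itemize}
\item It specializes the factors at a generic auxiliary point $P_0$, chosen so that degrees are preserved and $\mathrm{Res}_Y(R_{\nu,P_0},\partial_YR_{\nu,P_0})\ne0$. This excludes only $(2D_Y+1)D_Z$ parameters.
\item It writes each factor as $R_\nu(Y^{\mu_\nu},Z)$ with $\partial_YR_\nu\neq0$ and Hensel-lifts $f_z^{\mu_\nu}$. This gives $[t^j]f_z^{\mu_\nu}=\alpha_{\nu,j}(y_z,z)$, with denominators that are powers of $\partial_YR_{\nu,P_0}$ and explicit degree bounds.
\item It uses the injectivity you mention in its \emph{linear} form, $h(P_i)=\sum_{j\le a}\lambda_{i,j}[t^j]h$ on $\mathcal L_{\overline{\mathbb F}_q}(aP_\infty)$. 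This turns agreement at coordinate $i$ into a polynomial condition $F_i(y_z,z)=0$.
\item For each irreducible factor $H$ of $R_{\nu,P_0}$, either $H\mid F_i$ and every parameter of the group agrees at $i$, or $\mathrm{Res}_Y(H,F_i)\neq0$ and at most $B_{\nu,H}$ of them do.
\item Counting disagreements shows that a group of size greater than $\frac{n-m}{n-m-e}B_{\nu,H}$ has more than $m$ common agreement coordinates. Only then do Lagrange interpolation and the zero bound give $f_z=\sum_jv_jz^j$ on that group.
\end{itemize}

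This is also where the factor $n$ in $O(\ell n/\eta^3)$ comes from. The degrees of $F_i$ scale with $m$, giving $\sum_{\nu,H}B_{\nu,H}\le 4mD_YD_Z+\ell D_Y$. The discriminant and leading-coefficient data you cite cost only $(2D_Y+1)D_Z$, independent of $n$.
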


%Moreover, the soundness order matches the current RS bound of Dao et al.~\cite[Theorem~5.12]{DKT26}.

\paragraph{Limitations beyond half the minimum distance.}
The block length independent $O(1/q)$ soundness error for CA below half the designed distance cannot, in general, be extended to the Johnson regime, even when positive multiplicative slack is allowed. 

%Section~\ref{sec:counterexample} demonstrates this limitation for affine lines on codes of positive rate from the Garcia--Stichtenoth tower. 
\begin{thm}[Corollary~\ref{cor:counterexample}]
\label{thm:intro-trace-asymptotic}
Fix a prime $p$, a slack $0\le\varepsilon<1$, and $\zeta>0$.
For every sufficiently large integer $a$, there exists $h_0$
such that, for every integer $h\ge h_0$, the
Garcia-Stichtenoth tower over $\mathbb F_q$, with
$q=p^{2ah}$, yields a family of one-point AG codes
$\mathcal C_j$ of positive asymptotic rate, with block length $n_j\to\infty$. For all sufficiently large $j$, there exists an integer $e_j\ge0$ such that 
\[
0<
\frac{e_j}{n_j}-\frac{d(\mathcal C_j)}{2n_j}
<\zeta,\quad \frac{e_j}{n_j}<1-\sqrt{\rho_j}.
\]
Nevertheless, if $\mathcal C_j$ has
$(1,e_j,\sigma,\varepsilon)$-CA, then $\sigma\ge\frac{q-1}{(p^a-1)q}\ge\frac{1}{2(p^a-1)}.$
\end{thm}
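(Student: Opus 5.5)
The plan is to build an explicit bad pair $(\mathbf u_0,\mathbf u_1)$ on the Garcia--Stichtenoth tower over $\mathbb F_q$, $q=p^{2ah}$, using the subgroup structure of $\mathbb F_q^*$ together with the subfield $\mathbb F_{p^a}\subseteq\mathbb F_q$. The target bound $\frac{q-1}{(p^a-1)q}$ suggests the bad challenge set: a multiplicative subgroup $H\le\mathbb F_q^*$ of index $p^a-1$, so that $|H|=(q-1)/(p^a-1)$. The goal is a line $\mathbf r(z)=\mathbf u_0+z\mathbf u_1$ with the following two properties.
\begin{enumerate}
\item[(i)] For every $z\in H$ the word $\mathbf r(z)$ lies within distance $e_j$ of $\mathcal C_j$, where $e_j$ sits slightly above $d(\mathcal C_j)/2$ and below the Johnson radius.
\item[(ii)] $(\mathbf u_0,\mathbf u_1)$ admits no common agreement set of size $n_j-(1+\varepsilon)e_j$.
\end{enumerate}
If both hold, then whenever $\sigma q<|H|$ the CA hypothesis is met while its conclusion fails. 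This forces $\sigma\ge |H|/q=\frac{q-1}{(p^a-1)q}$, and the second inequality $\frac{q-1}{(p^a-1)q}\ge\frac{1}{2(p^a-1)}$ is trivial.

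\textbf{Construction.} I would work with a coordinate function $x$ from the first level of the tower, pulled back to level $j$. Its values at the rational points lie in $\mathbb F_q$, and its fibres over each value have nearly equal size; this equidistribution is standard from the splitting behaviour in the tower. Using $x$, I would partition the evaluation points $P_1,\dots,P_{n_j}$ into the $p^a-1$ classes $\{P: x(P)\in cH\}$, $c\in\mathbb F_q^*/H$, plus the fibre $x=0$. I would take $\mathbf u_1=\psi(D)$ and $\mathbf u_0=\phi(D)$, where $\phi,\psi$ are functions in a slightly larger space $\mathcal L((m_j+t)P_\infty)$ that lie outside $\mathcal L(m_jP_\infty)$. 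They are chosen so that, for each $z\in H$, the function $\phi+z\psi$ agrees with some $f_z\in\mathcal L(m_jP_\infty)$ on all points outside one class. Concretely, I would try $\phi=x^{k}$ with $k$ a multiple of $M=(q-1)/(p^a-1)$, so that $x^M$ takes values in $\mu_{p^a-1}\subseteq\mathbb F_{p^a}^*$ and is constant on each class. The $\mathbb F_{p^a}$-linearity of the trace $\mathrm{Tr}_{\mathbb F_q/\mathbb F_{p^a}}$ then makes the cancellation pattern depend only on the coset of $z$. Each $\mathbf r(z)$ is then at distance roughly $n_j/(p^a-1)$ from the code, and by choosing $m_j$ (hence $\rho_j$) suitably this distance can be placed just above $d/2$. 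Letting $a$ be large makes the relative gap below $\zeta$, and taking $h\ge h_0$ (so $q$ is large) keeps $\rho_j$ positive and the radius under $1-\sqrt{\rho_j}$.

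\textbf{Failure of CA.} If codewords $\mathbf c_0,\mathbf c_1$ agreed with $\mathbf u_0,\mathbf u_1$ on a common set $I$ of size at least $n_j-(1+\varepsilon)e_j$, then $\phi-c_0$ would vanish on $I$. But $\phi-c_0$ is a nonzero function of pole order at most $m_j+t$, so it has at most $m_j+t$ zeros. Choosing $t$ small relative to the slack gap, so that $m_j+t<n_j-(1+\varepsilon)e_j$, gives a contradiction. Since $\varepsilon<1$, this requires $e_j$ to be at most roughly $(n_j-m_j)/(1+\varepsilon)$, which is compatible with $e_j\approx d/2$.

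\textbf{Main obstacle.} The main obstacle is realizing (i) on an AG curve rather than a line: I need genuine functions $f_z\in\mathcal L(m_jP_\infty)$ interpolating $\phi+z\psi$ off one class. In the RS case this is a polynomial identity such as $x^{k}-z^{\ast}=\prod(x-\beta)$. Here I would obtain $f_z$ from the factorization of $x^M-c$ over cosets, which is a polynomial identity in $x$ alone. That requires the pole order $M\cdot\deg(x)_\infty$ to stay at most $m_j$, and I need to check that this is consistent with the tower's parameters. This degree bookkeeping, together with the fibre-size estimates for $x$ on the tower, is where I expect the real work to lie.
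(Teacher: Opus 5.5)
There is a genuine gap, and it is not the degree bookkeeping you anticipate: the construction cannot produce $(q-1)/(p^a-1)$ bad challenges. Suppose $z\neq z'$ and $\mathbf r(z),\mathbf r(z')$ agree with codewords $\mathbf c,\mathbf c'$ on a common set $A$ with $|A|\ge n-(1+\varepsilon)e$. Then $\mathbf u_1|_A=\bigl((\mathbf c-\mathbf c')/(z-z')\bigr)|_A$ and $\mathbf u_0|_A=(\mathbf c-z\mathbf u_1)|_A$, which is exactly the correlated agreement that (ii) forbids. So distinct bad challenges need essentially distinct agreement sets.

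Your agreement sets are complements of single classes $\{P:x(P)\in cH\}$, and there are only $p^a-1$ of these. Indeed, if the cancellation pattern depends only on the coset of $z$ while $z$ ranges over the single coset $H$, then all challenges share one agreement set and (ii) fails outright. Concretely, take $H=\{y:y^M=1\}$. The polynomial vanishing on $\mathbb F_q^*\setminus cH$ is $(Y^{q-1}-1)/(Y^M-w)=\sum_{i=0}^{p^a-2}w^{p^a-2-i}Y^{Mi}$, where $w=c^M$ is a $(p^a-1)$-th root of unity. The ratio of its two leading coefficients therefore takes only $p^a-1$ values, so you certify only $\sigma\ge(p^a-1)/q$, which tends to $0$ as $h\to\infty$.

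What is missing is a family of $(q-1)/(p^a-1)$ distinct root sets whose polynomials share two fixed leading monomials. The challenge should appear as the ratio of their coefficients, with all other terms of degree at most $m/b$. The paper obtains this from the $\mathbb F_{p^a}$-hyperplanes $\ker\mathrm{Tr}_{\mathbb F_q/\mathbb F_{p^a}}(\beta\,\cdot)$, one per $\beta\in\mathbb F_q^*/\mathbb F_{p^a}^*$. With $T(Y)=\sum_{i=0}^{2h-1}Y^{p^{ai}}$, the polynomial $T(\beta Y)^{p^a-1}-1$ has these properties:
\begin{enumerate}
\item it vanishes exactly at the $(1-1/p^a)q$ elements $c$ with $\mathrm{Tr}(\beta c)\neq0$;
\item its top two terms are $\beta^{(1-1/p^a)q}Y^{(1-1/p^a)q}$ and $-\beta^{(1-2/p^a+1/p^{2a})q}Y^{(1-2/p^a+1/p^{2a})q}$;
\item the remainder has degree at most $m/b$ for $m=(1-2/p^a+1/p^{3a})qb$.
\end{enumerate}
Set $x=x_0$, $\mathbf u_0=x^{(1-2/p^a+1/p^{2a})q}(D)$ and $\mathbf u_1=x^{(1-1/p^a)q}(D)$. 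The challenges $z=-\beta^{q/p^a-q/p^{2a}}$ fill $-(\mathbb F_q^*)^{p^a-1}$ because $\gcd(q/p^a-q/p^{2a},q-1)=p^a-1$. The exact fibre size $b$ over every nonzero value of $x_0$ (not mere near-equidistribution) then gives $d(\mathbf r(z),\mathcal C)=e=n-(1-1/p^a)qb$.

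Your CA-failure step is also inconsistent as written. If $\phi,\psi\in\mathcal L((m_j+t)P_\infty)$, then (i) makes the nonzero function $\phi+z\psi-f_z$, of pole order at most $m_j+t$, vanish at $n_j-e_j$ points. Hence $m_j+t\ge n_j-e_j$, so $t\ge n_j-m_j-e_j\approx(n_j-m_j)/2$ is not small, and $m_j+t<n_j-(1+\varepsilon)e_j$ cannot hold.

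The two directions need very different pole orders. One must have pole order at least $n-e$. Only the other, with pole order below $n-(1+\varepsilon)e$, can witness the failure of correlated agreement. In the paper the difference is $(1/p^a-1/p^{2a})qb\ge(1-1/p^a)e$, which gives $d([\mathbf u_0,\mathbf u_1],\mathcal C^2)\ge d(\mathbf u_0,\mathcal C)\ge(2-1/p^a)e$. This is what restricts the slack to $\varepsilon<1-1/p^a$ and forces $a$ to be large for the given $\varepsilon$.

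The same choice of $a$ places the gap $e/n-d(\mathcal C)/(2n)=(n-qb+qb/p^{3a})/(2n)$ in $(0,1/(2p^{3a})]$. This requires the exact value $d(\mathcal C)=n-m$, which is attained by $\prod_{c\in A}(x-c)$ with $A\subseteq\mathbb F_q^*$ and $|A|=m/b$. Finally, $h\ge h_0$ guarantees $m\ge2\mathfrak{g}-1$ and positive rate.
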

Fix an admissible $a$ and let $h\to\infty$. Then
$q=p^{2ah}\to\infty$, while the required soundness $\sigma\ge 1/(2(p^a-1))>0$. Thus these counterexamples rule out a uniform $O(1/q)$ soundness guarantee throughout the Johnson regime.

\begin{thm}\label{thm:intro-trace}
Let $p$ be a prime, let $a\ge1$ and $h\ge2$ be integers,
and set $s=p^a$ and $q=s^{2h}$.
At any level of the Garcia-Stichtenoth tower, with parameters
$n,b$ as defined in Section~\ref{sec:GS-tower}, set $m=\left(1-\frac{2}{s}+\frac{1}{s^3}\right)qb$, $e=n-\left(1-\frac{1}{s}\right)qb,$ and let $\mathcal C=\mathcal C_{\mathcal L}(D,mP_\infty)$ and $\rho=m/n$. The radius $e$ satisfies 
\[
0<
\frac en-\frac{d(\mathcal C)}{2n}
\le\frac{1}{2s^3},\quad \frac en<1-\sqrt{\rho}.
\]
Nevertheless, there exists an affine line
$\mathbf r(z)=\mathbf u_0+z\mathbf u_1$ such that $d(\mathbf r(z),\mathcal C)=e$ for at least $\frac{q-1}{s-1}$ distinct $z\in\mathbb F_q$, while $d([\mathbf u_0,\mathbf u_1],\mathcal C^2)\ge\left(2-\frac1s\right)e.$ Consequently, for every $0\le\varepsilon<1-1/s$,
any $(1,e,\sigma,\varepsilon)$-CA guarantee requires $\sigma
\ge\frac{1}{2(s-1)}.$
\end{thm}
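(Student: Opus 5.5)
The plan is to build the bad line explicitly from functions of the base coordinate of the tower. Write $x$ for the first coordinate function of the Garcia--Stichtenoth level, normalized as in Section~\ref{sec:GS-tower}. I will use that $P_\infty$ is totally ramified over $x=\infty$, so a polynomial $g(x)$ of degree $t$ lies in $\mathcal L(tbP_\infty)$. I will also use that the $n$ evaluation points lie over $x$-values in $\F_q$, essentially in fibres of size $b$ over all but $O(\sqrt q)$ values. Assuming this, membership in $\mathcal C$ for pulled-back polynomials reduces to $\deg g\le m/b=(1-\frac2s+\frac1{s^3})q$, and the whole construction happens on the line $\F_q$.

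The algebraic input is the subfield $\F_s\subseteq\F_q$ and the trace $T(y)=\mathrm{Tr}_{\F_q/\F_s}(y)=\sum_{i<2h}y^{s^i}$, which has degree $q/s$. For $z\in\F_q^*$ the kernel $H_z=\{x:T(zx)=0\}$ is an $\F_s$-hyperplane, and $H_z=H_{z'}$ exactly when $z'\in\F_s^*z$. This gives $(q-1)/(s-1)$ distinct hyperplanes. I would define $\mathbf u_0,\mathbf u_1$ as evaluations of explicit (generally high-degree) functions $U_0(x),U_1(x)$, reduced modulo $x^q-x$, arranged so that for one representative $z$ of each class in $\F_q^*/\F_s^*$:
\begin{enumerate}
\item[(i)] $U_0+zU_1$ coincides on $\F_q\setminus H_z$ with a polynomial $g_z$ of degree at most $m/b$;
\item[(ii)] $U_0+zU_1$ differs from $g_z$ at every point of $H_z$.
\end{enumerate}
The natural candidates come from the identity $T(zx)^{s-1}-1=-\mathbf 1_{H_z}(x)$ on $\F_q$, which has degree $(1-\frac1s)q$, together with the $\F_s$-linearity of $T$ in $z$. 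One then expands $T(zx)^{s-1}$ multinomially and separates the part linear in $z$ from the part of low $x$-degree. The target degree $(1-\frac2s+\frac1{s^3})q$ suggests that the low-degree part kills all but a window of monomials of size about $q/s-q/s^3$. Items (i) and (ii) together give agreement exactly on the fibres over $\F_q\setminus H_z$. These contain $(1-\frac1s)qb$ evaluation points, so $d(\mathbf r(z),\mathcal C)\le e$. Equality, rather than merely $\le$, will follow from unique-decoding-type reasoning or a direct count, since $e$ is only slightly above $d/2$.

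Next I bound the correlated distance. Suppose codewords $\mathbf c_0,\mathbf c_1$ agree with $\mathbf u_0,\mathbf u_1$ on a common set $I$. Then for every $z$ in our family, $\mathbf c_0+z\mathbf c_1$ agrees with $\mathbf r(z)$ on $I$, and hence with $g_z(D)$ on $I\setminus H_z$-fibres. Taking two non-proportional values $z\neq z'$ with $|H_z\cap H_{z'}|=q/s^2$, the set $I$ must avoid the points over $H_z\cup H_{z'}$ on which the two descriptions are incompatible. This should force $|[n]\setminus I|\ge |H_z\cup H_{z'}|\,b\approx(2/s-1/s^2)qb=(2-\frac1s)e$, up to the boundary fibres, which I would absorb in the definition of $n$.

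Finally, for $\varepsilon<1-1/s$ we have $(1+\varepsilon)e<(2-\frac1s)e$, so CA fails. The hypothesis "more than $\sigma q$ good $z$" must therefore be false, which gives $\sigma\ge\frac{q-1}{(s-1)q}\ge\frac1{2(s-1)}$. The radius inequalities are arithmetic. We have $d=n-m$ and $e-d/2=(m-(1-\frac1s)qb)+\frac{n-(1-\frac1s)qb}{2}$, to be simplified using $n\approx qb$ to get a value in $(0,\frac{1}{2s^3}n]$; the bound $e/n<1-\sqrt{m/n}$ is checked directly. The main obstacle is step (i)--(ii): designing $U_0,U_1$ so that a single linear pencil realizes all $(q-1)/(s-1)$ hyperplane supports with the sharp degree bound $(1-\frac2s+\frac1{s^3})q$. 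I would first try the RS counterexample template of~\cite{DEEP20}, via subspace polynomials, and transplant it verbatim through the pullback by $x$.
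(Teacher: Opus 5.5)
\textbf{There is a genuine gap: the construction of the line is missing, and both distance lower bounds rest on a uniqueness argument that fails above $d(\mathcal C)/2$.}

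You have the right raw ingredients: the pullback along $x$, and the identity $T(\beta c)^{s-1}-1=-\mathbf 1_{\{T(\beta c)=0\}}$ on $\mathbb F_q$. But the step you flag as the main obstacle is exactly the missing idea, and your framing points the wrong way. The line parameter is not the hyperplane normal $\beta$, and $T(\beta x)^{s-1}$ has no useful part linear in $\beta$. Instead, keep only the two leading monomials of $T(\beta Y)^{s-1}-1$:
\begin{itemize}
\item the top one is $\beta^{(1-1/s)q}Y^{(1-1/s)q}$;
\item the next is $(s-1)\beta^{(1-2/s+1/s^2)q}Y^{(1-2/s+1/s^2)q}$, with $s-1=-1$ in $\mathbb F_q$;
\item every other monomial has degree at most $(1-2/s+1/s^3)q=m/b$, because it either uses $Y^{q/s^3}$ or at most $s-3$ copies of $Y^{q/s}$. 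This is where the value of $m$ comes from.
\end{itemize}
So take $\mathbf u_0=x^{(1-2/s+1/s^2)q}(D)$ and $\mathbf u_1=x^{(1-1/s)q}(D)$. Let the line parameter be the coefficient ratio $z=-\beta^{q/s-q/s^2}$, and let $f_z=\beta^{-(1-2/s+1/s^2)q}H_\beta(x)\in\mathcal L(G)$, where $H_\beta$ is the low-degree remainder. Then $\mathbf r(z)-f_z(D)$ is a nonzero multiple of $(T(\beta x)^{s-1}-1)(D)$. This vanishes exactly at the $(1-1/s)qb$ points lying over the values $c$ with $T(\beta c)\neq 0$. All such $c$ are nonzero, and Lemma~\ref{lem:GS-tower} gives exactly $b$ points over every nonzero value, so there are no exceptional fibres to absorb. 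Since $\gcd(q/s-q/s^2,q-1)=s-1$, the map $\beta\mapsto z$ is constant exactly on the classes $\beta\mathbb F_s^*$. Hence your $(q-1)/(s-1)$ hyperplanes become $(q-1)/(s-1)$ distinct line parameters.

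Your two distance lower bounds rely on uniqueness, which is unavailable because $e>d(\mathcal C)/2$. In the correlated-distance step, to conclude that $I$ misses the fibres over $H_z$ you would need $\mathbf c_0+z\mathbf c_1=g_z(D)$. Deducing this from agreement on $I$ off those fibres requires more than $m$ such coordinates. You only have $|I|-e\ge n-(2+\varepsilon)e$, and $n-2e<m$ is equivalent to $e>(n-m)/2$. So the step fails for every $\varepsilon\ge 0$. The same objection applies to your ``unique-decoding-type'' argument for $d(\mathbf r(z),\mathcal C)\ge e$.

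Both bounds instead come from pole orders of the explicit monomials. For $f\in\mathcal L(G)$:
\begin{itemize}
\item $x^{(1-2/s+1/s^2)q}+zx^{(1-1/s)q}-f$ has pole divisor exactly $(1-1/s)qbP_\infty$, so it has at most $(1-1/s)qb$ zeros and $d(\mathbf r(z),\mathcal C)\ge e$;
\item $x^{(1-2/s+1/s^2)q}-f$ has pole order $(1-2/s+1/s^2)qb>m$, so $d([\mathbf u_0,\mathbf u_1],\mathcal C^2)\ge d(\mathbf u_0,\mathcal C)\ge n-(1-2/s+1/s^2)qb=e+(1/s-1/s^2)qb\ge(2-1/s)e$, using $e\le qb/s$.
\end{itemize}
Your count of points over $H_z\cup H_{z'}$ happens to equal $n-(1-2/s+1/s^2)qb$ exactly, but the bound has to be obtained through the pole order of $\mathbf u_0$, not through the hyperplanes.

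Finally, your expression for $e-d/2$ is incorrect (it is negative), and you also need $d(\mathcal C)=n-m$ exactly. The designed distance gives $d(\mathcal C)\ge n-m$. The matching upper bound comes from a codeword such as $\prod_{c\in A}(x-c)$ with $A\subseteq\mathbb F_q^*$ and $|A|=m/b$, which vanishes at exactly $m$ points. Then $e-d(\mathcal C)/2=(n-qb+qb/s^3)/2$. This is positive since $n>(q-1)b$ and $q/s^3\ge s$, and it is at most $n/(2s^3)$ since $n\le qb$.
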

Taking $a$ large makes the relative gap above
$\frac{d(\mathcal C)}{2n}$ arbitrarily small and
accommodates any prescribed slack $\varepsilon<1$.
Once $p,a$ are fixed, increasing $h$ makes $q$ arbitrarily
large, while the soundness lower bound remains a positive
constant. Thus the $O(1/q)$ guarantee below half the designed
distance cannot extend throughout the Johnson
regime, even with positive slack. This does not contradict
our Johnson-radius bound, which depends on $n$.

%The first construction shows that $(1,e,\sigma,\varepsilon)$-CA requires $\sigma\ge c>0$, where $c$ is independent of $q$ and $n$ but may depend on the chosen gap and slack. The relative radius can exceed half the relative minimum distance by an arbitrarily small positive amount while remaining below the Johnson radius (). A second construction keeps the field fixed while this gap tends to zero and requires $\sigma=1$ for $(1,e,\sigma,\varepsilon)$-CA (Corollary~\ref{cor:GS-fixed-field}). Thus the $O(1/q)$ guarantee does not extend throughout the Johnson regime for these codes.

\paragraph{Comparisons.} Table~\ref{tab:rs-ag-comparison} compares our results for one-point AG codes with the corresponding bounds for RS codes, using the common parameter $\rho=m/n$. Our results preserve the displayed error radii and the soundness bounds for zero slack CA and MCA. For CA with multiplicative slack, our upper bound incurs an additional factor of $\ell+1$, which is constant for fixed curve degree. Note that RS codes require $n\le q$. In contrast, for fixed $\ell$ and positive slack, suitable AG code families allow unbounded $n$ over a fixed field while maintaining a fixed soundness bound. 
\begin{table}[t]
\centering
\setlength{\tabcolsep}{6pt}
\renewcommand{\arraystretch}{1.15}
\begin{tabularx}{\textwidth}
  {@{}l*{2}{>{\centering\arraybackslash}X}@{}}
\toprule
& \textbf{RS} & \textbf{AG (this work)} \\
\midrule
$(\ell,e,\sigma,\varepsilon)$-CA
& \cite{BCHKS25}
& Theorem~\ref{thm:intro-unique} \\

Error radius
& $e<(n-m)/2$
& $e<(n-m)/2$ \\

Soundness with slack
& $O(\ell/(\varepsilon q))$
& $O(\ell(\ell+1)/(\varepsilon q))$ \\

Soundness without slack
& $\ell(e+1)/q$
& $\ell(e+1)/q$ \\

\midrule
$(\ell,e,\sigma)$-MCA
& \cite{DKT26}
& Theorem~\ref{thm:intro-johnson} \\

Error radius
& $e\le (1-\sqrt{\rho}-\eta)n$
& $e\le (1-\sqrt{\rho}-\eta)n$ \\

Soundness
& $O(\ell n/(\eta^3q))$
& $O(\ell n/(\eta^3q))$ \\
\bottomrule
\end{tabularx}
\caption{
The comparison is for polynomial curves of degree at most $\ell$, with fixed $\rho\in(0,1)$. Affine lines correspond to $\ell=1$. The implicit constants may depend on $\rho$, but not on $\ell$. In the CA row with slack, we additionally assume $e\le(1-\varepsilon)(n-m)/2$ and the RS bound follows from the proof in~\cite[Section~4.1]{BCHKS25}. In the CA row without slack, the displayed asymptotic bound applies when $\ell$ is fixed, $e/n$ and $(n-m-2e)/n$ are fixed positive constants, and $n$ is sufficiently large. In the MCA row, the RS bound follows from~\cite[Sections~5.7 and~7.2]{DKT26}.
%All soundness bounds are understood to be capped at $1$.
}
\label{tab:rs-ag-comparison}
\end{table}

%\begin{proof}
%Use an optimal tower, take all but one of its rational points as evaluation points, and take $m=\lfloor\rho n\rfloor$. The first inequality in~\eqref{eq:intro-fixed-q} ensures $m\ge2\g-1$ for all sufficiently large $n$. The second ensures a positive limiting value of $\kappa$. Apply Theorem~\ref{thm:intro-unique}. The displayed bound follows from $\g/n=\beta+o(1)$ and rounding. As square prime powers $q$ grow, $\beta$ tends to zero and the expression in parentheses remains bounded in terms of the fixed parameters, so any fixed $\sigma_0$ can be attained.
%\end{proof}

%Constant-alphabet proximity gaps are known for other constructions~\cite{goyal2026optimal}. Corollary~\ref{cor:intro-fixed-alphabet} obtains this tradeoff within ordinary one-point evaluation codes, without folding, concatenation, or extension-field challenges. A fixed soundness target and fixed slack are compatible with arbitrarily long codes in this AG family; for fixed $q$, the guarantee is a fixed soundness bound, rather than a bound tending to zero with $n$.

\subsection{Technique overview}

\subsubsection{Correlated agreement up to half the designed distance.}\label{subsec:intro-unique-technique}

Fix a polynomial curve $\mathbf{r}(z)=\sum_{j=0}^{\ell}\mathbf{u}_jz^j$ and an integer $0\le e<d/2$. Let $S$ be the set of challenges $z\in\F_q$ for which $\mathbf{r}(z)$ is within distance $e$ of the code. For each $z\in S$, let $f_z\in \mathcal{L}(G)$ be the unique function whose evaluation lies within distance $e$ of $\mathbf{r}(z)$. For sufficiently large $|S|$, we prove that $f_z=\sum_{j=0}^{\ell}v_jz^j$ for every $z\in S$, with $v_0,\ldots,v_\ell\in\mathcal L(G)$ independent of $z$. Then a coordinatewise counting argument yields correlated agreement. 

We first recall the proof strategy for RS codes and identify the main obstacles to extending it to general one-point AG codes.

%Our goal is to show that, when $|S|$ is sufficiently large, these functions have the common form $f_z=\sum_{j=0}^{\ell}v_jz^j$ for every $z\in S$, where $v_0,\ldots,v_\ell\in \mathcal{L}(G).$ Once this is established, a counting argument gives correlated agreement. The main step lie in the establishment of $\sum_{j=0}^{\ell}v_jz^j$. In the following we first recall recall the argument of CA for RS codes within the unique-decoding radius and highlight the main obstacles to extending CA from RS codes to general one-point AG codes.

\paragraph{Recap of the CA for RS codes up to unique decoding radius.}
We recall the argument in~\cite{BCHKS25}. In the RS case, $\mathcal X/\F_q$ is a projective line with genus $0$ and $\mathcal{L}(G)=\mathcal{L}(mP_{\infty})=\F_q[X]_{\leq m}$. The evaluation domain $D$ is supported by $n$ points in $\F_q$. The corresponding RS code is an $[n,m+1,d=n-m]$ linear code. 

The proof begins with Berlekamp--Welch (BW) equation for the formal received word $\mathbf{r}(Z)=\sum_{j=0}^{\ell}\mathbf u_jZ^j$. Over the evaluation domain $D$, for every $0\leq j\leq \ell$, there exists a unique polynomial $u_j(X)$ of degree less than $n$ such that $u_j(D)=\mathbf u_j$. Set $U(X,Z)=\sum_{j=0}^{\ell}u_j(X)Z^j$. Assume $e<d/2$. Choose $h= d-2e-1$ and $D_Z=\lfloor \frac{\ell e}{h+1}\rfloor+1$. Then there exist $A=\sum_{i=0}^{D_Z-1}a_i(X)Z^i$ and $B=\sum_{i=0}^{D_Z+\ell-1}b_i(X)Z^i$ with $\deg_X(a_i)\leq e+h$ and $\deg_X(b_i)\leq m+e+h$ and $A\neq0$ that satisfy the BW equations over $\F_q[Z]$
\[
    A(x,Z)U(x,Z)=B(x,Z)
\]
at every evaluation point $x$. For sufficiently many $z\in S$ admitting a nearby codeword polynomial $f_z$, the degree bounds also imply $B(X,z)=A(X,z)f_z(X)$. Then the Polishchuk--Spielman (PS) lemma establishes $A(X,Z)\mid B(X,Z)$, with quotient $\sum_{j=0}^{\ell}v_j(X)Z^j$. By unique decoding, every nearby $f_z$ equals $\sum_{j=0}^{\ell}z^jv_j$.

%Finally, a coordinatewise counting argument converts this common structure into correlated agreement. A coordinate $i\in [n]$ such that $(u_{0,i},\dots,u_{\ell,i})\neq (v_{0,i},\dots,v_{\ell,i})$ can become an agreement coordinate for at most $\ell$ challenge. Thus, if $|S|$ challenges are $e$-close, the number of jointly disagreeing coordinates is at most $|S|e/(|S|-\ell)$, giving multiplicative slack $\varepsilon$ when $|S|\ge\ell\left(1+\frac1\varepsilon\right)$.

Extending this argument to AG codes presents two obstacles.
\begin{itemize}
    \item First, for curves of positive genus, analyzing simultaneous Berlekamp–Welch interpolation using only divisor degrees restricts the radius to $e<(n-m)/2-\g$. Thus, to reach half the designed distance, one needs a way to make a Riemann-Roch space vanish without forcing the corresponding divisor to have negative degree.
    \item Second, the coefficients of $A=\sum_{i=0}^{D_Z-1}a_iZ^i$ and $B=\sum_{i=0}^{D_Z+\ell-1}b_iZ^i$ are now functions on $\mathcal X$ with prescribed pole bounds, so the PS lemma for bivariate polynomials is no longer directly applicable. Following the proof of RS codes would therefore require a corresponding divisibility statement over the function field.
\end{itemize}
%\paragraph{Obstacles for one-point AG codes.} 
%polynomial degree constraints of $\deg_X(A)$ and $\deg_X(B)$ in the RS case become pole order bounds in the AG case. A direct degree counting argument in the simultaneous Berlekamp--Welch interpolation loses a genus term. More precisely, if the relevant Riemann--Roch spaces are controlled only through their degrees, the interpolation inequality requires $n-m-2e-2\g>0,$ and hence only gives $e<\frac{n-m}{2}-\g.$  a direct dimension count of relevant Riemann--Roch spaces introduces a genus dependent loss in the unique decoding radius. More precisely, if the relevant Riemann--Roch spaces are controlled only through their degrees, the existence of nonzero solutions of BW equations requires $n-m-2e-2\g>0,$ and hence only gives $e<\frac{n-m}{2}-\g.$ Thus, to reach half the designed distance, one needs a way to make a Riemann--Roch space vanish without forcing the corresponding divisor to have negative degree.

\paragraph{Overcoming the genus penalty.}
%We use a classical auxiliary divisor technique from unique decoding of AG codes. 
From \cite[Proposition~4.2]{Pel96}, for a target error correction bound $t$, one chooses a finite extension $K/\mathbb F_q$ and an auxiliary divisor $F$ of degree $t+\g$ such that $\mathcal L_K(F-Q)=0$ for every effective divisor $0\le Q\le D$ of degree $t+1$. Here, $\mathcal L_K$ denotes a Riemann-Roch space over the extension field $K$. Notice that $\deg(F-Q)=\g-1$, so the vanishing no longer follows from a negative degree argument. %Instead, $F$ is chosen so that all the spaces $\mathcal L_K(F-Q)$ vanish simultaneously. %This is what removes the genus loss and allows decoding up to half the designed distance.

We use the same auxiliary divisor idea in the BW equations to remove the genus loss below half the designed distance. More precisely, we choose a single effective divisor $H$ of degree $d-e+\g-1$ over a finite extension $K/\mathbb F_q$, with support disjoint from those of $D$ and $G$, such that
\[  \mathcal L_K(G+H-D_I)=0,\   \text{for every } I\subseteq[n]\text{ with }|I|=n-e, \]
where $D_I:=\sum_{i\in I}P_i$. Here $\deg(G+H-D_I)=\g-1$ need not be negative, but the choice of $H$ still ensures that any function in $\mathcal L_K(G+H)$ vanishing at $n-e$ evaluation points is zero. We construct $H$ by adding points over $\overline{\mathbb F}_q$, at each step avoiding the finitely many points that would make one of these spaces nonzero. The chosen points then lie in a common finite extension (see Lemmas~\ref{lem:finite-bad-point} and~\ref{lem:auxiliary-divisor}). Thus, the same $H$ works simultaneously for every possible agreement set, independently of $z$.

\paragraph{Divisibility without the PS lemma.}
%We apply Berlekamp--Welch interpolation to the entire received curve, treating $Z$ as an indeterminate. 
We begin with the interpolation setup, which follows the
strategy of~\cite{BCHKS25}, and then explain how error incidence
counting replaces the PS divisibility lemma. As in the RS case,
we enlarge the coefficient spaces to obtain simultaneous
Berlekamp--Welch interpolants of low $Z$-degree but with an auxiliary divisor $H$ described above.

Let $r_i(Z):=\sum_{j=0}^{\ell}u_{j,i}Z^j$ for each coordinate $i\in [n]$. We seek $A(\cdot, Z)=\sum_{j=0}^{D_Z-1}A_jZ^j$ and $B(\cdot,Z)=\sum_{j=0}^{D_Z+\ell-1}B_jZ^j$ with $A_j\in \mathcal L_K(H)$ and $B_j\in \mathcal L_K(G+H)$, satisfying
\begin{equation}\label{eq:bwag}
    A(P_i,Z)r_i(Z)=B(P_i,Z)\ \text{for\ every}\ i\in[n].
\end{equation}
these identities
give a homogeneous linear system over $K$.
Writing $l_K(E):=\dim_K\mathcal L_K(E)$, the system has
$D_Zl_K(H)+(D_Z+\ell)l_K(G+H)$ unknowns and at most
$n(D_Z+\ell)$ equations. By the Riemann--Roch theorem~\cite[Theorem 1.5.15]{Sti09},
$l_K(H)\ge d-e$ and $l_K(G+H)\ge n-e$.
Thus a nonzero solution exists whenever $(d-2e)D_Z>\ell e.$ Lemma~\ref{lem:slack-bw-interpolant} further ensures that such a solution has $A\ne0$.
%This gives more interpolation unknowns and allows us to keep the degrees of $A$ and $B$ in $Z$ small. Let $l_K$ denote the $K$-dimension of the corresponding Riemann--Roch space $\mathcal L_K$. By the Riemann--Roch theorem, $l_K(H)\ge d-e$ and $l_K(G+H)\ge n-e$. To ensure the BW equation has nonzero solutions, it suffices to choose $D_Z$ with \LSSnote{$(d-2e)D_Z>\ell e$}. This gives a solution $(A(\cdot,Z),B(\cdot,Z))$ of Eq.~\eqref{eq:bwag} with $A\neq0$ (see Lemma~\ref{lem:slack-bw-interpolant}).

For every $z\in S$, the difference $B(\cdot,z)-A(\cdot,z)f_z$ belongs to $\mathcal L_K(G+H)$ and vanishes at $n-e$ evaluation points. The choice of $H$ therefore gives $B(\cdot,z)=A(\cdot,z)f_z$ (see Lemma~\ref{lem:vertical-specialization}).
%This interpolation works throughout $e<d/2$, with no additional genus term.

%can we summarize this part as a lemma or some key theory.
\textit{Our departure from the RS argument is to reconstruct a
candidate quotient from a few nearby codeword functions
and prove the global identity directly.} Define 
\[T:=\{i\in[n]: A(P_i,Z)\equiv 0\},\ E_z:=\{i\notin T:r_i(z)\ne f_z(P_i)\}.\]
For $i\notin T$, if a parameter $z\in S$ has an error at coordinate $i$, namely $r_i(z)\neq f_z(P_i)$, then the interpolation equations imply $A(P_i,z)=0$. Since $A(P_i,Z)$ is nonzero and has degree at most $D_Z-1$, at most $D_Z-1$ parameters can have an error at a fixed coordinate. 
Summing over all coordinates outside $T$, the total number of errors is therefore $\sum_{z\in S}|E_z| \le (n-|T|)(D_Z-1) \le n(D_Z-1)$ (see Lemma~\ref{lem:incidence-bound}). If $|S|\ge\ell+1$ and $|S|>(\ell+1)n(D_Z-1)/(e+1)$, averaging gives $\ell+1$ distinct parameters $z_0,\ldots,z_\ell\in S$ with
$\sum_{h=0}^{\ell}|E_{z_h}|\le e$. In particular, their error sets have union $E:=\bigcup_{h=0}^{\ell}E_{z_h}$ of size at most $e$.
%Choose these $\ell+1$ parameters and define $v_0,\ldots,v_\ell$ by Lagrange interpolation:

Define $v_0,\ldots,v_\ell$ as follows
\[\sum_{j=0}^{\ell}v_jZ^j:=\sum_{h=0}^{\ell}f_{z_h}\prod_{\substack{0\le k\le\ell\\ k\ne h}}\frac{Z-z_k}{z_h-z_k}.
\]
Then $v_0,\ldots,v_\ell\in \mathcal{L}(G)$ and $f_{z_h}=\sum_{j=0}^{\ell}v_jz_h^j$ for $0\le h\le\ell$. For each coordinate $i$ outside $T\cup E$, the two polynomials $r_i(Z)$ and $\sum_{j=0}^{\ell}v_j(P_i)Z^j$ have degree at most $\ell$ and agree at $z_0,\ldots,z_\ell$, they coincide identically. The interpolation equations therefore
give $B(P_i,Z)=A(P_i,Z)\sum_{j=0}^{\ell}v_j(P_i)Z^j.$ This identity also holds for $i\in T$, since both $A(P_i,Z)$
and $B(P_i,Z)$ are zero. Consequently, every coefficient of $ B(\cdot,Z)-A(\cdot,Z)\sum_{j=0}^{\ell}v_jZ^j$ lies in $\mathcal L_K(G+H)$ and vanishes at the same set of at least $n-e$ evaluation points. The choice of $H$ therefore gives $B(\cdot,Z)=A(\cdot,Z)\sum_{j=0}^{\ell}v_jZ^j$ and $r_i(Z)=\sum_{j=0}^{\ell}v_j(P_i)Z^j$ for every $i\notin T$.

%they are equal. On $T$, both $A(P_i,Z)$ and $B(P_i,Z)$ are zero. Consequently, every coefficient of $B(\cdot,Z)-A(\cdot,Z)\sum_{j=0}^{\ell}v_jZ^j$ belongs to $\mathcal L_K(G+H)$ and vanishes at $\geq n-e$ evaluation points. The choice of $H$ gives $B(\cdot,Z)=A(\cdot,Z)\sum_{j=0}^{\ell}v_jZ^j$, and hence $r_i(Z)=\sum_{j=0}^{\ell}v_j(P_i)Z^j$ for every $i\notin T$.

For every $z\in S$, to identify every decoded function $f_z$ with $\sum_{j=0}^{\ell}v_jz^j$, including those at parameters
where $A(\cdot,z)\equiv0$, fix a nonzero coefficient $A_j$ of
$A(\cdot,Z)$, the function $A_j(f_z-\sum_{h=0}^{\ell}v_hz^h)\in\mathcal L_K(G+H)$ vanishes wherever $f_z$ agrees with $\mathbf r(z)$. Indeed, $A_j(P_i)=0$ for $i\in T$ and $r_i(z)=\sum_{j=0}^{\ell}v_j(P_i)z^j$ for $i\notin T$. It therefore vanishes at $\geq n-e$ evaluation points and is identically zero by the vanishing property of $H$. Since $A_j\neq0$, we obtain $f_z=\sum_{j=0}^{\ell}v_jz^j$ for every $z\in S$, including when $A(\cdot,z)=0$ (see Lemma~\ref{lem:clean-collinearity}). Thus, without an AG analogue of the PS lemma, we obtain the common polynomial expression needed for correlated agreement.

With this representation established, we conclude by a
coordinatewise counting argument, as in~\cite{goyal2026optimal}. 
At any coordinate where the coefficient tuples
$(u_{0,i},\ldots,u_{\ell,i})$ and
$(v_0(P_i),\ldots,v_\ell(P_i))$ differ, the nonzero polynomial
$r_i(Z)-\sum_{j=0}^{\ell}v_j(P_i)Z^j$ has at most $\ell$ roots.
Hence whenever $|S|\ge\ell+1$, this coordinate contributes an error for at least $|S|-\ell$ parameters in $S$. Since each parameter contributes at most $e$ errors, we obtain the interleaved distance (see Lemma~\ref{lem:collinearity-combinatorial})
\[
    d([\mathbf u_0,\ldots,\mathbf u_\ell],[v_0(D),\ldots,v_\ell(D)])\le \frac{|S|}{|S|-\ell}e.
\]
Together with Lemma~\ref{lem:clean-collinearity}, this proves Theorem~\ref{thm:intro-unique}. For fixed $\ell$ and positive slack, applying Theorem~\ref{thm:ag-corrlated-agreement} with $a=\ell+\lceil\ell/\varepsilon\rceil$ gives $(\ell,e,\sigma,\varepsilon)$-CA with $\sigma=O(1/q)$ whenever $n-m-2e=\Omega(n)$ (see Corollary~\ref{cor:constant}). Taking $a=\ell(e+1)+1$ gives $(\ell,e,\sigma,0)$-CA with the soundness bound in Theorem~\ref{thm:intro-unique} (see Corollary~\ref{cor:without-slack}).

The extension $K$ is used only to construct the auxiliary divisor and the interpolants. All challenges lie in $\F_q$, and the coefficients $v_0,\ldots,v_\ell$ belong to $\mathcal L(G)$ over $\F_q$. Consequently, the final codewords remain over $\F_q$, and the soundness denominator remains $q$.

\subsubsection{Mutual correlated agreement up to the Johnson bound.}
\label{subsec:intro-johnson-technique}

To prove $(\ell,e,\sigma)$-MCA, we bound the set $S$ of parameters for which the required agreement fails. More precisely, $z\in S$ if there exist $f_z\in\mathcal L(G)$ and $A_z\subseteq[n]$ with $|A_z|\ge n-e$ such that $f_z(P_i)=r_i(z)$ for every $i\in A_z$, but no $v_0,\ldots,v_\ell\in\mathcal L(G)$ satisfy $v_j(P_i)=u_{j,i}$ for all $i\in A_z$ and $0\le j\le\ell$. For each $z\in S$, fix one such pair $(f_z,A_z)$. Our goal is to show that $|S|\le \sigma q$.

\paragraph{A common interpolation equation.}
We first seek one polynomial equation satisfied by all the functions $f_z$. Following the Guruswami-Sudan interpolation approach for Reed-Solomon codes~\cite{BCIKS23,BCHKS25}, we keep $Z$ as an indeterminate and construct $Q(Y,Z)\in\mathbb F_q(\mathcal X)[Y,Z]$ for the entire curve. The multiplicity conditions give
\[
Q(f_z,z)=0
\qquad\text{for every }z\in S.
\]
For $e\le(1-\sqrt{\rho}-\eta)n$ and fixed $\rho\in(0,1)$, the same low $Z$-degree counting idea as in~\cite[Section~3.1]{BCHKS25} gives $\deg_Y Q\le D_Y=O(\eta^{-1})$ and $\deg_Z Q\le D_Z=O(\ell\eta^{-2})$ (see Lemma~\ref{lem:ag-uniform-interpolation}).

\paragraph{Overcoming the difficulty of infinite local expansions.}
For RS codes, a message is a polynomial $f(X)$ of degree at most $m$. Root lifting gives the coefficients $a_0,\ldots,a_m$ of its Taylor expansion at a point $x_0$. Since $f(X)=\sum_{j=0}^{m}a_j(X-x_0)^j,$ we can recover every evaluation $f(x_i)$ as a fixed linear combination of these coefficients. For AG codes, however, the local expansion of a function is generally infinite, so a finite truncation cannot be used in the same way to evaluate the function at other points.

We instead use the pole bound to recover each evaluation as a fixed linear combination of finitely many local coefficients. Choose an auxiliary point $P_0\in\mathcal X(\overline{\mathbb F}_q)$ outside $P_\infty,P_1,\ldots,P_n$, and let $t$ be a local parameter at $P_0$. For $h\in\mathcal L_{\overline{\mathbb F}_q}(aP_\infty)$, the first $a+1$ local coefficients uniquely determine $h$: if they all vanish, then $h$ has a zero of order at least $a+1$ but poles of total degree at most $a$, so $h=0$. By linearity, each value $h(P_i)$ is therefore a fixed linear combination of these local coefficients. Thus, for each $i\in[n]$, there exist constants $\lambda_{i,0},\ldots,\lambda_{i,a}$, independent of $h$, such that
\[
h(P_i)=\sum_{j=0}^{a}\lambda_{i,j}[t^j]h
\qquad
\text{for every }
h\in\mathcal L_{\overline{\mathbb F}_q}(aP_\infty).
\]
%This is Lemma~\ref{lem:ag-local-recovery}.

To apply this formula while retaining the degree bounds needed for soundness, we combine it with the RS root-lifting argument. We first factor $Q$ over $\overline{\mathbb F}_q(\mathcal X)$ and write $Q(Y,Z)=C(Z)\prod_{\nu=1}^{r}R_\nu(Y^{\mu_\nu},Z)^{e_\nu},$ where $C(Z)$ is independent of $Y$, $e_\nu\ge 1$ and $\mu_\nu$ is a power of the characteristic chosen so that $\partial_Y R_\nu\ne 0$. Here $\nu$ indexes the irreducible factors of positive $Y$-degree (see Section~\ref{subsec:ag-factorization}). Choose $P_0$ also to satisfy Lemma~\ref{lem:ag-common-auxiliary-point}. After the exclusions in Section~\ref{subsec:ag-factorization}, $Q(f_z,z)=0$ implies that $R_\nu(f_z^{\mu_\nu},z)=0$ for some $\nu$. We assign each remaining parameter $z$ to one such factor.

Root lifting expresses the local coefficients of $f_z^{\mu_\nu}$ as rational functions of $y_z:=f_z(P_0)^{\mu_\nu}$ and $z$. Since $f_z^{\mu_\nu}\in \mathcal L_{\overline{\mathbb F}_q}(\mu_\nu mP_\infty)$, the recovery formula with $a=\mu_\nu m$ gives $f_z(P_i)^{\mu_\nu}$. After clearing denominators, we obtain a polynomial $F_i(Y,Z)$ for each coordinate $i$ such that
\[
F_i(y_z,z)=0
\quad\Longleftrightarrow\quad
f_z(P_i)=r_i(z).
\]
The degrees of these polynomials satisfy the bounds needed for the subsequent counting argument (see Lemmas~\ref{lem:ag-hensel-lifting} and~\ref{lem:ag-coordinate-polynomials}).

\paragraph{Completing the proof as in the RS case.}
It remains to bound $|S|$ using these coordinate equations. As in the RS proof, we use factorization to partition the remaining parameters into at most $D_Y$ sets. For each set and coordinate $i$, the resultant argument shows that either all parameters satisfy $f_z(P_i)=r_i(z)$, or at most a bounded number do. This bound does not depend on the size of the set (Theorem~\ref{thm:ag-parameter-grouping}).

Suppose a set has at most $m$ common agreement coordinates. Each parameter must then agree at at least $n-e-m>0$ other coordinates. However, at each of these other coordinates, only a bounded number of parameters can agree. Counting these agreements gives an upper bound on the size of the set. Thus, any sufficiently large set must have more than $m$ common agreement coordinates. Lagrange interpolation and the bound on the number of zeros then show that $f_z=\sum_{j=0}^{\ell}v_jz^j$ for $v_j\in\mathcal L(G),$ throughout that set. The same coordinatewise counting as in the RS proof bounds its size by $\ell(e+1)$, since all these parameters are MCA-bad (Lemmas~\ref{lem:ag-large-group} and~\ref{lem:ag-common-to-correlated}).

The degree bounds also control the smaller sets and the excluded parameters. Summing over all sets gives $|S|=O\!\left(\frac{\ell n}{\eta^3}\right).$ Hence $\mathcal C$ has $(\ell,e,\sigma)$-MCA with $\sigma=O\!\left(\frac{\ell n}{\eta^3q}\right)$ whenever $e\le(1-\sqrt{\rho}-\eta)n$, for fixed $\rho\in(0,1)$, proving Theorem~\ref{thm:ag-johnson}.

\subsubsection{Counterexamples for correlated agreement.}
\label{subsec:intro-counterexample-technique}

Our two constructions are both based on the Garcia-Stichtenoth tower.
The common ingredient is a function $x$ with
$(x)_\infty=bP_\infty$ such that each nonzero field element
occurs at exactly $b$ evaluation points.
Substituting $x$ into a polynomial therefore converts its
nonzero roots into an exact agreement count.
In both constructions, two leading monomials define an
affine line, while the remaining terms yield nearby
codewords. A pole order bound then rules out correlated
agreement for the full AG code.

\paragraph{Trace-polynomial construction.}
We adapt the RS counterexample of~\cite{DEEP20}, recalled
in~\cite{BCHKS25}. Write $s=p^a$ and $q=s^{2h}$, and let
$\mathrm{Tr}_{q/s}:\mathbb F_q\to\mathbb F_s$ be the trace function. For each $\beta\in\mathbb F_q^*$, the polynomial $\mathrm{Tr}(\beta Y)^{s-1}-1$ vanishes precisely where $\mathrm{Tr}(\beta Y)\ne0$, giving $q-q/s$ distinct nonzero roots. After normalization, its two leading terms define $\mathbf u_0+z\mathbf u_1$, while the remaining terms, after substitution $Y=x$, lie in $\mathcal L(G)$.
As $\beta$ varies, we obtain $(q-1)/(s-1)$ distinct
challenges whose corresponding points have distance
$e=n-(q-q/s)b$ from the code.

In contrast, the pole of the function defining
$\mathbf u_0$ cannot be canceled by any function in
$\mathcal L(G)$. Zero counting gives 
\[
d([\mathbf u_0,\mathbf u_1],\mathcal C^2)
\ge (2-1/s)e,
\]
which rules out CA with slack $\varepsilon<1-1/s$
unless $\sigma\ge(q-1)/((s-1)q)$. Increasing $s$ brings the radius closer to half the minimum distance, but changes the construction parameters. For fixed $p,a$, increasing $h$ makes $q$ arbitrarily large, without reducing the constant soundness lower bound.

\paragraph{Subspace-annihilator construction.}
The second construction adapts the Gabidulin counterexample
of~\cite[Section~3.3]{LXYZ26}. Fix $q=Q^6$. For every $z\in\mathbb F_q$, there exists a two-dimensional $\mathbb F_Q$-subspace $U$ with annihilator $A_U(Y)=Y^{Q^2}-zY^Q+\beta Y.$
The key modification is to use the complementary roots:
\[
F_U(Y):=\frac{Y^q-Y}{A_U(Y)}
=\prod_{c\in\mathbb F_q\setminus U}(Y-c).
\]
Polynomial division isolates the parameter $z$ in the
second leading term:
\[
F_U(Y)=Y^{q-Q^2}
+zY^{q-2Q^2+Q}+H_U(Y),\ 
\deg H_U\le q-2Q^2+1.
\]
Choosing $m=(q-2Q^2+1)b$ places $-H_U(x)$ in
$\mathcal L(G)$. Since $F_U$ has $q-Q^2$ nonzero roots,
every point on the resulting affine line has distance
$e=n-(q-Q^2)b$ from the code.
Nevertheless, the pole of the second leading monomial gives
\[
d([\mathbf u_0,\mathbf u_1],\mathcal C^2)
\ge e+(Q^2-Q)b.
\]
Thus all challenges admit nearby codewords, but no common
pair of codewords satisfies the required agreement,
ruling out every $\sigma<1$ for the admissible slack.
Reversing the two directions also gives $q-1$ close points
and one point beyond the slacked radius, yielding the
proximity-gap lower bound $\sigma\ge1-1/q$.

Unlike the trace construction, this construction approaches
half the minimum distance by increasing only the tower
level. Indeed, writing $t_j=n_j-(q-1)b_j$, we have
$t_j>0$, $t_j/b_j\to0$ and $\frac{e_j}{n_j}-\frac{d(\mathcal C_j)}{2n_j}
=\frac{t_j}{2n_j}\to 0.$ For every fixed $\varepsilon<Q/(Q+1)$, sufficiently high
levels remain below Johnson and retain both soundness
lower bounds, while the field stays fixed and the rate
and relative error radius converge to positive constants.

\section{Preliminaries}\label{sec:preliminaries}
\subsection{Notations}
\label{subsec:basic-notation}

For a positive integer $n$, write $[n]=\{1,\ldots,n\}$. Let $\mathbb F_q$ be the finite field of order $q$, let $p=\mathrm{char}\mathbb F_q$, and let $\overline{\mathbb F}_q$ be an algebraic closure of $\mathbb F_q$. We use $u_i$ to denote the $i$-th coordinate of $\mathbf u$. For $A\subseteq[n]$, write $\mathbf u|_A:=(u_i)_{i\in A}$. For a code $\mathcal C\subseteq\mathbb F_q^n$, its restriction to $A$ is $\mathcal C|_A=\{\mathbf c|_A:\mathbf c\in\mathcal C\}$.

For $\mathbf{u},\mathbf{v}\in \F_q^n$, the Hamming distance is defined as $d(\mathbf u,\mathbf v)=|\{i\in[n]:u_i\neq v_i\}|$. For a nonempty code $\mathcal C$, let $d(\mathbf u,\mathcal C)=\min_{\mathbf c\in\mathcal C}d(\mathbf u,\mathbf c)$. Its minimum distance is $d(\mathcal C)=\min_{\mathbf c\neq\mathbf c'\in\mathcal C} d(\mathbf c,\mathbf c')$. The corresponding relative distances are obtained by dividing by $n$.

For an integer $\ell\ge1$ and $\mathbf u_0,\ldots,\mathbf u_\ell\in\mathbb F_q^n$, write $[\mathbf u_0,\ldots,\mathbf u_\ell]=((u_{0,i},\ldots,u_{\ell,i}))_{i\in[n]}$ for their interleaving and define $\mathcal C^{\ell+1}=\{[\mathbf c_0,\ldots,\mathbf c_\ell]:\mathbf c_0,\ldots,\mathbf c_\ell\in\mathcal C\}$. We view $[\mathbf{u}_0,\ldots,\mathbf{u}_\ell]$ and $[\mathbf{c}_0,\ldots,\mathbf{c}_\ell]$ as words in $(\mathbb F_q^{\ell+1})^n$ and define
\[
    d([\mathbf u_0,\ldots,\mathbf u_\ell],[\mathbf c_0,\ldots,\mathbf c_\ell])= \bigl|\{i\in[n]:(u_{0,i},\ldots,u_{\ell,i}) \neq(c_{0,i},\ldots,c_{\ell,i})\}\bigr|.
\]
In particular, $d([\mathbf u_0,\ldots,\mathbf u_\ell],\mathcal C^{\ell+1})\leq e$ means that there exist $\mathbf c_0,\ldots,\mathbf c_\ell\in\mathcal C$ that agree with $\mathbf u_0,\ldots,\mathbf u_\ell$, respectively, on a common set of at least $n-e$ coordinates.

\subsection{Proximity Gaps and Agreement Notions}
\label{subsec:agreement-definitions}

We now define proximity gaps (PG), correlated agreement (CA) and mutual correlated agreement (MCA) for polynomial curves of degree at most $\ell$, where $\ell\ge1$ is an integer. Fix a linear code $\mathcal C\subseteq\mathbb F_q^n$, an integer $0\leq e\leq n$, a soundness parameter $0\leq\sigma\leq1$ and a multiplicative slack $\varepsilon\ge0$. For $\mathbf u_0,\ldots,\mathbf u_\ell\in\mathbb F_q^n$, write $\mathbf r(Z)=\sum_{j=0}^{\ell}\mathbf u_j Z^j$ and $r_i(Z)=\sum_{j=0}^{\ell}u_{j,i}Z^j$ where $Z$ is an indeterminate. The case $\ell=1$ gives affine lines. These curves lie in $\mathbb F_q^n$ and are distinct from the algebraic curve used to define an AG code. All probabilities over $z\in\mathbb F_q$ use the uniform distribution on parameters and we do not require $\mathbf r$ to be injective.

\begin{defn}[Proximity gaps]
\label{def:proximity-gaps}
We say that $\mathcal C$ has \emph{$(\ell,e,\sigma,\varepsilon)$-PG} if, for every $\mathbf u_0,\ldots,\mathbf u_\ell\in\mathbb F_q^n$, the inequality $\Pr_z[d(\mathbf r(z),\mathcal C)\leq e]>\sigma$ implies that $d(\mathbf r(z),\mathcal C)\leq(1+\varepsilon)e$ for every $z\in\mathbb F_q$. Thus either every point on the curve is within distance $(1+\varepsilon)e$ of the code, or at most a $\sigma$ fraction of the parameters satisfy $d(\mathbf r(z),\mathcal C)\leq e$.
\end{defn}

\begin{defn}[Correlated agreement]
\label{def:correlated-agreement}
We say that $\mathcal C$ has \emph{$(\ell,e,\sigma,\varepsilon)$-CA} if, for every $\mathbf u_0,\ldots,\mathbf u_\ell\in\mathbb F_q^n$, the inequality $\Pr_z[d(\mathbf r(z),\mathcal C)\leq e]>\sigma$ implies that $d([\mathbf u_0,\ldots,\mathbf u_\ell],\mathcal C^{\ell+1})\leq(1+\varepsilon)e$.
\end{defn}

We follow the definitions in~\cite{BCIKS23,goyal2026optimal}, with curves of degree at most $\ell$ at a fixed radius $e$. A property holds up to radius $e$ if it holds at every integer radius from $0$ to $e$.

\begin{rmk}[Slack]
\label{rmk:agreement-slack}
The parameters $(\ell,e,\sigma,\varepsilon)$ specify the degree bound, the integer error radius, the soundness bound and the multiplicative slack, respectively. Thus the radius is $e$, while the conclusion allows radius $(1+\varepsilon)e$. The case $\varepsilon=0$ is called \emph{without slack}. We use absolute radii $e$; the corresponding relative radius is $e/n$.
\end{rmk}

For mutual correlated agreement, we use the formulation in terms of ``specified agreement sets'' from~\cite[Section~4.3, Theorem~4.6]{BCHKS25}. It is also called \emph{list correlated agreement} there.

\begin{defn}[Mutual correlated agreement]
\label{def:mutual-correlated-agreement}
We say that $\mathcal C$ has \emph{$(\ell,e,\sigma)$-MCA} if, for every $\mathbf u_0,\ldots,\mathbf u_\ell\in\mathbb F_q^n$,
\[
    \Pr_{z\in \F_q}\!\left[
    \begin{array}{l}
        \exists A\subseteq[n],\ |A|\geq n-e:\\
        \mathbf r(z)|_A\in\mathcal C|_A
        \quad\text{and}\quad
        [\mathbf u_0,\ldots,\mathbf u_\ell]|_A\notin\mathcal C^{\ell+1}|_A
    \end{array}
    \right]\leq\sigma.
\]
\end{defn}

Equivalently, an \emph{$e$ agreement witness} is a triple $(z,\mathbf c,A)$ with $z\in\mathbb F_q$, $\mathbf c\in\mathcal C$, $A\subseteq[n]$, $|A|\geq n-e$ and $\mathbf c|_A=\mathbf r(z)|_A$. It is \emph{MCA-bad} if no $\mathbf c_0,\ldots,\mathbf c_\ell\in\mathcal C$ satisfy $\mathbf c_j|_A=\mathbf u_j|_A$ for every $0\le j\le\ell$.
Thus $(\ell,e,\sigma)$-MCA bounds the probability that $z$ has at least one \emph{MCA-bad} witness. Whenever a parameter $z$ has no \emph{MCA-bad} witness, every agreement witness $(z,\mathbf c,A)$ admits $\mathbf c_0,\ldots,\mathbf c_\ell\in\mathcal C$ such that $\mathbf c_j|_A=\mathbf u_j|_A$ for every $0\le j\le\ell$. We do not define MCA with slack, so its parameters are $(\ell,e,\sigma)$.

\begin{rmk}\label{rmk:mca-ca-pg}
    It is well known that $(\ell,e,\sigma)$-MCA implies $(\ell,e,\sigma,0)$-CA, which in turn implies $(\ell,e,\sigma,0)$-PG. More generally, $(\ell,e,\sigma,\varepsilon)$-CA implies $(\ell,e,\sigma,\varepsilon)$-PG. We omit the proof. In each setting, we prove the stronger property and the corresponding weaker properties then follow immediately.
\end{rmk}

\subsection{One-Point Algebraic Geometric Codes}
\label{subsec:ag-codes}

Let $\mathcal X$ be a smooth, projective, geometrically irreducible curve over $\mathbb F_q$ of genus $\g$, with function field $\mathbb F_q(\mathcal X)$. Write $\mathcal X(\mathbb F_q)$ for its set of rational points. For a divisor $A$ on $\mathcal X$, define
\[
    \mathcal L(A)=\{0\}\cup
    \{f\in\mathbb F_q(\mathcal X)^*:(f)+A\geq0\},
    \qquad l(A)=\dim_{\mathbb F_q}\mathcal L(A),
\]
where $(f)$ denotes the principal divisor of $f$. The Riemann-Roch theorem gives $l(A)\geq\deg A+1-\g$, with equality when $\deg A\geq2\g-1$. %If $\deg A<0$, then $\mathcal L(A)=\{0\}$.

Let $P_1,\ldots,P_n,P_\infty\in\mathcal X(\mathbb F_q)$ be pairwise distinct, and set $D=P_1+\cdots+P_n$ and $G=mP_\infty$ where $m<n$ is a positive integer. The corresponding one-point algebraic geometric code is
\[
    \mathcal C=\cC_{\mathcal L}(D,G)
    =\{f(D):f\in\mathcal L(G)\}\subseteq\mathbb F_q^n,
    \qquad f(D):=(f(P_1),\ldots,f(P_n)).
\]
The evaluation map is injective, so $\dim_{\mathbb F_q}\mathcal C=l(G)\geq m+1-\g$ with equality when $m\geq2\g-1$. Moreover, $d(\mathcal C)\geq n-m$ where $d:=n-m$ is the designed minimum distance. Thus every received word has at most one codeword within distance $e$ when $0\leq e<(n-m)/2$. We write $\rho:=m/n$. Note that $\rho$ is determined by the degree of $G$, whereas the actual rate of $\mathcal C$ is $l(G)/n$, which equals $(m+1-\g)/n$ when $m\geq 2\g-1$.

For this evaluation code, we write $(z,f,A)$ for the $e$ agreement witness $(z,f(D),A)$ from Definition~\ref{def:mutual-correlated-agreement} where $f\in\mathcal L(G)$. We call this witness \emph{MCA-bad} if there are no $v_0,\ldots,v_\ell\in\mathcal L(G)$ such that $v_j(P_i)=u_{j,i}$ for every $i\in A$ and $0\le j\le\ell$.

We also fix the local notation used later. For $P\in\mathcal X(\mathbb F_q)$, let $v_P$ denote the normalized discrete valuation at $P$, with $v_P(0)=+\infty$ and let $\mathcal O_P=\{f\in\mathbb F_q(\mathcal X):v_P(f)\geq0\}$ be its local ring. For a local parameter $t$ at $P$, every $f\in\mathcal O_P$ has a unique formal expansion $f=\sum_{j\geq0}a_jt^j$ with $a_j\in\mathbb F_q$. We write $[t^j]f=a_j$ and in particular, $[t^0]f=f(P)$.

%For any field extension $L/\mathbb F_q$, let $\mathcal X_L:=\mathcal X\times_{\mathbb F_q}L$ be the curve obtained by extending the constant field. Its function field is denoted by $L(\mathcal X)$, and $\mathcal L_L(A)$ denotes the Riemann-Roch space of a divisor $A$ on $\mathcal X_L$. We retain the notation $P_\infty,P_1,\ldots,P_n$ for the corresponding points on $\mathcal X_L$, and use the same notation for divisors after base change. The same local notation is used after extending the constant field, with expansion coefficients in $L$. In what follows, $L$ will typically be either a finite extension of $\mathbb F_q$ or the algebraic closure $\overline{\mathbb F}_q$.

For any field extension $K/\mathbb F_q$, let $\mathcal X_{K}=\mathcal X\times_{\mathbb F_q}K$ be the curve obtained by extending the constant field. Its function field is denoted by $K(\mathcal X)$ and $\mathcal L_{K}(A)$ denotes the Riemann-Roch space of a divisor $A$ on this curve over $K$. We retain the notation $P_\infty,P_1,\ldots,P_n$ for the corresponding points on $\mathcal X_{K}$. The same local notation is used after extending the constant field, with expansion coefficients in $K$.

\subsection{The Garcia-Stichtenoth tower}\label{sec:GS-tower}
In this subsection, let $q=s^2$, where $s$ is a prime power, and let $x_0$ be an indeterminate over $\mathbb F_q$. We use the Garcia-Stichtenoth tower~\cite{GarciaStichtenoth95}, defined recursively by
\[
F_0=\mathbb F_q(x_0),\qquad F_{j+1}=F_j(y_{j+1}),
\]
where $y_{j+1}^{s}+y_{j+1}=x_j^{s+1}$ and $x_{j+1}=\frac{y_{j+1}}{x_j}.$

For each $j\ge 0$, let $\mathcal X_j$ be the curve with function field $F_j$. The function $x_0$ has a unique pole on $\mathcal X_j$, which we denote by $P_{\infty,j}$. We use all the other rational points of $\mathcal X_j$ for evaluation and write
\[
b_j=[F_j:F_0],\qquad
D_j=\sum_{P\in\mathcal X_j(\mathbb F_q)\setminus\{P_{\infty,j}\}}P,
\qquad n_j=\deg D_j.
\]

\begin{lemma}\label{lem:GS-tower}
For the Garcia-Stichtenoth tower defined above, we have $b_j=s^j$, $(x_0)_\infty=b_jP_{\infty,j}$ and $(q-1)b_j<n_j\le qb_j.$

Moreover, for every $a\in\mathbb F_q^*$, exactly $b_j$ evaluation points satisfy $x_0(P)=a$. Writing $\g_j$ for the genus of $\mathcal X_j$, we also have $n_j\to\infty$, $\frac{\g_j}{n_j}\longrightarrow\frac1{s-1}$ and $\frac{n_j}{b_j}\longrightarrow q-1.$
\end{lemma}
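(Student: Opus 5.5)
The proof collects the standard facts about the Garcia--Stichtenoth tower~\cite{GarciaStichtenoth95} (see also~\cite{Sti09}) and then checks the counting claims from them. I would argue by induction on $j$ along the extensions $F_{j+1}=F_j(y_{j+1})$, with $y_{j+1}^s+y_{j+1}=x_j^{s+1}$.

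\emph{Step 1: degree and the pole of $x_0$.} Each step is an Artin--Schreier type extension of degree $s$, so $[F_{j+1}:F_j]=s$ and $b_j=s^j$. The unique pole of $x_0$ in $F_j$ is totally ramified in $F_{j+1}/F_j$. To see this, I would show by induction that $x_j$ has negative valuation at the pole of $x_0$, with valuation coprime to $p$. Then $y_{j+1}^s+y_{j+1}=x_j^{s+1}$ forces total ramification. It follows that the pole divisor of $x_0$ on $\mathcal X_j$ is $(x_0)_\infty=[F_j:F_0]\,P_{\infty,j}=b_jP_{\infty,j}$, and $P_{\infty,j}$ is rational.

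\emph{Step 2: complete splitting over $\mathbb F_q^*$.} This is the key step. Fix $a\in\mathbb F_q^*$. I would show inductively that every place of $F_j$ lying over $x_0=a$ is rational and satisfies $x_j\in\mathbb F_q^*$.
\begin{itemize}
\item Given such a place, the element $c:=x_j^{s+1}$ equals the norm $N_{\mathbb F_q/\mathbb F_s}(x_j)$, so $c\in\mathbb F_s^*$.
\item The polynomial $T^s+T-c$ then has $s$ distinct roots in $\mathbb F_q$, because $\mathrm{Tr}_{\mathbb F_q/\mathbb F_s}$ is onto $\mathbb F_s$ with fibres of size $s$.
\item These roots are all nonzero, since $c\neq0$, so the place splits into $s$ rational places.
\item At each new place $y_{j+1}\in\mathbb F_q^*$, hence $x_{j+1}=y_{j+1}/x_j\in\mathbb F_q^*$, which continues the induction.
\end{itemize}
By Kummer's theorem, each $a\in\mathbb F_q^*$ therefore has exactly $b_j$ rational places above it. These give exactly $b_j$ evaluation points with $x_0(P)=a$, for $(q-1)b_j$ such points in total.

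\emph{Step 3: counting $n_j$.} Every remaining rational point other than $P_{\infty,j}$ lies over $x_0=0$, and that fibre has degree $b_j$. So there are at most $b_j$ such points, giving $n_j\le qb_j$. For the strict lower bound I need at least one rational point over $x_0=0$. I would obtain it from the standard place of $F_0$ at $x_0=0$, which is known to have a rational place totally ramified above it in the tower. Hence $(q-1)b_j<n_j\le qb_j$. In particular $n_j\to\infty$ and $\liminf n_j/b_j\ge q-1$.

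\emph{Step 4: asymptotics.} I would quote the genus formula of~\cite{GarciaStichtenoth95}, shifting indices so that $F_0$ is rational. It gives $\g_j=s^{j+1}+O(s^{j/2+1})$, so $\g_j/b_j\to s+1$.

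The main obstacle is showing that the $x_0=0$ fibre contributes only $o(b_j)$ rational points. Rather than analysing that fibre directly, I would use the Drinfeld--Vl\u{a}du\c{t} bound $N_j\le(\sqrt q-1+o(1))\g_j$. Combined with Step~2 this gives
\[
(q-1)b_j\le n_j\le (s-1)(s+1+o(1))b_j=(q-1+o(1))b_j.
\]
Hence $n_j/b_j\to q-1$, and therefore
\[
\frac{\g_j}{n_j}\to\frac{s+1}{s^2-1}=\frac1{s-1}.
\]
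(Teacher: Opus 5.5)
Your Steps 1--3 follow the paper's proof closely: total ramification of the pole via the valuation of $x_j$, complete splitting over $\mathbb F_q^*$ via $Y^s+Y=a^{s+1}$ with $a^{s+1}\in\mathbb F_s^*$, and $n_j\le qb_j$ from the degree of the fibre over $x_0=0$. The real difference is in the asymptotics. The paper quotes the Garcia--Stichtenoth optimality $\mathfrak g_j/n_j\to 1/(s-1)$ together with $\mathfrak g_j/b_j\to s+1$ and multiplies. You quote only the genus asymptotics and squeeze $(q-1)b_j\le n_j\le(\sqrt q-1+o(1))\mathfrak g_j$ using the Drinfeld--Vl\u{a}du\c{t} bound. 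This is slightly more self-contained, since it is essentially how optimality is proved, and it shows explicitly that the fibre over $x_0=0$ contributes only $o(b_j)$ points.

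Two points need fixing.

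\textbf{The genus formula.} It is misquoted. Shifting the Garcia--Stichtenoth indices gives $\mathfrak g_j=s^{j+1}+s^{j}-O(s^{j/2+1})$, not $s^{j+1}+O(s^{j/2+1})$. Taken literally, your formula yields $\mathfrak g_j/b_j\to s$. The squeeze would then read $n_j\le(s^2-s+o(1))b_j<(q-1)b_j$, which contradicts Step 2. The limit $s+1$ you actually use is the correct one, so the formula just needs correcting.

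\textbf{The strict inequality $n_j>(q-1)b_j$.} This is not justified as phrased. The zero of $x_0$ is not totally ramified in the tower: already in $F_1/F_0$ it splits into $s$ rational places, because $T^s+T$ is separable with $s$ roots in $\mathbb F_q$. What you need is the paper's Kummer argument:
\begin{itemize}
\item At a rational place where $x_j$ vanishes, $T^s+T-x_j^{s+1}$ reduces to $T^s+T$.
\item Its root $0$ is simple, so some rational place above has $y_{j+1}\equiv0$.
\item There $v(y_{j+1})=(s+1)v(x_j)$, hence $v(x_{j+1})=s\,v(x_j)>0$, and induction gives a rational point with $x_0=0$ on every $\mathcal X_j$.
\end{itemize}
Alternatively, the places over $x_0=0$ with $y_1=c\neq0$ have $v(x_1)=-1$. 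By the same valuation argument as your Step~1, they are totally ramified in $F_j/F_1$. This may be what you meant, but it needs that argument rather than an appeal to a known fact.
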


\begin{proof}
At level $0$, the function $x_0$ has a simple pole at infinity. Suppose that $x_j$ has valuation $-1$ at $P_{\infty,j}$. Let $Q$ be a place of $F_{j+1}$ lying above $P_{\infty,j}$, with ramification index $a$. The recursive equation implies $v_Q(y_{j+1})<0$ and $s\cdot v_Q(y_{j+1})=-a(s+1).$ Since $\gcd(s,s+1)=1$ and $a\le[F_{j+1}:F_j]\le s$, we must have $a=s$. Thus $P_{\infty,j}$ is totally ramified, with a unique rational place $Q$ above it. Moreover, $v_Q(x_{j+1})=-(s+1)+s=-1.$ By induction, $b_j=s^j$ and $(x_0)_\infty=b_jP_{\infty,j}$.

Next, suppose that $x_j(P)=a\ne0$ at a rational place $P$. Since $a^{s+1}\in\mathbb F_s^*$ and the map $Y\mapsto Y^s+Y$ is the trace map from $\mathbb F_q$ to $\mathbb F_s$, the equation $Y^s+Y=a^{s+1}$ has exactly $s$ distinct nonzero solutions in $\mathbb F_q$. The corresponding values $x_{j+1}=Y/a$ are also nonzero. Hence every place with $x_0=a\in\mathbb F_q^*$ splits completely throughout the tower. In particular, exactly $b_j$ evaluation points satisfy $x_0(P)=a$, giving $n_j\ge(q-1)b_j.$

There is also a rational place at which $x_0,x_1,\ldots,x_j$ all vanish. Indeed, starting with $x_0=0$ at level $0$, the simple root $Y=0$ of the reduced equation $Y^s+Y=0$ gives a rational place at the next level. At this place, $v(y_{j+1})=(s+1)v(x_j)$, so $v(x_{j+1})=s\,v(x_j)>0$. This proves the assertion by induction, and gives $n_j>(q-1)b_j$.

On the other hand, every evaluation point has $x_0(P)\in\mathbb F_q$. For each $a\in\mathbb F_q$, the function $x_0-a$ has pole degree $b_j$, so it has at most $b_j$ zeros among the evaluation points. Summing over all $a\in\mathbb F_q$ gives $n_j\le qb_j$.

Finally, $n_j\to\infty$ follows from $n_j\ge(q-1)s^j$, and the limit $\g_j/n_j\to1/(s-1)$ is the asymptotic optimality result for this tower~\cite{GarciaStichtenoth95}. The genus calculation also gives $\g_j/b_j\to s+1$; see~\cite[Remark~5.10]{GS07}. Consequently, $n_j/b_j=(n_j/\g_j)(\g_j/b_j)\to(s-1)(s+1)=q-1$.
\end{proof}

\subsection{Polynomial Notation and Resultants}
\label{subsec:polynomial-preliminaries}

For a polynomial or formal power series $F$, we write $[Z^j]F$ for the coefficient of $Z^j$ in $F$. We use the same notation for other variables, for example, $[t^aV^b]F$ is the coefficient of $t^aV^b$ in $F$.

For a field $K$ and a polynomial $P\in K[Y,Z]$, we write $\deg_Y P$ and $\deg_Z P$ for its degrees in $Y$ and $Z$, respectively, with both degrees of the zero polynomial defined to be $-\infty$. For $P\neq 0$, we regard $P$ as an element of $K[Z][Y]$ and write $\mathrm{lc}_Y(P)\in K[Z]$ for its leading coefficient in $Y$. The monic greatest common divisor in $K[Z]$ of the coefficients of $P$ is called its \emph{$Y$-content}. We say that $P$ is primitive in $Y$, or \emph{$Y$-primitive}, if its $Y$-content is $1$.

\begin{defn}
\label{def:ag-resultant}
Let $K$ be a field and let $P,T\in K[Y,Z]\setminus\{0\}$. The \emph{Sylvester matrix} $\mathrm{Syl}_Y(P,T)$ is the $(\deg_Y P+\deg_Y T)\times(\deg_Y P+\deg_Y T)$ matrix over $K[Z]$ whose rows are the coefficient vectors of $Y^{\deg_Y T-1}P,\ldots,P$ and $Y^{\deg_Y P-1}T,\ldots,T$ in the indicated order, with columns indexed by the descending powers of $Y$ from $Y^{\deg_Y P+\deg_Y T-1}$ to $1$. A block of rows is omitted when its number of rows is zero, and the determinant of the empty matrix is taken to be $1$. The \emph{resultant} of $P$ and $T$ with respect to $Y$ is $\mathrm{Res}_Y(P,T) :=\det\mathrm{Syl}_Y(P,T)\in K[Z]$.
\end{defn}

We record the following well-known properties of the resultant.

\begin{lemma}
\label{lem:ag-resultant-properties}
Let $K$ be a field and let $P,T\in K[Y,Z]\setminus\{0\}$ with $\deg_Y P>0$. Then $\deg_Z\mathrm{Res}_Y(P,T) \leq (\deg_Y P)(\deg_Z T) +(\deg_Y T)(\deg_Z P)$. If $P$ and $T$ are coprime in $K(Z)[Y]$, then $\mathrm{Res}_Y(P,T)\neq 0$. If $(y,z)\in K^2$ satisfies $P(y,z)=T(y,z)=0$, then $\mathrm{Res}_Y(P,T)(z)=0$.
\end{lemma}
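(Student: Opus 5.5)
The statement to prove is Lemma~\ref{lem:ag-resultant-properties}: the $Z$-degree bound for $\mathrm{Res}_Y(P,T)$, its nonvanishing for coprime inputs, and its vanishing at common zeros. I would work directly from the Sylvester matrix of Definition~\ref{def:ag-resultant}. Throughout, write $a=\deg_Y P>0$ and $b=\deg_Y T\ge 0$, so that $\mathrm{Syl}_Y(P,T)$ is an $(a+b)\times(a+b)$ matrix over $K[Z]$.

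\textbf{Degree bound.} Expand the determinant by the Leibniz formula as a sum of products. Each product picks exactly one entry from each row. There are $b$ rows whose entries are coefficients of $P$, each of $Z$-degree at most $\deg_Z P$, and $a$ rows whose entries are coefficients of $T$, each of $Z$-degree at most $\deg_Z T$. Hence every term has $Z$-degree at most $b\deg_Z P+a\deg_Z T$, and so does the determinant. In the degenerate case $b=0$, $T\in K[Z]$ is a nonzero constant in $Y$, and the matrix is $a\times a$ with diagonal entries $T$. Then the resultant equals $T^a$, and the bound $a\deg_Z T$ holds directly.

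\textbf{Nonvanishing for coprime inputs.} Work over the field $K(Z)$. The standard argument shows that $\det\mathrm{Syl}=0$ if and only if there exist $U,V\in K(Z)[Y]$, not both zero, with $\deg U<b$, $\deg V<a$ and $UP+VT=0$. The reason is that a nonzero vector in the left kernel of the matrix is exactly the coefficient list of such a pair $(U,V)$. Suppose such a pair exists. If $V=0$ then $U=0$, a contradiction, so $V\ne0$. From $UP=-VT$ and coprimality, $P\mid V$, which is impossible because $0\le\deg V<a=\deg P$. For $b=0$ the resultant is $T^a$, which is nonzero. Either way the resultant is nonzero.

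\textbf{Vanishing at common zeros.} Consider the column vector $w=(y^{a+b-1},\dots,y,1)^{\top}$. Multiplying the Sylvester matrix evaluated at $z$ by $w$ gives the column $(y^{b-1}P(y,z),\dots,P(y,z),y^{a-1}T(y,z),\dots,T(y,z))^{\top}$, which is zero when $P(y,z)=T(y,z)=0$. Since the last coordinate of $w$ is $1$, $w\neq0$, so $\det\mathrm{Syl}_Y(P,T)(z)=0$. This uses the fact that evaluation at $z$ commutes with taking the determinant, which holds because evaluation is a ring homomorphism.

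\textbf{Main obstacle.} There is no real obstacle here. The only care needed is the bookkeeping when $b=0$ and the observation that specializing at $z$ commutes with the determinant.
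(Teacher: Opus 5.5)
Your proof is correct. The paper gives no proof of this lemma and simply records these as well-known properties of the resultant; your Sylvester-matrix argument is the standard justification. It uses the Leibniz expansion for the $Z$-degree bound, the left-kernel characterization over $K(Z)$ via pairs $(U,V)$ with $UP+VT=0$ for nonvanishing, and the null vector $(y^{a+b-1},\dots,y,1)^{\top}$ of the specialized matrix for vanishing at common zeros. Your separate handling of $b=0$ is harmless but unnecessary, since both general arguments already cover that case.
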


%\section{Correlated Agreement via Berlekamp-Welch Interpolation}\label{sec:ag-unique}
%\input{AG-Proximity-Gaps}

\section{Correlated Agreement up to half the designed distance}
Throughout this section, we use the one-point algebraic geometric code $\mathcal C=\cC_{\mathcal L}(D,G)$ and the notation from Section~\ref{subsec:ag-codes}. Assume that $l(G)=m+1-\g$, as holds when $m\geq2\g-1$. 

For a finite extension $K/\mathbb{F}_q$ and a divisor $A$ on the base-changed curve $\mathcal X_K$, we write $\mathcal L_K(A)$ for the corresponding Riemann--Roch space over $K$ and $l_K(A):=\dim_K\mathcal L_K(A)$. We write $\mathcal L(G)=\mathcal L_{\mathbb F_q}(G)$. We use the same notation for the points $P_i$, the divisor $D$, and the divisor $G$ after base change from $\mathbb{F}_q$ to $K$.

Fix an integer $\ell\ge1$. We use Berlekamp--Welch interpolation to prove correlated agreement for curves of degree at most $\ell$ for $0\leq e<d/2$, with versions with and without slack. %This range has no loss relative to half the designed minimum distance.

\begin{thm}\label{thm:ag-corrlated-agreement}
Let $e$ be an integer with $0\le e<d/2$. Choose an integer $D_Z\ge 1$ such that $(d-2e)D_Z>\ell e.$ For $\mathbf{u}_0,\ldots,\mathbf{u}_\ell\in \mathbb{F}_q^n$, let $S=\{z\in\mathbb{F}_q:d(\mathbf{r}(z),\mathcal{C})\le e\}$. If there is an integer $a\ge\ell+1$ such that
\[
|S|\ge\max\left\{\left\lfloor\frac{(\ell+1)n(D_Z-1)}{e+1}\right\rfloor+1,\ a\right\},
\]
then there exist $v_0,\ldots,v_\ell\in \mathcal{L}(G)$ such that $d([\mathbf{u}_0,\ldots,\mathbf{u}_\ell],[v_0(D),\ldots,v_\ell(D)])\le \frac{a}{a-\ell}e.$ In particular, $d([\mathbf{u}_0,\ldots,\mathbf{u}_\ell],\mathcal{C}^{\ell+1})\le \frac{a}{a-\ell}e.$
\end{thm}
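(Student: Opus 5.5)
The proof follows the outline in the technique overview and proceeds in five steps.

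\textbf{Step 1: the auxiliary divisor.} Choose a finite extension $K/\mathbb F_q$ and an effective divisor $H$ on $\mathcal X_K$ of degree $d-e+\g-1$, with support disjoint from $D$ and $P_\infty$, such that $\mathcal L_K(G+H-D_I)=0$ for every $I\subseteq[n]$ with $|I|=n-e$. To build $H$, add geometric points one at a time. At each step only finitely many points would make one of the finitely many spaces $\mathcal L(G+H'-D_I)$ fail to drop dimension, so those can be avoided. After collecting the chosen points, pass to a common finite extension. I expect this step to be the main obstacle. The delicate point is showing that for a divisor $A$ with $l(A)>0$, all but finitely many points $P$ satisfy $l(A-P)=l(A)-1$ (because $l(A-P)=l(A)$ only if $P$ is a base point of $|A|$); dually, adding a point $P$ to $H$ keeps the vanishing spaces small. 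The natural approach is to start from $l_K(G+H_0-D_I)$ with $\deg H_0$ small, so the degree is negative, and then add points generically. Each added point raises $l$ by at most one, and by Riemann--Roch with the canonical divisor $W$ it raises $l$ exactly when $P$ is not a base point of $|W-(G+H-D_I)|$. So we should avoid the points $P$ for which $l$ jumps while the degree is still at most $\g-1$. This works because $l(W-A)\ge1$ holds whenever $\deg A\le \g-2$ and $l(A)=0$. The key consequence is this: any $g\in\mathcal L_K(G+H)$ vanishing at $n-e$ of the $P_i$ is zero.

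\textbf{Step 2: the interpolant.} Set up the simultaneous Berlekamp--Welch system \eqref{eq:bwag}, with $A_j\in\mathcal L_K(H)$ for $j<D_Z$ and $B_j\in\mathcal L_K(G+H)$ for $j<D_Z+\ell$. Riemann--Roch gives $l_K(H)\ge d-e$ and $l_K(G+H)\ge n-e$, and the number of equations is $n(D_Z+\ell)$. Hence the assumption $(d-2e)D_Z>\ell e$ yields a nonzero solution. Such a solution has $A\ne0$: if $A=0$, then every $B_j\in\mathcal L_K(G+H)$ vanishes at all $P_i$, so $B=0$ by Step 1.

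For each $z\in S$, let $f_z$ be the unique decoded function, and consider $B(\cdot,z)-A(\cdot,z)f_z\in\mathcal L_K(G+H)$. It vanishes on the agreement set of $f_z$ with $\mathbf r(z)$, which has at least $n-e$ coordinates, so Step 1 gives $B(\cdot,z)=A(\cdot,z)f_z$.

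\textbf{Step 3: incidence counting.} Define $T=\{i\in[n]:A(P_i,Z)\equiv0\}$ and, for $z\in S$, $E_z=\{i\notin T:r_i(z)\ne f_z(P_i)\}$. For $i\in E_z$, combining \eqref{eq:bwag} with Step 2 gives $A(P_i,z)(r_i(z)-f_z(P_i))=0$, so $A(P_i,z)=0$. For $i\notin T$, the polynomial $A(P_i,Z)$ is nonzero of degree at most $D_Z-1$, so at most $D_Z-1$ parameters $z$ have $i\in E_z$. Summing over coordinates, $\sum_{z\in S}|E_z|\le n(D_Z-1)$.

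Since $|S|>(\ell+1)n(D_Z-1)/(e+1)$ and $|S|\ge\ell+1$, averaging gives distinct $z_0,\dots,z_\ell\in S$ with $\sum_h|E_{z_h}|\le e$. To see this, order $S$ by $|E_z|$ and take the $\ell+1$ smallest; their sum is at most $(\ell+1)n(D_Z-1)/|S|<e+1$.

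\textbf{Step 4: Lagrange interpolation and the global identity.} Lagrange-interpolate $f_{z_0},\dots,f_{z_\ell}$ to obtain $v_0,\dots,v_\ell\in\mathcal L(G)$ defined over $\mathbb F_q$. Let $E=\bigcup_h E_{z_h}$, so $|E|\le e$. For $i\notin T\cup E$, the polynomials $r_i(Z)$ and $\sum_j v_j(P_i)Z^j$ have degree at most $\ell$ and agree at the $\ell+1$ points $z_0,\dots,z_\ell$, so they are equal. Therefore every $Z$-coefficient of $B-A\sum_j v_jZ^j$ lies in $\mathcal L_K(G+H)$ and vanishes off $E$; note that on $T$ both $A$ and $B$ vanish. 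By Step 1 this difference is identically zero. Consequently $r_i(Z)=\sum_j v_j(P_i)Z^j$ for all $i\notin T$.

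Now fix a nonzero coefficient $A_j$ of $A$. For any $z\in S$, the function $A_j\bigl(f_z-\sum_h v_hz^h\bigr)\in\mathcal L_K(G+H)$ vanishes on the agreement set of $f_z$: on $T$ because $A_j$ vanishes there, and off $T$ by the identity just proved. Hence $f_z=\sum_h v_hz^h$ for every $z\in S$.

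\textbf{Step 5: final count.} Let $N$ be the number of coordinates $i$ where $(u_{j,i})_j\ne(v_j(P_i))_j$. At each such coordinate the polynomial $r_i(Z)-\sum_j v_j(P_i)Z^j$ is nonzero of degree at most $\ell$, so it is nonzero at at least $|S|-\ell$ parameters of $S$. Since $\mathbf r(z)$ and $f_z(D)$ differ in at most $e$ coordinates for each $z\in S$, double counting gives $N(|S|-\ell)\le |S|e$. Thus $N\le \frac{|S|}{|S|-\ell}e\le\frac{a}{a-\ell}e$, where the last inequality uses $|S|\ge a$ and the fact that $x/(x-\ell)$ is decreasing.
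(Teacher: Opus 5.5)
Your proposal is correct and follows the paper's proof essentially step for step. The steps match: the auxiliary divisor $H$ built by avoiding finitely many bad points (Lemmas~\ref{lem:finite-bad-point} and~\ref{lem:auxiliary-divisor}), the simultaneous Berlekamp--Welch interpolant with $A\neq0$, the specialization $B(\cdot,z)=A(\cdot,z)f_z$, the incidence bound, Lagrange interpolation through $\ell+1$ low-error parameters, the global identity forcing $f_z=\sum_j v_jz^j$ on all of $S$, and the final double count. Your choice of the $\ell+1$ smallest $|E_z|$, and your count over all of $S$ followed by $|S|\ge a$, are harmless variants of the paper's averaging and its restriction to $a$ parameters.

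There is one slip to fix in Step~1. By Riemann--Roch, $l(A+P)=l(A)+1$ holds exactly when $P$ \emph{is} a base point of $|W-A|$, not when it is not. That is what makes the bad set finite once $l(W-A)\ge1$, and your surrounding sentences already use this correct version.
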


We prove Theorem~\ref{thm:ag-corrlated-agreement} in several steps. Throughout the proof, we fix $e,D_Z,\mathbf{u}_0,\ldots,\mathbf{u}_\ell$ and the set $S$ as above. For each $z\in S$, let $f_z\in \mathcal{L}(G)$ be the unique function satisfying $d(\mathbf{r}(z),f_z(D))\le e.$

We first choose an auxiliary divisor, which may be defined over a finite extension of $\F_q$. This does not change the final conclusion, because the final $\ell+1$ codewords will still come from the original space $\mathcal{L}(G)$.

\begin{lemma}\label{lem:finite-bad-point}
Let $\mathcal X$ be a smooth projective irreducible curve of genus $\g$ over an algebraically closed field $k$. Let $A$ be a divisor on $\mathcal X$ satisfying $l(A)=0$ and $\deg(A)\le \g-2.$ Then there are only finitely many points $P\in \mathcal X(k)$ such that $l(A+P)>0.$ More precisely, the number of such points is at most $2\g-2-\deg(A).$
\end{lemma}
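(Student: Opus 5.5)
We will use Riemann--Roch duality with the canonical divisor. Fix a canonical divisor $W$. For any divisor $B$, Riemann--Roch gives
\[
l(B)=\deg B+1-\g+l(W-B).
\]
Since $l(A)=0$, we get $l(W-A)=\g-1-\deg A\ge 1$, using $\deg A\le\g-2$. For a point $P$, the inequality $l(A+P)\le l(A)+1=1$ together with Riemann--Roch for $A+P$ gives
\[
l(A+P)=\deg A+2-\g+l(W-A-P).
\]
So $l(A+P)>0$ holds exactly when $l(W-A-P)\ge \g-1-\deg A=l(W-A)$. Because $\mathcal L(W-A-P)\subseteq\mathcal L(W-A)$ always has codimension at most one, this is equivalent to $\mathcal L(W-A-P)=\mathcal L(W-A)$. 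In words, $P$ is bad precisely when $P$ is a base point of the complete linear system $|W-A|$.

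Next we count the base points. Choose a nonzero $g\in\mathcal L(W-A)$, which exists because $l(W-A)\ge1$. The divisor $E:=(g)+W-A$ is effective and has degree $\deg(W-A)=2\g-2-\deg A$. If $P$ is a base point, then $g\in\mathcal L(W-A-P)$, so $v_P(g)\ge -v_P(W-A)+1$, that is, $P\le E$. Hence every bad point lies in the support of $E$. Since $k$ is algebraically closed, every point has degree one, so the support of $E$ contains at most $\deg E=2\g-2-\deg A$ points. This gives the stated bound, which is finite.

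The only delicate step is the first one: justifying $l(A+P)\le 1$ and making the dimension bookkeeping produce an equivalence rather than a one-directional implication. In fact only the implication \emph{bad $\Rightarrow$ base point} is needed, and that direction follows directly. If $l(A+P)\ge1$, Riemann--Roch gives $l(W-A-P)\ge \g-1-\deg A=l(W-A)$, and combined with the inclusion $\mathcal L(W-A-P)\subseteq\mathcal L(W-A)$ this forces equality.

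A minor point is the degenerate case $\g=0$. There the hypothesis forces $\deg A\le -2$, so $\deg(A+P)<0$ and no point is bad, which is consistent with the bound $-2-\deg A\ge 0$. This case is also covered by the general argument, since it never used $\g>0$.
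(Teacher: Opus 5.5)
Your proof is correct and is essentially the paper's argument: Riemann--Roch turns $l(A+P)>0$ into $\mathcal L(W-A-P)=\mathcal L(W-A)$, so every bad point lies in the support of the effective divisor $(g)+W-A$ of degree $2\g-2-\deg A$ attached to a fixed nonzero $g\in\mathcal L(W-A)$. The only cosmetic difference is that you read $l(W-A-P)\ge l(W-A)$ directly off Riemann--Roch for $A+P$, whereas the paper first subtracts the Riemann--Roch identities for $A$ and $A+P$ to get $l(A+P)-l(A)\in\{0,1\}$.
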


\begin{proof} 
Fix a canonical divisor $K_{\mathcal X}$ on $\mathcal X$. By the Riemann--Roch theorem,
\[
l(K_{\mathcal X}-A)=l(A)-\deg(A)-1+\g=\g-1-\deg(A).
\]
Since $\deg(A)\le \g-2$, we have $l(K_{\mathcal X}-A)\ge 1.$ We may therefore fix a nonzero function $f\in \mathcal L(K_{\mathcal X}-A).$

Now let $P\in \mathcal X(k)$. Applying the Riemann--Roch theorem to $A$ and $A+P$, respectively, gives $l(A)-l(K_{\mathcal X}-A)=\deg(A)+1-\g$ and $l(A+P)-l(K_{\mathcal X}-A-P)=\deg(A)+2-\g.$ Subtracting the first equality from the second, we obtain
\[
l(A+P)-l(A)=1+l(K_{\mathcal X}-A-P)-l(K_{\mathcal X}-A).
\]

Since $\mathcal L(K_{\mathcal X}-A-P)\subseteq \mathcal L(K_{\mathcal X}-A),$ and subtracting one point can decrease the dimension of a Riemann--Roch space by at most one, we have
\[
l(K_{\mathcal X}-A)-1\le l(K_{\mathcal X}-A-P)\le l(K_{\mathcal X}-A).
\]
Consequently, $l(A+P)-l(A)\in\{0,1\}.$

Suppose now that $P$ satisfies $l(A+P)>0$. Since $l(A)=0$, it follows that $l(A+P)=1.$ The equality above then implies $l(K_{\mathcal X}-A-P)=l(K_{\mathcal X}-A).$ Together with the inclusion $\mathcal L(K_{\mathcal X}-A-P)\subseteq \mathcal L(K_{\mathcal X}-A),$ this gives $\mathcal L(K_{\mathcal X}-A-P)=\mathcal L(K_{\mathcal X}-A).$ In particular, the fixed nonzero function $f$ also belongs to $\mathcal L(K_{\mathcal X}-A-P)$. Hence, $\mathrm{div}(f)+K_{\mathcal X}-A-P\ge 0.$ Therefore, $P$ belongs to the support of the effective divisor $E_f:=\mathrm{div}(f)+K_{\mathcal X}-A.$

Thus, every point $P$ satisfying $l(A+P)>0$ lies in $\mathrm{supp}(E_f)$. Since $E_f$ is an effective divisor, its support is finite. Moreover, $\deg(E_f)=\deg(K_{\mathcal X}-A)=2\g-2-\deg(A),$ so the number of such points is at most $2\g-2-\deg(A).$
\end{proof}

\begin{lemma}\label{lem:auxiliary-divisor}
Using the above definitions, fix an integer $0\le e<d$, and set $h:=d-e+\g-1.$ Then there exist a finite extension $K/\mathbb{F}_q$ and an effective divisor $H$ on $\mathcal X_K$ of degree $h$ such that $\mathrm{supp}(H)\cap \bigl(\mathrm{supp}(D)\cup\mathrm{supp}(G)\bigr) =\emptyset$ and for every subset $I\subseteq[n]$ with $|I|=n-e$,
\[
\mathcal L_K\left(G+H-\sum_{i\in I}P_i\right)=\{0\}.
\]
\end{lemma}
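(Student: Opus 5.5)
Work over $\overline{\mathbb F}_q$ and build $H$ one point at a time with Lemma~\ref{lem:finite-bad-point}; then descend to a finite field. For each $I\subseteq[n]$ with $|I|=n-e$, write $D_I=\sum_{i\in I}P_i$. The divisor $A_I^{(0)}:=G-D_I$ has degree $m-(n-e)=e-d<0$, so $l_{\overline{\mathbb F}_q}(A_I^{(0)})=0$. We will add points $Q_1,\dots,Q_h$ one at a time, where $h=d-e+\g-1$. Set $H_k=Q_1+\cdots+Q_k$ and maintain the invariant
\[
l_{\overline{\mathbb F}_q}(G+H_k-D_I)=0\quad\text{for all such }I.
\]
The divisor $A_I^{(k)}:=G+H_k-D_I$ has degree $e-d+k$. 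For $k\le h-1$ this is at most $\g-2$, so the degree hypothesis of Lemma~\ref{lem:finite-bad-point} holds at every step where a new point must be chosen.

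\textbf{Inductive step.} Suppose the invariant holds for $H_k$ with $k<h$. By Lemma~\ref{lem:finite-bad-point}, for each of the finitely many $I$ the set of bad points
\[
\{P: l(A_I^{(k)}+P)>0\}
\]
is finite. Hence the union over all $\binom{n}{e}$ sets $I$ is finite. Also exclude the finitely many points $P_1,\dots,P_n,P_\infty$. Since $\mathcal X(\overline{\mathbb F}_q)$ is infinite, some point $Q_{k+1}$ avoids all of these, and the invariant passes to $H_{k+1}$. After $h$ steps we obtain an effective divisor $H=H_h$ of degree $h$ with support disjoint from $\mathrm{supp}(D)\cup\mathrm{supp}(G)$ and $\mathcal L_{\overline{\mathbb F}_q}(G+H-D_I)=0$ for all $I$. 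If $h=0$ (possible only when $\g=0$ and $e=d-1$), take $H=0$ and $K=\mathbb F_q$; this case is already covered by the negative-degree observation above.

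\textbf{Descent to a finite field.} Each $Q_k$ is a point of $\mathcal X(\overline{\mathbb F}_q)$, so it is defined over some finite extension. Let $K$ be a finite extension containing the fields of definition of all the $Q_k$. Then $H$ is a $K$-rational effective divisor on $\mathcal X_K$. Because Riemann--Roch spaces commute with constant field extension, $\mathcal L_{\overline{\mathbb F}_q}(A)=\mathcal L_K(A)\otimes_K\overline{\mathbb F}_q$, so $l_K(A)=l_{\overline{\mathbb F}_q}(A)$ for any $K$-rational divisor $A$. Alternatively, it suffices to note that $\mathcal L_K(A)\subseteq\mathcal L_{\overline{\mathbb F}_q}(A)$. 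Either way, $\mathcal L_K(G+H-D_I)=\{0\}$ for every $I$.

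\textbf{Main obstacle.} The only delicate point is checking that the hypotheses of Lemma~\ref{lem:finite-bad-point} hold at every step. The condition $l(A)=0$ is exactly the inductive invariant. The degree bound $\deg A\le\g-2$ holds because points are added only while $k\le h-1$, which is what forces the choice $h=d-e+\g-1$, making the final degree equal to $\g-1$. One also has to apply the lemma over the algebraically closed field, using the geometric irreducibility of $\mathcal X$ to view $\mathcal X_{\overline{\mathbb F}_q}$ as a smooth irreducible curve of the same genus $\g$.
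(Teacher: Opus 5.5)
Your proof is correct and follows the paper's own argument: you add points over $\overline{\mathbb F}_q$ one at a time, avoiding the finitely many bad points from Lemma~\ref{lem:finite-bad-point} and the supports of $D$ and $G$ (the lemma applies because $\deg(G+H_k-D_I)=e-d+k\le\g-2$ for $k\le h-1$), and then descend to a finite extension $K$ via $\mathcal L_K(A)\subseteq\mathcal L_{\overline{\mathbb F}_q}(A)$. The only cosmetic difference is that the paper also keeps the $Q_k$ pairwise distinct, which the statement does not require.
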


\begin{proof}
Let $\overline{\mathbb{F}}_q$ be an algebraic closure of $\mathbb{F}_q$, and consider the base-changed curve $\overline{\mathcal X}:=\mathcal X_{\overline{\mathbb{F}}_q}.$ We use the same notation for the divisors and points after base change to $\overline{\mathbb{F}}_q$.

For every subset $I\subseteq[n]$ with $|I|=n-e$, define $D_I:=\sum_{i\in I}P_i.$ We shall choose pairwise distinct points $Q_1,\ldots,Q_h\in\overline{\mathcal X}(\overline{\mathbb{F}}_q)$ inductively. Set $H_0:=0$ and $A_{I,0}:=G-D_I.$ Since $\deg(A_{I,0})=m-(n-e)=e-d<0,$ we have $l_{\overline{\mathbb{F}}_q}(A_{I,0})=0$ for every $I$. Suppose that, for some $0\le t\le h-1$, we have already chosen $Q_1,\ldots,Q_t$. Define
\[
H_t:=Q_1+\cdots+Q_t,\qquad A_{I,t}:=G+H_t-D_I,
\]
and assume inductively that $l_{\overline{\mathbb{F}}_q}(A_{I,t})=0$ for every $I$.

Since $t\le h-1$, we have $\deg(A_{I,t})=m+t-(n-e)=e-d+t\leq \g-2$. Therefore, by Lemma~\ref{lem:finite-bad-point}, for each fixed $I$, there are only finitely many points $P\in\overline{\mathcal X}(\overline{\mathbb{F}}_q)$ such that $l_{\overline{\mathbb{F}}_q}(A_{I,t}+P)>0.$ There are only finitely many subsets $I\subseteq[n]$ of size $n-e$, so the union of all these bad-point sets is finite.

We additionally require $Q_{t+1}$ to avoid $\mathrm{supp}(D)\cup\mathrm{supp}(G) \cup\{Q_1,\ldots,Q_t\}.$ This is again a finite set. Since $\overline{\mathcal X}(\overline{\mathbb{F}}_q)$ is infinite, we may choose a point $Q_{t+1}$ outside all the above finite sets. By its choice, for every $I$, we have $l_{\overline{\mathbb{F}}_q}(A_{I,t}+Q_{t+1})=0.$ Since $A_{I,t+1}=A_{I,t}+Q_{t+1},$ we obtain $l_{\overline{\mathbb{F}}_q}(A_{I,t+1})=0$ for every $I$. This completes the induction step.

After repeating the construction $h$ times, define $H:=Q_1+\cdots+Q_h.$ Then $H$ is an effective divisor satisfying $\deg(H)=h=d-e+\g-1$ and $\mathrm{supp}(H)\cap\bigl(\mathrm{supp}(D)\cup\mathrm{supp}(G)\bigr)=\varnothing.$ Moreover, for every $I\subseteq[n]$ with $|I|=n-e$, $\mathcal L_{\overline{\mathbb{F}}_q}(G+H-D_I)=\{0\}.$ Notice also that $\deg(G+H-D_I)=m+(d-e+\g-1)-(n-e)=\g-1.$

Each point $Q_j$ is defined over a finite extension of $\mathbb{F}_q$. Since there are only finitely many such points, there exists a common finite extension $K/\mathbb{F}_q$ over which all $Q_1,\ldots,Q_h$ are rational. Hence $H$ is an effective divisor on $\mathcal X_K$.

Finally, suppose for contradiction that, for some $I$, $0\neq f\in\mathcal L_K(G+H-D_I).$ Viewing $f$ as a function over $\overline{\mathbb{F}}_q$, we would also have $0\neq f\in\mathcal L_{\overline{\mathbb{F}}_q}(G+H-D_I),$ contradicting the construction above. Therefore, $\mathcal L_K(G+H-D_I)=\{0\}$ for every $I\subseteq[n]$ with $|I|=n-e$.
\end{proof}

In particular, if a function in $\mathcal L_K(G+H)$ vanishes at at least $n-e$ of the evaluation points, then it is zero. We use this consequence of Lemma~\ref{lem:auxiliary-divisor} in the following arguments.

\begin{lemma}\label{lem:interpolation-H}\label{lem:slack-bw-interpolant}
Fix the finite extension $K/\mathbb{F}_q$ and the divisor $H$ given by Lemma~\ref{lem:auxiliary-divisor}. Let $D_Z\ge 1$ be an integer satisfying $(d-2e)D_Z>\ell e.$ Then there exist a nonzero polynomial $A(\cdot,Z)\in\mathcal L_K(H)[Z]$ with $\deg_Z A\le D_Z-1$, and a polynomial $B(\cdot,Z)\in\mathcal L_K(G+H)[Z]$ with $\deg_Z B\le D_Z+\ell-1$, such that, for every $i\in[n]$, we have $A(P_i,Z)r_i(Z)=B(P_i,Z).$
\end{lemma}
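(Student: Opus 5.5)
The plan is to set up the identities $A(P_i,Z)r_i(Z)=B(P_i,Z)$ as a homogeneous linear system over $K$, show by a dimension count that it has a nonzero solution, and then use the vanishing property of $H$ from Lemma~\ref{lem:auxiliary-divisor} to rule out $A=0$.

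\textbf{Setting up the system.} Write $A(\cdot,Z)=\sum_{j=0}^{D_Z-1}A_jZ^j$ with $A_j\in\mathcal L_K(H)$, and $B(\cdot,Z)=\sum_{j=0}^{D_Z+\ell-1}B_jZ^j$ with $B_j\in\mathcal L_K(G+H)$. Fix $K$-bases of $\mathcal L_K(H)$ and $\mathcal L_K(G+H)$, and take the coordinates of the $A_j,B_j$ in these bases as unknowns. There are
\[
N:=D_Z\,l_K(H)+(D_Z+\ell)\,l_K(G+H)
\]
of them.

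Since $\mathrm{supp}(H)$ is disjoint from $\mathrm{supp}(D)\cup\mathrm{supp}(G)$, every function in $\mathcal L_K(G+H)$ is regular at each $P_i$. So evaluation at $P_i$ is a $K$-linear functional on both spaces.

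Each $r_i(Z)\in\mathbb F_q[Z]$ has degree at most $\ell$. Hence, for each $i$, the identity $A(P_i,Z)r_i(Z)=B(P_i,Z)$ compares two polynomials of $Z$-degree at most $D_Z+\ell-1$. Equating coefficients gives at most $D_Z+\ell$ homogeneous $K$-linear equations per coordinate, so at most $n(D_Z+\ell)$ equations in total.

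\textbf{Counting dimensions.} Riemann--Roch gives $l_K(H)\ge \deg H+1-\g=d-e$ and $l_K(G+H)\ge m+h+1-\g=n-e$, where $h=d-e+\g-1$. Therefore
\[
N-n(D_Z+\ell)\ \ge\ D_Z(d-e)+(D_Z+\ell)(n-e)-n(D_Z+\ell)=D_Z(d-2e)-\ell e>0 .
\]
So there are more unknowns than equations, and a nonzero solution $(A,B)$ exists with the stated $Z$-degree bounds.

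\textbf{Showing $A\neq 0$.} This is the only non-routine step. Suppose $A=0$. Then $B(P_i,Z)=0$ for every $i\in[n]$, so each coefficient $B_j\in\mathcal L_K(G+H)$ vanishes at all $n$ evaluation points. Take any $I\subseteq[n]$ with $|I|=n-e$. Since $B_j$ is regular away from $G+H$ and vanishes on $D_I$, it lies in $\mathcal L_K(G+H-D_I)$. By Lemma~\ref{lem:auxiliary-divisor} this space is $\{0\}$, so $B_j=0$ for all $j$. Then $(A,B)=(0,0)$, contradicting nontriviality. Hence $A\neq0$, which proves the lemma.
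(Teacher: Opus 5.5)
Your proposal is correct and follows the paper's proof essentially step for step. It uses the same Riemann--Roch count of unknowns versus equations, giving the surplus $D_Z(d-2e)-\ell e>0$, and the same appeal to $\mathcal L_K(G+H-D_I)=\{0\}$ to force $B=0$ and thereby rule out $A=0$ (your remark that $\mathrm{supp}(H)\cap\mathrm{supp}(D)=\emptyset$ makes evaluation at each $P_i$ well defined is a detail the paper leaves implicit).
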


\begin{proof}
Write
\[
A(\cdot,Z)=\sum_{j=0}^{D_Z-1}A_jZ^j,
\qquad
B(\cdot,Z)=\sum_{j=0}^{D_Z+\ell-1}B_jZ^j,
\] 
where $A_j\in\mathcal L_K(H)$ and $B_j\in\mathcal L_K(G+H).$

Since $\deg(H)=d-e+\g-1,$ the Riemann--Roch theorem gives $l_K(H)\ge \deg(H)+1-\g=d-e.$ Moreover, $\deg(G+H)=m+d-e+\g-1=n-e+\g-1,$ and hence $l_K(G+H)\ge \deg(G+H)+1-\g=n-e.$ Therefore, the number of unknown coefficients is at least $D_Z(d-e)+(D_Z+\ell)(n-e).$

For each $i\in[n]$, the identity $A(P_i,Z)r_i(Z)=B(P_i,Z)$ is an equality of polynomials of degree at most $D_Z+\ell-1$ in $Z$, and therefore gives at most $D_Z+\ell$ homogeneous linear equations over $K$. Thus, the total number of equations is at most $n(D_Z+\ell).$

By the assumption $(d-2e)D_Z>\ell e$, we have
\[
n(D_Z+\ell)=(n-e)(D_Z+\ell)+e(D_Z+\ell)<(n-e)(D_Z+\ell)+(d-e)D_Z.
\]
Hence the number of unknowns is strictly larger than the number of equations, so the interpolation system admits a nonzero solution $(A,B)$.

It remains to prove that $A\neq0$. Suppose, to the contrary, that $A=0$. Then the interpolation identities imply $B(P_i,Z)=0$ for every $i\in [n]$. Consequently, each coefficient $B_j$ vanishes at all the points $P_1,\ldots,P_n$. 

Fix any subset $I\subseteq[n]$ with $|I|=n-e$, and write $D_I:=\sum_{i\in I}P_i.$ Since $B_j\in\mathcal L_K(G+H)$ and $B_j(P_i)=0$ for every $i\in I$, we have $B_j\in\mathcal L_K(G+H-D_I).$ By the defining property of $H$ in Lemma~\ref{lem:auxiliary-divisor}, $\mathcal L_K(G+H-D_I)=\{0\}.$ Thus $B_j=0$ for every $j$, and hence $B=0$, contradicting the fact that $(A,B)$ is a nonzero solution. Therefore $A\neq0$.
\end{proof}

Fix the finite extension $K/\mathbb F_q$ and the effective divisor $H$ given by Lemma~\ref{lem:auxiliary-divisor}, and fix $A(\cdot,Z)$ and $B(\cdot,Z)$ as in Lemma~\ref{lem:interpolation-H}. For every $z\in S$, we regard $f_z$ naturally as an element of $\mathcal L_K(G)$ and write $A_z:=A(\cdot,z)\in\mathcal L_K(H)$ and $B_z:=B(\cdot,z)\in\mathcal L_K(G+H).$

\begin{lemma}\label{lem:vertical-specialization}
Using the above notation, for every $z\in S$, we have $B_z=A_zf_z.$
\end{lemma}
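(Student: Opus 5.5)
The plan is to show that the difference $B_z-A_zf_z$ lies in $\mathcal L_K(G+H)$ and vanishes on an agreement set of size at least $n-e$. The vanishing property of $H$ from Lemma~\ref{lem:auxiliary-divisor} then forces it to be zero.

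\emph{Step 1: membership.} Specializing $Z=z\in\mathbb F_q\subseteq K$ coefficientwise gives $A_z=\sum_j A_jz^j\in\mathcal L_K(H)$ and $B_z\in\mathcal L_K(G+H)$, since Riemann--Roch spaces are $K$-vector spaces. Viewing $f_z\in\mathcal L(G)\subseteq\mathcal L_K(G)$, the product $A_zf_z$ satisfies
\[
(A_zf_z)+G+H=\bigl((A_z)+H\bigr)+\bigl((f_z)+G\bigr)\ge 0 .
\]
Hence $A_zf_z\in\mathcal L_K(G+H)$, and therefore $g_z:=B_z-A_zf_z\in\mathcal L_K(G+H)$.

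\emph{Step 2: vanishing.} Since $d(\mathbf r(z),f_z(D))\le e$, there is a set $I\subseteq[n]$ with $|I|=n-e$ such that $f_z(P_i)=r_i(z)$ for all $i\in I$; if the agreement set is larger, shrink it to size exactly $n-e$. All functions involved are regular at each $P_i$, because $\mathrm{supp}(H)$ and $\mathrm{supp}(G)$ are disjoint from $\mathrm{supp}(D)$. Evaluation at $P_i$ therefore commutes with the specialization $Z=z$. Specializing the interpolation identity $A(P_i,Z)r_i(Z)=B(P_i,Z)$ of Lemma~\ref{lem:interpolation-H} at $Z=z$ gives, for each $i\in I$,
\[
B_z(P_i)=A_z(P_i)\,r_i(z)=A_z(P_i)\,f_z(P_i),
\]
so $g_z(P_i)=0$.

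\emph{Step 3: conclusion.} Since $g_z$ is regular at each $P_i$ and vanishes there, $v_{P_i}(g_z)\ge1$ for $i\in I$. Combined with $(g_z)+G+H\ge0$, and with the $P_i$ not lying in the support of $G+H$, this gives $g_z\in\mathcal L_K(G+H-D_I)$. Lemma~\ref{lem:auxiliary-divisor} says this space is $\{0\}$, so $g_z=0$, that is, $B_z=A_zf_z$.

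No step is a serious obstacle. The only point needing care is justifying that evaluation at $P_i$ commutes with specializing $Z$, and that vanishing at $P_i$ translates into membership in $\mathcal L_K(G+H-D_I)$. Both follow from the disjointness of supports guaranteed by Lemma~\ref{lem:auxiliary-divisor}.
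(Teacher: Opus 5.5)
Your proposal is correct and follows essentially the same route as the paper: you show $B_z-A_zf_z\in\mathcal L_K(G+H)$, observe that it vanishes on an agreement set $I_z$ of size exactly $n-e$ via the specialized interpolation identities, and conclude from $\mathcal L_K(G+H-D_{I_z})=\{0\}$ (Lemma~\ref{lem:auxiliary-divisor}). Your added remarks on regularity at the $P_i$ and on evaluation commuting with the specialization $Z=z$ are correct and only make explicit what the paper uses implicitly.
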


\begin{proof}
Fix $z\in S$. Since $d\bigl(\mathbf r(z),f_z(D)\bigr)\le e,$ there are at least $n-e$ indices $i\in[n]$ such that $r_i(z)=f_z(P_i).$ Choose a subset $I_z\subseteq[n]$ of size $n-e$ consisting only of such indices.

For every $i\in I_z$, the interpolation identities give $B_z(P_i)=A_z(P_i)r_i(z)=A_z(P_i)f_z(P_i).$ Thus, the function $B_z-A_zf_z$ vanishes at every $P_i$ with $i\in I_z$. Moreover, $B_z\in\mathcal L_K(G+H)$ and $A_zf_z\in\mathcal L_K(H+G)=\mathcal L_K(G+H).$ Therefore, $B_z-A_zf_z\in\mathcal L_K\left(G+H-\sum_{i\in I_z}P_i\right).$ By the defining property of $H$ in Lemma~\ref{lem:auxiliary-divisor}, the space is zero. Hence $B_z-A_zf_z=0,$ which proves that $B_z=A_zf_z.$
\end{proof}

For $z\in S$, define the error set $E_z:=\{i\in[n]: r_i(z)\ne f_z(P_i)\}.$ Note that $|E_z|\le e$. Define $T:=\{i\in[n]: A(P_i,Z)\equiv 0\}.$

\begin{lemma}\label{lem:incidence-bound}
Using the notation above, we have $\sum_{z\in S}|E_z\setminus T|\le n(D_Z-1).$
\end{lemma}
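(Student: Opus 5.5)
The plan is a double count of the incidences $(z,i)$ with $z\in S$ and $i\in E_z\setminus T$: first by coordinate and then by parameter. The key claim is that for each fixed $i\notin T$, every $z\in S$ with $i\in E_z$ is a root of the one-variable polynomial $A(P_i,Z)\in K[Z]$.

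To verify the claim, fix $i\notin T$ and $z\in S$ with $i\in E_z$, so that $r_i(z)\ne f_z(P_i)$. Specializing the interpolation identity of Lemma~\ref{lem:interpolation-H} at $Z=z$ gives $A_z(P_i)r_i(z)=B_z(P_i)$. Lemma~\ref{lem:vertical-specialization} gives $B_z=A_zf_z$, so $B_z(P_i)=A_z(P_i)f_z(P_i)$. Subtracting the two equations yields $A_z(P_i)\bigl(r_i(z)-f_z(P_i)\bigr)=0$, and hence $A(P_i,z)=A_z(P_i)=0$.

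Two technical points need checking. First, evaluation at $P_i$ must commute with specialization in $Z$. This holds because $A(\cdot,Z)=\sum_j A_jZ^j$ with $A_j\in\mathcal L_K(H)$, and the support of $H$ is disjoint from $P_i$, so each $A_j$ is regular at $P_i$. Consequently $A(P_i,Z)=\sum_j A_j(P_i)Z^j\in K[Z]$, and substituting $Z=z$ gives $A_z(P_i)$. Second, $B_z$ and $f_z$ are also regular at $P_i$, since their poles lie only in $\mathrm{supp}(G)\cup\mathrm{supp}(H)$.

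Since $i\notin T$, the polynomial $A(P_i,Z)$ is nonzero and has degree at most $D_Z-1$. It therefore has at most $D_Z-1$ roots in $K$, and in particular at most $D_Z-1$ roots in $S\subseteq\mathbb F_q\subseteq K$. So for each $i\notin T$ we get $|\{z\in S: i\in E_z\}|\le D_Z-1$.

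Exchanging the order of summation gives
\[
\sum_{z\in S}|E_z\setminus T|=\sum_{i\in[n]\setminus T}|\{z\in S:i\in E_z\}|\le (n-|T|)(D_Z-1)\le n(D_Z-1).
\]

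The argument is short, and I expect the only point requiring care to be the regularity and commutation issue above: $A(P_i,Z)$ must be a genuine polynomial in $K[Z]$ whose value at $z$ equals $A_z(P_i)$. This is guaranteed by the disjointness of $\mathrm{supp}(H)$ from $\mathrm{supp}(D)$ in Lemma~\ref{lem:auxiliary-divisor}.
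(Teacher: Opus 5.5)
Your proposal is correct and follows essentially the same argument as the paper. You combine the specialized interpolation identity with $B_z=A_zf_z$ to force $A(P_i,z)=0$ for every error incidence, bound the roots of the nonzero polynomial $A(P_i,Z)$ for $i\notin T$ by $D_Z-1$, and double count. Your added remark about regularity at $P_i$, which follows from $\mathrm{supp}(H)\cap\mathrm{supp}(D)=\emptyset$, is a harmless extra justification that the paper leaves implicit.
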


\begin{proof}
Now let $z\in S$ and $i\in E_z$. By Lemmas~\ref{lem:interpolation-H} and~\ref{lem:vertical-specialization}, we obtain $A(P_i,z)r_i(z)=B(P_i,z)=A(P_i,z)f_z(P_i).$ 
This implies that $A(P_i,z)(r_i(z)-f_z(P_i))=0.$

For each $i\in E_z$, we have $r_i(z)\neq f_z(P_i)$. This implies $A(P_i,z)=0.$ If $i\notin T$, then $A(P_i,Z)$ is a nonzero polynomial in $Z$ of degree at most $D_Z-1$. Thus, there are at most $D_Z-1$ distinct $z\in S$ with $A(P_i,z)=0$ for $i\in E_z\setminus T$. Thus, we have
\[
\sum_{z\in S}|E_z\setminus T| = \sum_{i\in[n]\setminus T} \bigl|\{z\in S:i\in E_z\}\bigr|\le(n-|T|)(D_Z-1)\le n(D_Z-1).
\]
\end{proof}

\begin{lemma}\label{lem:clean-collinearity}
If $|S|\ge\ell+1$ and $|S|>\frac{(\ell+1)n(D_Z-1)}{e+1}$, then there exist $v_0,\ldots,v_\ell\in \mathcal{L}(G)$ such that $f_z=\sum_{j=0}^{\ell}v_jz^j$ for every $z\in S$.
\end{lemma}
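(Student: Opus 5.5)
The plan follows the outline in the technique overview. First I select $\ell+1$ good parameters by averaging. By Lemma~\ref{lem:incidence-bound}, $\sum_{z\in S}|E_z\setminus T|\le n(D_Z-1)$.

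Suppose every $(\ell+1)$-subset $\{z_0,\ldots,z_\ell\}\subseteq S$ had $\sum_h|E_{z_h}\setminus T|\ge e+1$. Partition $S$, or rather average over all $(\ell+1)$-subsets. Each $z$ appears in a $\frac{\ell+1}{|S|}$ fraction of them, so the average of $\sum_h|E_{z_h}\setminus T|$ is $\frac{\ell+1}{|S|}\sum_{z}|E_z\setminus T|\le \frac{(\ell+1)n(D_Z-1)}{|S|}<e+1$. Hence some subset $z_0,\ldots,z_\ell$ of distinct elements of $S$ (this needs $|S|\ge\ell+1$) has $E:=\bigcup_h (E_{z_h}\setminus T)$ of size at most $e$.

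Next I define $v_0,\ldots,v_\ell$ by Lagrange interpolation of $f_{z_0},\ldots,f_{z_\ell}$ in $Z$. Since the $z_h\in\mathbb F_q$ and $f_{z_h}\in\mathcal L(G)$, we get $v_j\in\mathcal L(G)$ over $\mathbb F_q$, and $f_{z_h}=\sum_j v_jz_h^j$. Write $V(\cdot,Z):=\sum_j v_jZ^j$.

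For $i\notin T\cup E$, we have $r_i(z_h)=f_{z_h}(P_i)=V(P_i,z_h)$ for all $h$. Both sides are polynomials of degree $\le\ell$ agreeing at $\ell+1$ points, so $r_i(Z)=V(P_i,Z)$. The interpolation identity of Lemma~\ref{lem:interpolation-H} then gives $B(P_i,Z)=A(P_i,Z)V(P_i,Z)$. For $i\in T$ this identity also holds trivially, since $A(P_i,Z)=0$ forces $B(P_i,Z)=A(P_i,Z)r_i(Z)=0$.

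Thus each $Z$-coefficient of $B-AV$ lies in $\mathcal L_K(G+H)$, because $A_jv_k\in\mathcal L_K(H+G)$. Each such coefficient vanishes at all $P_i$ with $i\notin E$, which is at least $n-e$ points. By Lemma~\ref{lem:auxiliary-divisor}, restricted to a subset of size exactly $n-e$, each coefficient is zero, so $B=AV$. Consequently, for every $i\notin T$, $A(P_i,Z)(r_i(Z)-V(P_i,Z))=0$ with $A(P_i,Z)\neq0$ in $K[Z]$, so $r_i(Z)=V(P_i,Z)$ for all $i\notin T$.

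Now fix an arbitrary $z\in S$ and a nonzero coefficient $A_j$ of $A$, which exists by Lemma~\ref{lem:interpolation-H}. Consider $g:=A_j\,(f_z-V(\cdot,z))\in\mathcal L_K(G+H)$, and take any $i$ with $r_i(z)=f_z(P_i)$; there are at least $n-e$ such $i$.
\begin{itemize}
\item If $i\in T$, then $A_j(P_i)=0$, since every coefficient of $A(P_i,Z)$ vanishes.
\item If $i\notin T$, then $f_z(P_i)=r_i(z)=V(P_i,z)$.
\end{itemize}
In either case $g(P_i)=0$, so by the vanishing property of $H$ we get $g=0$. Since the function field is a field and $A_j\ne0$, it follows that $f_z=V(\cdot,z)=\sum_j v_jz^j$.

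The main delicate point is the averaging step. The incidence bound only controls errors outside $T$, so I must work with $E_z\setminus T$ rather than $E_z$ and verify that coordinates in $T$ are harmless in both the global identity and the final step. The latter is exactly why the argument multiplies by a fixed nonzero $A_j$ rather than using $A(\cdot,z)$, which might vanish identically.
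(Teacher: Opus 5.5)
Your proposal is correct and follows the paper's proof essentially step for step. It uses the same averaging over $(\ell+1)$-subsets with the incidence bound on $E_z\setminus T$, the same Lagrange interpolation giving $v_j\in\mathcal L(G)$, the same global identity $B=A\sum_j v_jZ^j$ obtained from the vanishing property of $H$, and the same final trick of multiplying by a fixed nonzero coefficient $A_j$ to cover parameters where $A(\cdot,z)$ vanishes identically.
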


\begin{proof}
By Lemma~\ref{lem:incidence-bound}, averaging over all subsets of $S$ of size $\ell+1$ gives
\[
\begin{aligned}
\frac{1}{\binom{|S|}{\ell+1}}\sum_{\substack{W\subseteq S\\ |W|=\ell+1}}\sum_{z\in W}|E_z\setminus T|=\frac{\ell+1}{|S|}\sum_{z\in S}|E_z\setminus T|\le\frac{(\ell+1)n(D_Z-1)}{|S|}<e+1.
\end{aligned}
\]
Then there are $\ell+1$ distinct elements $z_0,\ldots,z_\ell\in S$ such that $\sum_{h=0}^{\ell}|E_{z_h}\setminus T|\le e.$ Define $v_0,\ldots,v_\ell$ by Lagrange interpolation:
\[
\sum_{j=0}^{\ell}v_jZ^j
:=\sum_{h=0}^{\ell} f_{z_h}
\prod_{\substack{0\le k\le\ell\\ k\ne h}}\frac{Z-z_k}{z_h-z_k}.
\]
Since $z_0,\ldots,z_\ell\in\mathbb F_q$ and $f_{z_0},\ldots,f_{z_\ell}\in\mathcal L(G)$, we have $v_0,\ldots,v_\ell\in\mathcal L(G)$.

We first show that
\begin{equation}\label{eq:ag-affine-interpolant}
B(\cdot,Z)=A(\cdot,Z)\sum_{j=0}^{\ell}v_jZ^j.
\end{equation}
Define $J:=\bigcup_{h=0}^{\ell}(E_{z_h}\setminus T)$, so that $|J|\le e$. If $i\in T$, then $A(P_i,Z)\equiv0$ and the interpolation identities give $B(P_i,Z)\equiv0$. If $i\notin T\cup J$, then $f_{z_h}(P_i)=r_i(z_h)$ for $0\le h\le\ell$. By the definition of $v_0,\ldots,v_\ell$, the polynomials $r_i(Z)$ and $\sum_{j=0}^{\ell}v_j(P_i)Z^j$ agree at the $\ell+1$ distinct parameters $z_0,\ldots,z_\ell$. They have degree at most $\ell$, so they are equal. Applying the interpolation identities, we obtain
\[
B(P_i,Z)-A(P_i,Z)\sum_{j=0}^{\ell}v_j(P_i)Z^j\equiv0
\qquad\text{for every }i\notin J.
\]
Every coefficient of $B(\cdot,Z)-A(\cdot,Z)\sum_{j=0}^{\ell}v_jZ^j$ belongs to $\mathcal L_K(G+H)$ and vanishes at at least $n-e$ evaluation points. By Lemma~\ref{lem:auxiliary-divisor}, every coefficient is zero. This proves~\eqref{eq:ag-affine-interpolant}.

For each $i\notin T$, combining~\eqref{eq:ag-affine-interpolant} with the interpolation identities gives
\[
A(P_i,Z)\left(r_i(Z)-\sum_{j=0}^{\ell}v_j(P_i)Z^j\right)\equiv0.
\]
Since $A(P_i,Z)\not\equiv0$, it follows that $r_i(Z)=\sum_{j=0}^{\ell}v_j(P_i)Z^j$ for every $i\notin T$.

Fix a nonzero coefficient $A_j$ of $A(\cdot,Z)$, which exists since $A\neq0$. Fix $z\in S$ and let $g_z:=\sum_{h=0}^{\ell}v_hz^h\in\mathcal L(G)$. At every coordinate where $f_z(P_i)=r_i(z)$, we have $A_j(P_i)(f_z(P_i)-g_z(P_i))=0$. This follows from $A_j(P_i)=0$ when $i\in T$, and from $r_i(z)=g_z(P_i)$ otherwise. Thus $A_j(f_z-g_z)\in\mathcal L_K(G+H)$ vanishes at at least $n-e$ evaluation points. Lemma~\ref{lem:auxiliary-divisor} gives $A_j(f_z-g_z)=0$. Since $A_j\neq0$, we obtain $f_z=g_z=\sum_{h=0}^{\ell}v_hz^h$ for every $z\in S$. This argument also applies when $A_z=0$.
\end{proof}

\begin{lemma}[\cite{goyal2026optimal} for $\ell=1$]\label{lem:collinearity-combinatorial}
Let $a\ge\ell+1$ and $\mathbf{c}_0,\ldots,\mathbf{c}_\ell\in\mathbb F_q^n$. Suppose there exist pairwise distinct elements $z_1,\ldots,z_a\in\mathbb F_q$ such that $d(\mathbf{r}(z_h),\sum_{j=0}^{\ell}\mathbf{c}_jz_h^j)\le e$ for every $h\in[a]$. Then $d([\mathbf{u}_0,\ldots,\mathbf{u}_\ell],[\mathbf{c}_0,\ldots,\mathbf{c}_\ell])\le\frac{a}{a-\ell}e.$
\end{lemma}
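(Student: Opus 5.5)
The proof is a double count of coordinate--parameter incidences, carried out coordinatewise. Write $\mathbf{g}(Z):=\sum_{j=0}^{\ell}\mathbf c_jZ^j$, so $g_i(Z)=\sum_{j=0}^{\ell}c_{j,i}Z^j$. Let $J\subseteq[n]$ be the set of coordinates $i$ with $(u_{0,i},\ldots,u_{\ell,i})\ne(c_{0,i},\ldots,c_{\ell,i})$. Then $|J|=d([\mathbf u_0,\ldots,\mathbf u_\ell],[\mathbf c_0,\ldots,\mathbf c_\ell])$, and $|J|$ is the quantity to bound.

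For $i\in J$, the difference $r_i(Z)-g_i(Z)$ is a nonzero polynomial in $Z$ of degree at most $\ell$. It therefore vanishes at no more than $\ell$ of the distinct parameters $z_1,\ldots,z_a$. Equivalently, for each $i\in J$ there are at least $a-\ell$ indices $h\in[a]$ with $r_i(z_h)\ne g_i(z_h)$, that is, with $i$ an error coordinate of $\mathbf r(z_h)$ relative to $\mathbf g(z_h)$.

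Now count the pairs $(i,h)\in[n]\times[a]$ with $r_i(z_h)\ne g_i(z_h)$ in two ways. Counting by $h$: the hypothesis $d(\mathbf r(z_h),\mathbf g(z_h))\le e$ gives at most $e$ such $i$ for each $h$, so the total is at most $ae$. Counting by $i$ and keeping only $i\in J$: the total is at least $(a-\ell)|J|$. Hence $(a-\ell)|J|\le ae$. Since $a\ge\ell+1$, we have $a-\ell>0$, and dividing gives $|J|\le \frac{a}{a-\ell}e$, as claimed.

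The argument is elementary. The only points to check are that a coordinate outside $J$ contributes no error for any $z_h$, which is not needed for the inequality anyway, and that $a-\ell>0$ so the division is legitimate; I do not expect any real obstacle.
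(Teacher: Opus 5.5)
Your proposal is correct and follows essentially the same argument as the paper. At each disagreeing coordinate, the nonzero difference polynomial of degree at most $\ell$ vanishes at no more than $\ell$ of the $a$ parameters, and double counting the error incidences gives $(a-\ell)\,|J|\le ae$.
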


\begin{proof}
Let $w=d([\mathbf{u}_0,\ldots,\mathbf{u}_\ell],[\mathbf{c}_0,\ldots,\mathbf{c}_\ell])$. At each of these $w$ coordinates, $\sum_{j=0}^{\ell}(u_{j,i}-c_{j,i})Z^j$ is a nonzero polynomial of degree at most $\ell$, so it vanishes at at most $\ell$ of the $a$ parameters. Counting errors over these parameters gives $w(a-\ell)\le ae$, which proves the lemma.
\end{proof}

\begin{proof}[Proof of Theorem~\ref{thm:ag-corrlated-agreement}]
By the assumption on $|S|$, we have $|S|>\frac{(\ell+1)n(D_Z-1)}{e+1}$ and $|S|\ge a\ge\ell+1$. Thus Lemma~\ref{lem:clean-collinearity} gives $v_0,\ldots,v_\ell\in\mathcal L(G)$ such that $f_z=\sum_{j=0}^{\ell}v_jz^j$ for every $z\in S$. Consequently, $d(\mathbf r(z),\sum_{j=0}^{\ell}v_j(D)z^j)=d(\mathbf r(z),f_z(D))\le e$ for all $z\in S$. Applying Lemma~\ref{lem:collinearity-combinatorial} to any $a$ distinct elements of $S$ with $\mathbf c_j=v_j(D)$ for $0\le j\le\ell$, we obtain $d([\mathbf u_0,\ldots,\mathbf u_\ell],[v_0(D),\ldots,v_\ell(D)])\le\frac{a}{a-\ell}e.$ This proves the theorem.
\end{proof}

\begin{cor}\label{cor:constant}
Let $e$ be an integer with $0\le e<d/2$, let $0<\varepsilon<1$, and set
\begin{equation}\label{eq:ag-unique-parameters}
D_Z:=\left\lfloor\frac{\ell e}{d-2e}\right\rfloor+1,
\qquad
M_e:=\left\lfloor\frac{(\ell+1)n(D_Z-1)}{e+1}\right\rfloor.
\end{equation}
Writing $\delta=d/n$ and $\gamma=e/n$, we have $M_e<\ell(\ell+1)/(\delta-2\gamma)$. If $|S|\ge\max\{M_e+1,\,\ell+\lceil\ell/\varepsilon\rceil\},$ then there exist $v_0,\ldots,v_\ell\in\mathcal L(G)$ such that 
\[
d\bigl([\mathbf u_0,\ldots,\mathbf u_\ell],[v_0(D),\ldots,v_\ell(D)]\bigr)\le (1+\varepsilon)e.
\]

Consequently, $\mathcal C$ has $(\ell,e,\sigma_\varepsilon,\varepsilon)$-CA and $(\ell,e,\sigma_\varepsilon,\varepsilon)$-PG where $\sigma_\varepsilon:=\frac{\max\{M_e,\,\ell+\lceil\ell/\varepsilon\rceil-1\}}{q}=O\!\left( \frac{\max\{\ell(\ell+1)/(\delta-2\gamma),\,\ell/\varepsilon\}}{q}\right).$ For fixed $\ell,\delta,\gamma,\varepsilon$ with $\delta>2\gamma$, this bound is independent of $n$.
\end{cor}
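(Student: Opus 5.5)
The plan is to derive the corollary directly from Theorem~\ref{thm:ag-corrlated-agreement} by fixing the two free parameters and verifying the stated bounds.

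First, the choice of $D_Z$. With $D_Z=\lfloor \ell e/(d-2e)\rfloor+1$ we have $D_Z>\ell e/(d-2e)$, hence $(d-2e)D_Z>\ell e$, so $D_Z$ is admissible in Theorem~\ref{thm:ag-corrlated-agreement}; since $d-2e>0$, also $D_Z\ge1$.

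Next, the bound $M_e<\ell(\ell+1)/(\delta-2\gamma)$. We have $D_Z-1\le \ell e/(d-2e)$, so
\[
M_e\le \frac{(\ell+1)n\,\ell e}{(e+1)(d-2e)}<\frac{\ell(\ell+1)n}{d-2e}=\frac{\ell(\ell+1)}{\delta-2\gamma},
\]
using $e/(e+1)<1$. The edge case $e=0$ gives $D_Z=1$ and $M_e=0$, which is consistent.

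Then set $a:=\ell+\lceil \ell/\varepsilon\rceil\ge \ell+1$. The hypothesis $|S|\ge\max\{M_e+1,a\}$ is exactly the hypothesis of Theorem~\ref{thm:ag-corrlated-agreement}. The theorem gives $v_j\in\mathcal L(G)$ with interleaved distance at most $\frac{a}{a-\ell}e$. Since $a-\ell=\lceil\ell/\varepsilon\rceil\ge \ell/\varepsilon$, we get $\frac{a}{a-\ell}=1+\frac{\ell}{a-\ell}\le 1+\varepsilon$.

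For the CA/PG consequence, suppose $\Pr_z[d(\mathbf r(z),\mathcal C)\le e]>\sigma_\varepsilon$. Then $|S|>\max\{M_e,a-1\}$, so $|S|\ge\max\{M_e+1,a\}$ because $|S|$ is an integer. This yields CA with slack $\varepsilon$, and PG follows by Remark~\ref{rmk:mca-ca-pg}.

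For the $O(\cdot)$ estimate, note $\ell+\lceil\ell/\varepsilon\rceil-1\le \ell+\ell/\varepsilon\le 2\ell/\varepsilon$ since $\varepsilon<1$. Combined with the bound on $M_e$, this gives the stated expression, which depends only on $\ell,\delta,\gamma,\varepsilon$ and is therefore independent of $n$.

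There is no real obstacle. The only care needed is the strictness in the $M_e$ bound, which comes from $e/(e+1)<1$, and the integrality step converting ``$>\sigma_\varepsilon q$'' into the theorem's hypothesis.
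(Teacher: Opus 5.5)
Your proposal is correct and follows the paper's proof essentially step for step: you check that $D_Z$ satisfies $(d-2e)D_Z>\ell e$, apply Theorem~\ref{thm:ag-corrlated-agreement} with $a=\ell+\lceil\ell/\varepsilon\rceil$, bound $M_e$ by the same chain of inequalities, use integrality of $|S|$ for the soundness threshold, and invoke Remark~\ref{rmk:mca-ca-pg} for PG. Beyond the paper, you also spell out the edge case $e=0$ and the explicit estimate $\ell+\lceil\ell/\varepsilon\rceil-1\le 2\ell/\varepsilon$ behind the $O(\cdot)$ bound, which the paper leaves implicit.
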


\begin{proof}
The choice of $D_Z$ gives $(d-2e)D_Z>\ell e$. Apply Theorem~\ref{thm:ag-corrlated-agreement} with $a=\ell+\lceil\ell/\varepsilon\rceil$, for which $a/(a-\ell)\le1+\varepsilon$. The soundness bound follows from the integer threshold on $|S|$. Moreover,
\[
M_e\le\frac{(\ell+1)n}{e+1}\left\lfloor\frac{\ell e}{d-2e}\right\rfloor
\le\frac{\ell(\ell+1)ne}{(e+1)(d-2e)}<\frac{\ell(\ell+1)n}{d-2e}
=\frac{\ell(\ell+1)}{\delta-2\gamma}.
\]
The corresponding proximity gap statement follows from correlated agreement, as in Remark~\ref{rmk:mca-ca-pg}.
\end{proof}

\begin{rmk}\label{rmk:tower}
For the Garcia-Stichtenoth tower in Section~\ref{sec:GS-tower} over a fixed square field $\mathbb F_q$, take $n=n_j$ and $P_\infty=P_{\infty,j}$. Lemma~\ref{lem:GS-tower} gives $\g/n=1/(\sqrt q-1)+o(1)$ as $n\to\infty$. For fixed $2/(\sqrt q-1)<\rho<1$, taking $m=\lfloor\rho n\rfloor$ gives $m\ge2\g-1$ for all sufficiently large $n$ and rate tending to $\rho-1/(\sqrt q-1)$. For any fixed $0<\gamma<(1-\rho)/2$, take $e=\lfloor\gamma n\rfloor$. Then $(d-2e)/n$ is bounded away from zero, and for fixed $\ell$ and $\varepsilon\in(0,1)$, Corollary~\ref{cor:constant} gives a soundness bound independent of $n$. No additional genus term is required in this radius condition.
\end{rmk}

%\begin{rmk}\label{rmk:tower}
%For one-point codes, the above result is conditional on the existence of a curve having at least $n+1$ rational points. If one wishes to obtain an asymptotic family over a fixed field, one may take $q$ to be a square and use a tower of curves attaining the Drinfeld-Vl\u{a}du\c{t} bound~\cite{GarciaStichtenoth95}. More precisely, for each curve $\mathcal X$ in the tower, set $n=\#\mathcal X(\mathbb F_q)-1$ and reserve one rational point for $P_\infty$. Then, as $n\to\infty$ along the tower, $\g/n=1/(\sqrt q-1)+o(1)$. For fixed $\rho>2/(\sqrt q-1)$, taking $m=\lfloor\rho n\rfloor$ gives $m\ge2\g-1$ for all sufficiently large $n$ and rate tending to $\rho-1/(\sqrt q-1)$. For any fixed $0<\gamma<(1-\rho)/2$, take $e=\lfloor\gamma n\rfloor$. Then $(d-2e)/n$ is bounded away from zero, and for fixed $\ell$, Corollary~\ref{cor:constant} gives a soundness bound independent of $n$. No additional genus term is required in this radius condition.
%\end{rmk}

\begin{cor}\label{cor:without-slack}
Let $e$ be an integer with $0\le e<d/2$, and let $D_Z$ and $M_e$ be as in~\eqref{eq:ag-unique-parameters}. If $|S|\ge \max\{M_e+1,\ \ell(e+1)+1\},$ then there exist $v_0,\ldots,v_\ell\in\mathcal L(G)$ such that
\[
    d\bigl([\mathbf u_0,\ldots,\mathbf u_\ell],[v_0(D),\ldots,v_\ell(D)]\bigr)\le e.
\]
In particular, $\mathcal C$ has $(\ell,e,\sigma,0)$-CA and $(\ell,e,\sigma,0)$-PG where $\sigma:=\frac{\max\{M_e,\ell(e+1)\}}{q}.$

When $d=\delta n$ and $e=\gamma n$ with fixed $\ell$ and $\delta>2\gamma>0$ and sufficiently large $n$, this bound becomes $\min\{1,\ell(e+1)/q\}$.
\end{cor}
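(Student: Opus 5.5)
This is a direct specialization of Theorem~\ref{thm:ag-corrlated-agreement} with $a:=\ell(e+1)+1$, the choice that makes the slack factor $a/(a-\ell)$ collapse to an integer bound.

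First I would check that $D_Z$ from~\eqref{eq:ag-unique-parameters} satisfies $(d-2e)D_Z>\ell e$. Writing $D_Z=\lfloor \ell e/(d-2e)\rfloor+1$, we have $D_Z>\ell e/(d-2e)$, and $d-2e>0$, so this is immediate. Next, $a=\ell(e+1)+1\ge \ell+1$. The hypothesis $|S|\ge\max\{M_e+1,a\}$ is then exactly the threshold in Theorem~\ref{thm:ag-corrlated-agreement}. The theorem yields $v_0,\ldots,v_\ell\in\mathcal L(G)$ with
\[
d\bigl([\mathbf u_0,\ldots,\mathbf u_\ell],[v_0(D),\ldots,v_\ell(D)]\bigr)\le \frac{a}{a-\ell}e=\frac{\ell(e+1)+1}{\ell e+1}e.
\]
The key observation is that this quantity is strictly less than $e+1$:
\[
e(\ell e+\ell+1)<(e+1)(\ell e+1)\iff \ell e^2+\ell e+e<\ell e^2+\ell e+e+1.
\]
Since the distance is an integer, it is at most $e$. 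The case $e=0$ is trivial here, as the bound is already $0$.

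For the CA statement, suppose $\Pr_z[d(\mathbf r(z),\mathcal C)\le e]>\sigma=\max\{M_e,\ell(e+1)\}/q$. Then $|S|>\max\{M_e,\ell(e+1)\}$, so by integrality $|S|\ge\max\{M_e+1,\ell(e+1)+1\}$. The previous paragraph then gives $(\ell,e,\sigma,0)$-CA, and PG follows from Remark~\ref{rmk:mca-ca-pg}.

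For the asymptotic claim, Corollary~\ref{cor:constant} gives $M_e<\ell(\ell+1)/(\delta-2\gamma)$, a constant independent of $n$. Meanwhile $\ell(e+1)=\ell(\gamma n+1)\to\infty$, so for large $n$ the maximum equals $\ell(e+1)$. Because $\sigma$ is a probability bound, it may be capped at $1$, giving $\min\{1,\ell(e+1)/q\}$. The only delicate point in the whole argument is the integer rounding in the second paragraph, and that step is elementary.
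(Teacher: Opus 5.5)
Your proposal is correct and follows essentially the same route as the paper: you apply Theorem~\ref{thm:ag-corrlated-agreement} with $a=\ell(e+1)+1$, show $\frac{a}{a-\ell}e<e+1$ and conclude by integrality, read off the soundness from the integer threshold on $|S|$, and obtain the asymptotic claim from the bound $M_e<\ell(\ell+1)/(\delta-2\gamma)$ of Corollary~\ref{cor:constant}. Your explicit check that $(d-2e)D_Z>\ell e$ holds for the chosen $D_Z$ is a small detail the paper leaves implicit.
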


\begin{proof}
Apply Theorem~\ref{thm:ag-corrlated-agreement} with $a=\ell(e+1)+1$. Then
\[
d([\mathbf u_0,\ldots,\mathbf u_\ell],[v_0(D),\ldots,v_\ell(D)])
\le\frac{\ell(e+1)+1}{\ell e+1}e=e+\frac{\ell e}{\ell e+1}<e+1.
\]
Since the distance is an integer, it is at most $e$. The soundness bound follows from the integer threshold on $|S|$. For fixed $\ell$ and $\delta>2\gamma>0$, Corollary~\ref{cor:constant} gives $M_e<\ell(\ell+1)/(\delta-2\gamma)$, whereas $\ell(e+1)$ grows with $n$. This proves the final assertion.
\end{proof}

%Taking $\ell=1$ in the preceding results gives the corresponding statements for affine lines.

%\begin{rmk}
%The extension field $K$ is used only for the auxiliary divisor and interpolation. The parameter $z$ still range over $\mathbb F_q$, and the functions $v_0,v_1$ are obtained by $\mathbb F_q$-linear combinations of $f_{z_1},f_{z_2}\in\mathcal L(G)$. Thus the soundness denominator remains $q$, with no dependence on $[K:\mathbb F_q]$.
%\end{rmk}

\section{Mutual Correlated Agreement up to the Johnson bound}
\begin{comment}
Throughout this section, let $\mathcal X$ be a smooth, projective, geometrically irreducible curve over $\mathbb F_q$ of genus $\g$ and let $\mathbb F_q(\mathcal X)$ be its function field. We write $\mathcal X(\mathbb F_q)$ for the set of $\mathbb F_q$-rational points of $\mathcal X$. For a divisor $A$ on $\mathcal{X}$, its Riemann-Roch space is
\[
    \mathcal{L}(A)
    = \{0\}\cup\{f\in \mathbb F_q(\mathcal X)^{*} : (f)+A\geq 0\},
    \qquad
    l(A)=\dim_{\mathbb{F}_q}\mathcal{L}(A),
\]
where $(f)$ denotes the principal divisor of $f$.

Let $P_1,\ldots,P_n,P_{\infty}\in\mathcal{X}(\mathbb{F}_q)$ be pairwise distinct rational points and set $D=P_1+\cdots+P_n,\, G=mP_{\infty}$ where $m$ is a positive integer satisfying $2\g-1\leq m<n$. We consider the one-point algebraic-geometric code
\[
    \mathcal{C}=\cC_{\mathcal{L}}(D,G) =\{f(D):f\in\mathcal{L}(G)\}\subseteq\mathbb{F}_q^n,
\]
where $f(D):=(f(P_1),\ldots,f(P_n))$. Together with the Riemann-Roch theorem, this gives $\dim_{\mathbb{F}_q}\mathcal{C}=m+1-\g$ and $d(\mathcal{C})\geq n-m$, where $n-m$ is the designed minimum distance of $\mathcal{C}$. Write $\rho=m/n$. In this section, we establish mutual correlated agreement (MCA) for $\mathcal C$ up to the Johnson radius $1-\sqrt\rho$.

\end{comment}

We use the one-point algebraic geometric code $\mathcal C=\cC_{\mathcal L}(D,G)$ and the notation from Section~\ref{subsec:ag-codes}. Here we assume $m\geq1$ and $m\geq2\g-1$. Fix an integer $\ell\ge1$. We establish mutual correlated agreement for polynomial curves of degree at most $\ell$ up to the Johnson radius $1-\sqrt\rho$ where $\rho=m/n$.

We first state the main theorem and then prove it in several steps. Fix a real number $0<\eta<1-\sqrt{\rho}$ and set
\[
    s := \left\lceil\frac{2\sqrt{\rho}}{\eta}\right\rceil+2,\,
    N := \left\lceil(s+1)\sqrt{mn}\right\rceil+\g-1,\,
    D_Y := \left\lfloor\frac{N-\g}{m}\right\rfloor,\,
    D_Z := \ell sD_Y.
\]

%For this evaluation code, we write $(z,f,A)$ for the $e$ agreement witness $(z,f(D),A)$ from Definition~\ref{def:mutual-correlated-agreement} where $f\in\mathcal L(G)$. It is \emph{MCA-bad} precisely when no $v_0,v_1\in\mathcal L(G)$ satisfy $v_0(P_i)=u_{0,i}$ and $v_1(P_i)=u_{1,i}$ for every $i\in A$.

%For fixed $\mathbf u_0,\mathbf u_1\in\mathbb F_q^n$ and an integer $e\geq0$, an \emph{$e$ agreement witness} is a triple $(z,f,A)$ with $z\in\mathbb F_q$, $f\in\mathcal L(G)$ and $A\subseteq[n]$, such that $|A|\geq n-e$ and $f(P_i)=u_{0,i}+zu_{1,i}$ for every $i\in A$. We call this witness \emph{MCA-bad} if there are no $v_0,v_1\in\mathcal L(G)$ satisfying $v_0(P_i)=u_{0,i}$ and $v_1(P_i)=u_{1,i}$ for all $i\in A$. Thus, the two functions must agree with the received words on the given set $A$.

\begin{thm}
\label{thm:ag-johnson}
Let $e$ be an integer satisfying $0\leq e\leq(1-\sqrt\rho-\eta)n$, and fix $\mathbf u_0,\ldots,\mathbf u_\ell\in\mathbb F_q^n$. Let $S\subseteq\mathbb F_q$ be the set of parameters $z$ for which there exist $f\in\mathcal L(G)$ and $A\subseteq[n]$ such that $(z,f,A)$ is an \emph{MCA-bad} $e$ agreement witness. Then
\[
    |S|\leq B_e:=(2D_Y+1)D_Z+\frac{n-m}{n-m-e}\bigl(4mD_YD_Z+\ell D_Y\bigr)+\ell D_Y(e+1).
\]
Consequently, $\mathcal C$ has $(\ell,e,\sigma)$-MCA with $\sigma=\min\{1,B_e/q\}$.
\end{thm}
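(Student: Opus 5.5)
The proof follows the outline in the technique overview: a Guruswami--Sudan-type interpolation with $Z$ kept formal, then factorization, root lifting at an auxiliary point, and a grouping/counting argument. Each of the four terms of $B_e$ is charged to one of these stages.

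\emph{Step 1: interpolation.} Construct a nonzero $Q(Y,Z)=\sum_{a\le D_Y,\,b\le D_Z}Q_{a,b}Y^aZ^b$ with $Q_{a,b}\in\mathcal L\bigl((N-am)P_\infty\bigr)$ such that $Q(\cdot,Z)$ vanishes with multiplicity $s$ at every $(P_i,r_i(Z))$. The substitution $Y\mapsto r_i(Z)$ raises the $Z$-degree by at most $\ell$ per power of $Y$, which is why $D_Z=\ell sD_Y$ suffices. We count unknowns via Riemann--Roch, $\sum_a(N-am+1-\g)(D_Z+1)$, against constraints $n\binom{s+1}{2}$ times the $Z$-degree. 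The choices $N\approx(s+1)\sqrt{mn}+\g-1$ and $D_Y=\lfloor (N-\g)/m\rfloor$ should make the count work, as in the RS version of~\cite{BCHKS25}.

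For each $z\in S$ with witness $(f_z,A_z)$, the function $Q(f_z,z)$ lies in $\mathcal L(NP_\infty)$ and vanishes to order $s$ at $|A_z|\ge n-e$ points. Since $s(n-e)\ge s(\sqrt\rho+\eta)n>N$ by the choice of $s$, we get $Q(f_z,z)=0$. Parameters with $Q(\cdot,z)\equiv 0$ are roots of the $Z$-content of $Q$, so there are at most $D_Z$ of them.

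\emph{Step 2: factorization and local recovery.} Factor $Q$ over $\overline{\mathbb F}_q(\mathcal X)$ as $C(Z)\prod_\nu R_\nu(Y^{\mu_\nu},Z)^{e_\nu}$ with $\partial_YR_\nu\ne0$. Choose an auxiliary point $P_0$, away from the evaluation points and from the zeros and poles of the relevant leading coefficients and discriminants, so that each $R_\nu(\cdot,Z)$ specialized at $P_0$ is separable in $Y$. We exclude the $z$ at which $\mathrm{Res}_Y(R_\nu,\partial_YR_\nu)$ evaluated at $P_0$, or the leading coefficient, vanishes; there are $O(D_YD_Z)$ such $z$, which is the source of the $2D_YD_Z$-type term.

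For the remaining $z$ assigned to $\nu$, Hensel lifting in the local parameter $t$ at $P_0$ expresses $[t^j]f_z^{\mu_\nu}$ as $\Phi_j(y_z,z)/\Delta(z)^{2j-1}$ with controlled degrees, where $y_z=f_z(P_0)^{\mu_\nu}$. Since $f_z^{\mu_\nu}\in\mathcal L(\mu_\nu mP_\infty)$, the first $\mu_\nu m+1$ coefficients determine $f_z(P_i)^{\mu_\nu}$ linearly. Clearing denominators gives $F_i(Y,Z)$ with $F_i(y_z,z)=0$ if and only if $f_z(P_i)=r_i(z)$; this uses that Frobenius is injective, so agreement of $\mu_\nu$-th powers is equivalent to agreement. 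The main obstacle is keeping $\deg_Z F_i$ at $O(mD_Z)$ rather than $O(\mu_\nu m D_Z)$ despite the exponent $\mu_\nu$. My first attempt is to work with $Y^{\mu_\nu}$ as the variable throughout, so that $\mu_\nu\deg_Y R_\nu\le D_Y$ absorbs the factor $\mu_\nu$.

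\emph{Step 3: grouping and counting.} For each $\nu$ and each of the at most $\deg_YR_\nu$ $\overline{\mathbb F}_q(Z)$-roots, compare $F_i$ against $R_\nu$ using $\mathrm{Res}_Y$ and Lemma~\ref{lem:ag-resultant-properties}. Either $R_\nu\mid F_i$, so every parameter in the class agrees at coordinate $i$, or at most $\deg_Z\mathrm{Res}\lesssim 4mD_Z$ parameters do. If a class has at most $m$ common coordinates, each member still agrees on at least $n-e-m$ non-common coordinates. Double counting then bounds the class size by $\frac{n}{n-m-e}\cdot 4mD_Z$ per unit of $Y$-degree, and summing over classes gives the middle term of $B_e$.

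If instead a class has more than $m$ common coordinates $I$, Lagrange-interpolate $f_z$ from $\ell+1$ members to obtain $v_j\in\mathcal L(G)$. The difference $f_z-\sum_j v_jz^j$ vanishes on $I$ with $|I|>m$, hence is zero, so $f_z=\sum_j v_jz^j$ throughout the class. Every member is MCA-bad, so $A_z$ must contain a coordinate where $(u_{j,i})_j\ne(v_j(P_i))_j$; yet each such coordinate is an agreement coordinate for at most $\ell$ parameters. Combined with Lemma~\ref{lem:collinearity-combinatorial}-style counting, this caps the class at $\ell(e+1)$ per unit of $Y$-degree. Summing all contributions yields $B_e$.
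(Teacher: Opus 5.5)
Your overall route is the paper's. It uses formal-$Z$ Guruswami--Sudan interpolation with $D_Z=\ell sD_Y$, the factorization $C(Z)\prod_\nu R_\nu(Y^{\mu_\nu},Z)^{e_\nu}$, and removal of at most $(2D_Y+1)D_Z$ parameters. It then applies Hensel lifting at $P_0$, recovers $f_z(P_i)^{\mu_\nu}$ from the first $\mu_\nu m+1$ coefficients, and absorbs $\mu_\nu$ through $\sum_\nu\mu_\nu\deg_YR_\nu\le D_Y$. For the lifting, keep the Hensel denominator as the bivariate $\partial_YR_{\nu,P_0}(Y,Z)$. This gives $\deg_YF_i\le2\mu_\nu m\deg_YR_\nu$ and $\deg_ZF_i\le2\mu_\nu m\deg_ZR_\nu+\ell\mu_\nu$, whereas a univariate $\Delta(Z)$ such as the resultant would inflate these degrees.

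However, Step 3 fails as written. The dichotomy ``either $R_\nu\mid F_i$ or at most $\deg_Z\mathrm{Res}$ parameters agree'' is not valid. Since $F_i\in\overline{\mathbb F}_q[Y,Z]$, it must be compared with the specialization $R_{\nu,P_0}$. That specialization can be reducible even though $R_\nu$ is irreducible over $\overline{\mathbb F}_q(\mathcal X)$. If $F_i$ shares only some factors with it, the resultant vanishes although no divisibility holds. Grouping by individual roots also does not assign parameters in a well-defined way.

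The classes must instead be indexed by the irreducible factors $H$ of the $Y$-primitive part of $R_{\nu,P_0}$. For $z\in S'$ assigned to $\nu$, we have $R_{\nu,P_0}(y_z,z)=0$, and the $Y$-content does not vanish at $z$ because it divides $\mathrm{lc}_Y(R_{\nu,P_0})$. So some $H$ satisfies $H(y_z,z)=0$, and this $H$ is unique because $y_z$ is a simple root. Hence the sets $S_{\nu,H}$ partition $S'$. Then put $I_{\nu,H}=\{i:H\mid F_i\}$. For $i\notin I_{\nu,H}$, Gauss's lemma gives $\mathrm{Res}_Y(H,F_i)\neq0$ of $Z$-degree at most $B_{\nu,H}=(\deg_YH)\deg_ZF_i+(\deg_ZH)\deg_YF_i$. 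This is not ``$4mD_Z$ per unit of $Y$-degree'', since the second term is not proportional to $\deg_YH$. Still, $\sum_{\nu,H}B_{\nu,H}\le4mD_YD_Z+\ell D_Y$, which is all the final pigeonhole needs.

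Second, your double count only gives $|S_{\nu,H}|\le\frac{n}{n-m-e}B_{\nu,H}$, which does not reach the constant $\frac{n-m}{n-m-e}$ in $B_e$. You need to count disagreements instead. Suppose $|I_{\nu,H}|\le m$ and choose $n-m$ coordinates outside $I_{\nu,H}$. Each of these has at least $|S_{\nu,H}|-B_{\nu,H}$ disagreeing parameters, while the total number of disagreements is at most $e|S_{\nu,H}|$. Therefore $(n-m-e)|S_{\nu,H}|\le(n-m)B_{\nu,H}$.

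With these repairs, $|S|>B_e$ forces some class with $|S_{\nu,H}|>\frac{n-m}{n-m-e}B_{\nu,H}+\ell(e+1)$. That class then has $|I_{\nu,H}|>m$. Your Lagrange-interpolation and MCA-bad counting caps it at $\ell(e+1)$, a contradiction.
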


We first record how finitely many local coefficients determine evaluations at the points $P_1,\ldots,P_n$.

\begin{lemma}
\label{lem:ag-local-recovery}
Choose $P_0\in\mathcal{X}_{\overline{\mathbb{F}}_q}(\overline{\mathbb{F}}_q)\setminus\{P_{\infty},P_1,\ldots,P_n\}$ and a local parameter $t$ at $P_0$. For every integer $a\ge0$, the map
\[
    \mathcal L_{\overline{\mathbb F}_q}(aP_\infty)
    \longrightarrow(\overline{\mathbb F}_q)^{a+1},
    \qquad h\longmapsto\bigl([t^0]h,\ldots,[t^a]h\bigr),
\]
is $\overline{\mathbb F}_q$-linear and injective. Consequently, for each $i\in[n]$, there exist $\lambda_{i,0},\ldots,\lambda_{i,a}\in\overline{\mathbb F}_q$, independent of $h$, such that $h(P_i)=\sum_{j=0}^{a}\lambda_{i,j}[t^j]h$ for every $h\in\mathcal L_{\overline{\mathbb F}_q}(aP_\infty).$
\end{lemma}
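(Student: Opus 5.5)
The plan is to prove linearity, then injectivity by a zero-versus-pole degree count, and then to obtain the coefficients $\lambda_{i,j}$ from linear algebra on a finite-dimensional space.

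\emph{Well-definedness and linearity.} Since $P_0\neq P_\infty$, every $h\in\mathcal L_{\overline{\mathbb F}_q}(aP_\infty)$ has no pole at $P_0$. Hence $h$ lies in the local ring $\mathcal O_{P_0}$ and has a unique expansion $h=\sum_{j\ge0}[t^j]h\,t^j$. Because $P_0$ is a rational point of $\mathcal X_{\overline{\mathbb F}_q}$, the coefficients lie in $\overline{\mathbb F}_q$. Uniqueness of the expansion makes each $h\mapsto[t^j]h$ an $\overline{\mathbb F}_q$-linear functional, so the map is linear.

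\emph{Injectivity.} Suppose $h\neq0$ and $[t^j]h=0$ for $0\le j\le a$. Then $v_{P_0}(h)\ge a+1$, and since $h\in\mathcal L(aP_\infty)$ its only possible pole is at $P_\infty$, of order at most $a$. The principal divisor $(h)$ has degree $0$. Its zero divisor has degree at least $a+1$, because $P_0$ is a rational point contributing multiplicity at least $a+1$. Its pole divisor has degree at most $a$. This is a contradiction, so $h=0$. This is the only substantive step, and it is just the standard zero/pole count. The case $a=0$ is consistent: there $\mathcal L(0)$ consists of the constants, and $[t^0]h=h(P_0)$.

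\emph{Recovery formula.} Let $\Phi$ denote the injective linear map. Fix $i\in[n]$. Since $P_i\neq P_\infty$, evaluation $\mathrm{ev}_i:h\mapsto h(P_i)$ is a well-defined linear functional on $\mathcal L_{\overline{\mathbb F}_q}(aP_\infty)$. Because $\Phi$ is injective, it is an isomorphism onto its image $W\subseteq\overline{\mathbb F}_q^{a+1}$. The functional $\mathrm{ev}_i\circ\Phi^{-1}$ on $W$ extends to a linear functional on all of $\overline{\mathbb F}_q^{a+1}$, for example by choosing a complement of $W$. Every linear functional on $\overline{\mathbb F}_q^{a+1}$ has the form $(c_0,\ldots,c_a)\mapsto\sum_j\lambda_{i,j}c_j$. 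This gives constants $\lambda_{i,j}$, depending only on $i$, $a$, $P_0$ and $t$, with $h(P_i)=\sum_j\lambda_{i,j}[t^j]h$ for all $h$.
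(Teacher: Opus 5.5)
Your proposal is correct and follows essentially the same route as the paper: linearity, injectivity because a nonzero $h\in\mathcal L_{\overline{\mathbb F}_q}(aP_\infty)$ cannot vanish to order $a+1$ at $P_0$, and then extending the evaluation functional from the image to all of $(\overline{\mathbb F}_q)^{a+1}$. You simply spell out the zero/pole degree count and the regularity of $h$ at $P_0$ and $P_i$, which the paper leaves implicit.
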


\begin{proof}
The map is clearly $\overline{\mathbb{F}}_q$-linear. If its value at $h$ is zero, then $v_{P_0}(h)\ge a+1$, which is impossible for a nonzero $h\in \mathcal L_{\overline{\mathbb{F}}_q}(aP_\infty)$. Hence $h=0$, proving injectivity.

For each $i\in[n]$, the map $h\mapsto h(P_i)$ is a $\overline{\mathbb{F}}_q$-linear functional on $\mathcal L_{\overline{\mathbb{F}}_q}(aP_\infty)$. By injectivity, evaluation at $P_i$ defines a linear functional on the image of this map. Extending this functional to $(\overline{\mathbb{F}}_q)^{a+1}$, there exist $\lambda_{i,0},\ldots,\lambda_{i,a}\in \overline{\mathbb{F}}_q$ such that $h(P_i)=\sum_{j=0}^a \lambda_{i,j}[t^j]h$ for every $h\in\mathcal L_{\overline{\mathbb{F}}_q}(aP_\infty)$.
\end{proof}

\subsection{Guruswami-Sudan interpolation}
\label{subsec:ag-gs-interpolation}

Fix $e,\mathbf u_0,\ldots,\mathbf u_\ell$ as in Theorem~\ref{thm:ag-johnson} and let $S$ be the set defined there. The Guruswami-Sudan interpolation for AG codes constructs an interpolation polynomial for a fixed received word~\cite[Section~2.2]{BM25}. Here we keep the curve parameter $Z$ as an indeterminate and construct a polynomial $Q(Y,Z)$ for the entire curve. We bound its degree in $Z$ and require the interpolation conditions to hold as polynomial identities in $Z$. Thus, the $Z$-degree bound enters the counts of both unknown coefficients and linear equations over $\mathbb{F}_q$. This uses the same counting idea as the Reed-Solomon proximity gap argument in~\cite[Section~3.1]{BCHKS25}.

We first recall the notion of multiplicity.

\begin{defn}
\label{def:ag-multiplicity}
Let $P\in\mathcal{X}(\mathbb{F}_q)$ be a rational point, $r\in\mathbb{F}_q$ and let $Q(Y)\in\mathbb{F}_q(\mathcal{X})[Y]$ have coefficients regular at $P$. We say that $Q$ has a root of multiplicity at least $s\ge 1$ at $(P,r)$ if, for a local parameter $t$ at $P$, there exist $q_{a,b}\in\mathbb F_q$ such that
\[
Q(Y)=\sum_{\substack{a,b\ge 0\\a+b\ge s}}q_{a,b}t^a(Y-r)^b.
\] %The multiplicity is at least $s$ if the same expansion holds without the final nonvanishing condition.
\end{defn}

For each $i\in[n]$, fix a local parameter $t_i$ at $P_i$. We seek a polynomial $Q(Y,Z)\in\mathbb{F}_q(\mathcal{X})[Y,Z]$ of the form
\begin{equation}
\label{eq:ag-uniform-interpolant}
\begin{gathered}
    Q(Y,Z)=\sum_{j=0}^{D_Y}Q_j(Z)Y^j,\quad Q_j(Z)\in\mathcal{L}\bigl((N-jm)P_{\infty}\bigr)[Z], \quad \deg_Z Q_j\leq D_Z.
\end{gathered}
\end{equation}
Here $\mathcal{L}(A)[Z]$ denotes the polynomials in $Z$ whose coefficients belong to $\mathcal{L}(A)$. We impose the conditions
\begin{equation}
\label{eq:ag-uniform-multiplicity}
    [t_i^aV^b]Q\bigl(t_i,r_i(Z)+V,Z\bigr)=0
    \quad\text{in }\mathbb{F}_q[Z],
    \qquad i\in[n],\quad a,b\geq 0,\quad a+b<s.
\end{equation}
The notation $Q(t_i,Y,Z)$ means that each function coefficient of $Q$ is expanded at $P_i$ in the local parameter $t_i$. Here $[t_i^aV^b]$ denotes the coefficient of $t_i^aV^b$ in the formal expansion of $Q(t_i,r_i(Z)+V,Z)$, which is a polynomial in $Z$. Condition~\eqref{eq:ag-uniform-multiplicity} requires the coefficient of every power of $Z$ in this polynomial to vanish. After choosing bases for the Riemann-Roch spaces in~\eqref{eq:ag-uniform-interpolant}, these conditions form a homogeneous linear system over $\mathbb{F}_q$.

\begin{lemma}
\label{lem:ag-uniform-interpolation}
Using the above notations, there exists a nonzero polynomial $Q(Y,Z)$ satisfying~\eqref{eq:ag-uniform-interpolant} and~\eqref{eq:ag-uniform-multiplicity}. Moreover, every such nonzero polynomial satisfies $\deg_Y Q>0$.
\end{lemma}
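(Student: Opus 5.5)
The plan is to prove existence by counting unknowns against equations, then prove $\deg_Y Q>0$ by a zero-counting argument applied to $Q_0$. Both steps use only Riemann--Roch and the numerical choices of $s,N,D_Y,D_Z$.

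\emph{Existence.} The first step is to count unknowns. For $0\le j\le D_Y$ we have $N-jm\ge N-mD_Y\ge\g$ by the definition of $D_Y$. Riemann--Roch therefore gives $l((N-jm)P_\infty)\ge N-jm+1-\g\ge1$. Each $Q_j$ contributes $D_Z+1$ $Z$-coefficients, so the number of unknowns over $\mathbb F_q$ is at least
\[
(D_Z+1)\sum_{j=0}^{D_Y}(N-jm+1-\g)=(D_Z+1)(D_Y+1)\Bigl(N+1-\g-\tfrac{mD_Y}{2}\Bigr).
\]
The next step is to count equations. Substituting $Y=r_i(Z)+V$ into $Q$ produces, in each $[t_i^aV^b]$, a polynomial in $Z$ of degree at most $D_Z+\ell D_Y$. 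There are $\binom{s+1}{2}$ pairs $(a,b)$ with $a+b<s$. The number of scalar equations is therefore at most $n\binom{s+1}{2}(D_Z+\ell D_Y+1)$. Since $D_Z=\ell sD_Y$, we have $D_Z+\ell D_Y+1\le (D_Z+1)(1+1/s)$, so the number of equations is at most $\frac{n(s+1)^2}{2}(D_Z+1)$. It then suffices to show
\[
(D_Y+1)\Bigl(N+1-\g-\tfrac{mD_Y}{2}\Bigr)>\frac{n(s+1)^2}{2}.
\]
Write $N-\g=mD_Y+\theta$ with $0\le\theta<m$. The left side is at least $\frac{(D_Y+1)(N-\g+1)}{2}+\frac{(D_Y+1)\theta}{2}$, and $(D_Y+1)m>N-\g$. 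Hence the left side exceeds $\frac{(N-\g+1)^2}{2m}$, up to a routine check of the floor terms. Since $N-\g+1\ge(s+1)\sqrt{mn}$, this gives $\frac{(N-\g+1)^2}{2m}\ge\frac{(s+1)^2n}{2}$, which is the needed bound. The main obstacle is carrying this floor and ceiling bookkeeping through rigorously, in particular that the $\theta$ term absorbs the discrepancy between $(D_Y+1)m$ and $N-\g+1$. If it does not close directly, I would fall back on the cruder bound $(D_Y+1)(N-\g+1-mD_Y/2)\ge (N-\g+1)^2/(2m)$, which follows because the function $x\mapsto (x+1)(c-mx/2)$ with $c=N-\g+1$ is compared at $x=D_Y\ge c/m-1$.

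\emph{$\deg_Y Q>0$.} Suppose $Q\neq0$ satisfies both displayed conditions but $Q=Q_0(Z)$. Take $b=0$ in the multiplicity conditions. Then, for every $i$ and every $a<s$, the $t_i^a$-coefficient of $Q_0$ vanishes identically in $Z$. Hence each $Z$-coefficient $c$ of $Q_0$ lies in $\mathcal L(NP_\infty-sD)$. This space has negative degree $N-sn$, and so is zero, once $N<sn$. Checking $N<sn$ is the one remaining inequality. We have $\sqrt{\rho}<1-\eta$, and $s\ge 2\sqrt\rho/\eta+2$ gives $s(1-\sqrt\rho)\ge 2\sqrt\rho+2(1-\sqrt\rho)$. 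Therefore
\[
sn-(s+1)\sqrt{mn}=n\bigl(s(1-\sqrt\rho)-\sqrt\rho\bigr)\ge \sqrt{mn}+2(n-\sqrt{mn}).
\]
The right side exceeds $m+2\ge \g+2$, because $\sqrt{mn}\ge m$, $n>m$, and $\g\le (m+1)/2$. This gives $N\le (s+1)\sqrt{mn}+\g<sn$. Thus $Q_0=0$, contradicting $Q\neq0$, so $\deg_YQ>0$.
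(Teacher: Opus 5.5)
Your proposal is correct and follows the paper's proof: the same count of unknowns against equations via $\sum_{j=0}^{D_Y}(N+1-\g-jm)\ge (N+1-\g)^2/(2m)\ge (s+1)^2n/2$ (your floor check closes directly, since by integrality $(D_Y+1)m\ge N-\g+1$), and the same reduction of $\deg_Y Q=0$ to $\mathcal L(NP_\infty-sD)=\{0\}$ using $N<sn$. One harmless slip: the claim $\sqrt{mn}+2(n-\sqrt{mn})>m+2$ is false when $n=m+1$ (e.g.\ $m=1$, $n=2$), but you only need this quantity to exceed $\g$, and that holds because it equals $2n-\sqrt{mn}>n>m\ge\g$.
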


\begin{proof}
The number of unknown coefficients over $\mathbb{F}_q$ in~\eqref{eq:ag-uniform-interpolant} is $(D_Z+1)\sum_{j=0}^{D_Y} l\bigl((N-jm)P_{\infty}\bigr).$ There are $n\binom{s+1}{2}$ triples $(i,a,b)$ with $i\in[n]$, $a,b\geq 0$ and $a+b<s$. For each such triple, the polynomial $[t_i^aV^b]Q\bigl(t_i,r_i(Z)+V,Z\bigr)$ has degree at most $D_Z+\ell D_Y$ in $Z$, since $\deg_Z Q_j\leq D_Z$ and $\deg_Z r_i\leq\ell$. Thus~\eqref{eq:ag-uniform-multiplicity} imposes at most $(D_Z+\ell D_Y+1)n\binom{s+1}{2}$ homogeneous linear equations over $\mathbb{F}_q$. By the Riemann-Roch bound,
\[
\begin{aligned}
    \sum_{j=0}^{D_Y}l\bigl((N-jm)P_{\infty}\bigr)\geq \sum_{j=0}^{D_Y}(N+1-\g-jm)\geq \frac{(N+1-\g)^2}{2m}\geq \frac{(s+1)^2n}{2}.
\end{aligned}
\]
The second inequality follows by evaluating the sum explicitly and using $1\leq N+1-\g-mD_Y\leq m$. The last inequality follows from $N+1-\g=\lceil(s+1)\sqrt{mn}\rceil$. Since $D_Z=\ell sD_Y$, we obtain
\[
\begin{aligned}
    &(D_Z+1)\sum_{j=0}^{D_Y}
    l\bigl((N-jm)P_{\infty}\bigr)
    -(D_Z+\ell D_Y+1)n\binom{s+1}{2}\\
    &\qquad\geq
    (D_Z+1)\frac{(s+1)^2n}{2}
    -(D_Z+\ell D_Y+1)\frac{s(s+1)n}{2}\\
    &\qquad=
    \frac{(s+1)n}{2}\bigl(D_Z+1-\ell sD_Y\bigr)
    =\frac{(s+1)n}{2}>0.
\end{aligned}
\]
Hence the number of unknowns exceeds the number of equations, so the homogeneous linear system has a nonzero solution.

Finally, the definition of $s$ and the inequality $\eta<1-\sqrt{\rho}$ give $s(1-\sqrt{\rho})>2\sqrt{\rho}$. Together with $\g\leq m<\sqrt{mn}$, this implies $ N<(s+1)\sqrt{mn}+\g<(s+2)\sqrt{mn}<sn.$ Suppose that a nonzero solution satisfies $\deg_Y Q=0$, so that $Q(Y,Z)=Q_0(Z)$. Taking $b=0$ in~\eqref{eq:ag-uniform-multiplicity} shows that every coefficient of $Q_0(Z)$ as a polynomial in $Z$ vanishes to order at least $s$ at each $P_i$. Each such coefficient therefore belongs to $\mathcal{L}(NP_{\infty}-sD)$. Since $\deg(NP_{\infty}-sD)=N-sn<0$, this space is $\{0\}$. Thus $Q_0(Z)=0$, a contradiction.
\end{proof}

We now apply the above Guruswami-Sudan interpolation to the set $S$. Fix a nonzero polynomial $Q(Y,Z)$ given by Lemma~\ref{lem:ag-uniform-interpolation}. For each $z\in S$, fix an \emph{MCA-bad} $e$ agreement witness $(z,f_z,A_z)$. Since $|A_z|\geq n-e$ and $f_z(P_i)=r_i(z)$ for every $i\in A_z$, we have $d(f_z(D),\mathbf r(z))\leq e$. We retain these same witnesses throughout the proof. After specializing $Z=z$, condition~\eqref{eq:ag-uniform-multiplicity} ensures that $Q(Y,z)$ has multiplicity at least $s$ at every $(P_i,r_i(z))$. Hence, whenever $f_z(P_i)=r_i(z)$, we have $v_{P_i}(f_z-r_i(z))\ge 1$. By the multiplicity condition at $(P_i,r_i(z))$, this implies $v_{P_i}(Q(f_z,z))\ge s.$ Since $d(f_z(D),\mathbf r(z))\le e$, this holds for at least $n-e$ points $P_i$. Our choice of parameters gives $s(n-e)-N>(s\eta-\sqrt{\rho})n-\g \geq\sqrt{mn}+2\eta n-\g>0$. Thus, the zero multiplicities would exceed the pole degree if $Q(f_z,z)$ were nonzero and therefore $Q(f_z,z)=0$ for every $z\in S$.

\subsection{Local Root Lifting and Resultant Analysis}
\label{subsec:ag-factorization}

Recall that $p=\mathrm{char} \mathbb{F}_q$. We factor $Q$ over $\overline{\mathbb{F}}_q(\mathcal{X})$ and take local expansions on the curve $\mathcal{X}_{\overline{\mathbb{F}}_q}$. The parameters $z\in S$ and the chosen functions $f_z\in\mathcal{L}(G)$ are still defined over $\mathbb{F}_q$.

Factor $Q$ in $\overline{\mathbb{F}}_q(\mathcal{X})[Y,Z]$ as $Q(Y,Z)=C(Z)\prod_{\nu=1}^{r}H_\nu(Y,Z)^{e_\nu},$ where $C(Z)\in \overline{\mathbb{F}}_q(\mathcal{X})[Z]\setminus\{0\}$, the polynomials $H_\nu$ are pairwise nonassociate irreducible factors of positive $Y$-degree and $e_\nu\geq 1$. In characteristic $p$, an irreducible factor of positive $Y$-degree may have zero derivative with respect to $Y$. For example, $Y^p-Z$ is irreducible but its $Y$-derivative is zero. Since the $Y$-derivative appears in the denominator in the Hensel lifting below, we first remove this inseparable part.

%Indeed, otherwise every $W$-exponent in $R_\nu$ would be divisible by $p$, so $H_\nu$ would be a polynomial in $Y^{p\mu_\nu}$ and $Z$, contradicting the maximality of $a_\nu$.

For each $\nu$, choose $a_\nu\geq 0$ as large as possible so that $H_\nu(Y,Z)$ is a polynomial in $Y^{p^{a_\nu}}$ and $Z$. Set $\mu_\nu=p^{a_\nu}$. With the change of variable $W=Y^{\mu_\nu}$, write $H_\nu(Y,Z)=R_\nu(Y^{\mu_\nu},Z)$ where $R_\nu(W,Z)\in \overline{\mathbb{F}}_q(\mathcal{X})[W,Z]$. Then $\partial_W R_\nu\neq 0$. Moreover, each $R_\nu(W,Z)$ remains irreducible in $\overline{\mathbb{F}}_q(\mathcal{X})[W,Z]$. Indeed,  a nontrivial factorization of $R_\nu$ would give one of $H_\nu$ after substituting $W=Y^{\mu_\nu}$. Since $R_\nu$ has positive $W$-degree, it is $W$-primitive and remains irreducible in $\overline{\mathbb{F}}_q(\mathcal{X})(Z)[W]$ by Gauss's lemma. Thus, $\gcd_{\overline{\mathbb{F}}_q(\mathcal{X})(Z)[W]}(R_\nu,\partial_W R_\nu)=1$. For notational simplicity, we now rename $W$ as $Y$ and write $R_\nu(Y,Z)$ for the polynomial just constructed. Thus $\partial_Y R_\nu(Y,Z)\neq 0$ and $\gcd_{\overline{\mathbb{F}}_q(\mathcal{X})(Z)[Y]}(R_\nu,\partial_Y R_\nu)=1$. The factorization of $Q$ becomes 
\[
Q(Y,Z)=C(Z)\prod_{\nu=1}^{r} R_\nu(Y^{\mu_\nu},Z)^{e_\nu}.
\]
Since $e_\nu\geq 1$ for every $\nu$, the factorization gives
\begin{equation}\label{eq:ag-factor-degrees}
    \sum_{\nu=1}^{r}\mu_\nu\deg_Y R_\nu \leq \deg_Y Q\leq D_Y,\quad \deg_Z C+\sum_{\nu=1}^{r}\deg_Z R_\nu \leq \deg_Z Q\leq D_Z.
\end{equation}

We next choose a common auxiliary point for the local expansions used in Hensel lifting. For a point $P_0\in\mathcal{X}(\overline{\F}_q)$ at which all function coefficients of $R_\nu$ are regular, we write $R_{\nu,P_0}(Y,Z):=R_\nu(P_0;Y,Z)\in\overline{\F}_q[Y,Z]$ for the polynomial obtained by evaluating each of these coefficients at $P_0$.

\begin{lemma}
\label{lem:ag-common-auxiliary-point}
There exists a point $P_0\in\mathcal{X}(\overline{\F}_q) \setminus\{P_{\infty},P_1,\ldots,P_n\}$ such that, for every $\nu\in\{1,\ldots,r\}$, all function coefficients of $R_\nu$ are regular at $P_0$ and we have $\deg_Y R_{\nu,P_0}=\deg_Y R_\nu$, $\deg_Y(\partial_Y R_{\nu,P_0})=\deg_Y(\partial_Y R_\nu)$ and $\mathrm{Res}_Y(R_{\nu,P_0},\partial_Y R_{\nu,P_0}) \in\overline{\F}_q[Z]\setminus\{0\}.$
\end{lemma}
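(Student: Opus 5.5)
The plan is to show that each of the required conditions fails only on a finite set of points of $\mathcal X(\overline{\F}_q)$. Since $\mathcal X(\overline{\F}_q)$ is infinite, a point avoiding all these finite sets, together with $P_\infty,P_1,\ldots,P_n$, then exists. There are finitely many indices $\nu$ and finitely many function coefficients of each $R_\nu$. Every nonzero function in $\overline{\F}_q(\mathcal X)$ has finitely many poles and finitely many zeros. So it suffices to express each condition as ``some fixed nonzero function in $\overline{\F}_q(\mathcal X)$ is regular and nonzero at $P_0$''.

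\textbf{Regularity and leading degrees.} Regularity of all coefficients of every $R_\nu$ excludes the union of their pole sets, which is finite. For the condition $\deg_Y R_{\nu,P_0}=\deg_Y R_\nu$, let $d_\nu=\deg_Y R_\nu$. The coefficient $[Y^{d_\nu}]R_\nu$ is a nonzero element of $\overline{\F}_q(\mathcal X)[Z]$. I would fix one nonzero $Z$-coefficient $c_\nu\in\overline{\F}_q(\mathcal X)^*$ of it and require $c_\nu(P_0)\neq0$. Then $[Y^{d_\nu}]R_{\nu,P_0}\ne0$. Since specialization cannot raise the degree, equality follows.

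Next take $\partial_Y R_\nu$, which is nonzero by construction. Its coefficients are integer multiples of those of $R_\nu$, so they are regular wherever the coefficients of $R_\nu$ are. Let $d'_\nu=\deg_Y\partial_Y R_\nu$. I would pick a nonzero $Z$-coefficient of $[Y^{d'_\nu}]\partial_Y R_\nu$ and exclude its zeros. Because specialization commutes with $\partial_Y$, this gives $\partial_Y(R_{\nu,P_0})=(\partial_Y R_\nu)_{P_0}$ with the same $Y$-degree.

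\textbf{Nonvanishing resultant.} Set $\Delta_\nu:=\mathrm{Res}_Y(R_\nu,\partial_Y R_\nu)\in\overline{\F}_q(\mathcal X)[Z]$. By construction $\gcd(R_\nu,\partial_Y R_\nu)=1$ in $\overline{\F}_q(\mathcal X)(Z)[Y]$, and $\deg_Y R_\nu>0$. Lemma~\ref{lem:ag-resultant-properties}, applied over the field $\overline{\F}_q(\mathcal X)$, then gives $\Delta_\nu\neq0$. Note that $\partial_Y R_\nu$ may be a nonzero polynomial of $Y$-degree $0$; Definition~\ref{def:ag-resultant} handles this case.

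The key observation is that the Sylvester matrix has size determined only by the $Y$-degrees. Once both $Y$-degrees are preserved at $P_0$, we have
\[
\mathrm{Syl}_Y(R_{\nu,P_0},\partial_Y R_{\nu,P_0})=\mathrm{Syl}_Y(R_\nu,\partial_Y R_\nu)(P_0)
\]
entrywise. Since the determinant is a polynomial in the entries, this gives $\mathrm{Res}_Y(R_{\nu,P_0},\partial_Y R_{\nu,P_0})=\Delta_\nu(P_0;Z)$. I would then fix a nonzero $Z$-coefficient $\delta_\nu\in\overline{\F}_q(\mathcal X)^*$ of $\Delta_\nu$. Its coefficients are polynomial in the coefficients of $R_\nu$, so it is regular at $P_0$. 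Excluding the zeros of $\delta_\nu$ makes the specialized resultant a nonzero element of $\overline{\F}_q[Z]$.

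\textbf{Main obstacle.} The only delicate step is this compatibility of the resultant with specialization. It holds precisely because the degree-preservation conditions are imposed first; otherwise the Sylvester matrix would change size. I would therefore order the argument as: regularity, then both leading coefficients, then the resultant. I also need $\overline{\F}_q(\mathcal X)$ to be a field, so that Lemma~\ref{lem:ag-resultant-properties} applies over it. This holds because $\mathcal X$ is geometrically irreducible.

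Collecting all the excluded sets gives finitely many points. Adding $P_\infty,P_1,\ldots,P_n$ keeps the set finite, so the required $P_0$ exists.
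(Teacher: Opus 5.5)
Your proof is correct and follows essentially the same route as the paper: you exclude finitely many zeros and poles of coefficients so that regularity and both $Y$-degrees are preserved, observe that the Sylvester matrix then specializes entrywise, and exclude the zeros of one nonzero $Z$-coefficient of $\mathrm{Res}_Y(R_\nu,\partial_Y R_\nu)$. The only cosmetic difference is in the exclusions: the paper removes the zeros and poles of every nonzero coefficient of $R_\nu$ and of the chosen resultant coefficient, whereas you remove only the zeros you need and deduce regularity of the resultant coefficient from its polynomial dependence on the coefficients of $R_\nu$.
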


\begin{proof}
Fix $\nu$ and write $R_\nu(Y,Z)=\sum_{a,b\geq 0}c_{a,b}Y^aZ^b$ where $c_{a,b}\in\overline{\F}_q(\mathcal{X})$ and only finitely many coefficients are nonzero. Each nonzero $c_{a,b}$ has only finitely many zeros and poles. Exclude all these points. At any remaining point $P_0$, every coefficient is regular and $c_{a,b}(P_0)\neq 0$ whenever $c_{a,b}\neq 0$. Thus $\deg_Y R_{\nu,P_0}=\deg_Y R_\nu$. Every nonzero coefficient of $\partial_Y R_\nu$ is of the form $a c_{a,b}$ with $p\nmid a$, so it also remains nonzero at $P_0$. Since evaluating coefficients commutes with differentiation in $Y$, we also obtain $\deg_Y(\partial_Y R_{\nu,P_0})=\deg_Y(\partial_Y R_\nu)$.

The irreducibility of $R_\nu$ and the condition $\partial_Y R_\nu\neq 0$ imply that $R_\nu$ and $\partial_Y R_\nu$ are coprime in $\overline{\F}_q(\mathcal{X})(Z)[Y]$. Hence their resultant is a nonzero polynomial in $Z$ and we write $\mathrm{Res}_Y(R_\nu,\partial_Y R_\nu) =A_0+A_1Z+\cdots+A_dZ^d$ where $A_0,\ldots,A_d\in\overline{\F}_q(\mathcal{X}).$ Choose one nonzero coefficient $A_j$ and further exclude its zeros and poles, so that $A_j(P_0)\neq 0$.

By the two degree equalities above, evaluating the Sylvester matrix at $P_0$ gives the Sylvester matrix of $R_{\nu,P_0}$ and $\partial_Y R_{\nu,P_0}$. All its entries are regular at $P_0$, and evaluation commutes with taking determinants since the determinant is a polynomial expression in the matrix entries. Therefore $\mathrm{Res}_Y(R_{\nu,P_0},\partial_Y R_{\nu,P_0})=A_0(P_0)+A_1(P_0)Z+\cdots+A_d(P_0)Z^d\neq 0.$

There are only finitely many factors $R_\nu$ and the exclusions above remove only finitely many points for each $\nu$. Since $\mathcal{X}(\overline{\F}_q)$ is infinite, we can choose $P_0$ outside all these excluded points and $\{P_{\infty},P_1,\ldots,P_n\}$. This point satisfies all the required conditions.
\end{proof}

Fix a point $P_0$ as in Lemma~\ref{lem:ag-common-auxiliary-point}. For each $\nu$, the leading coefficient $\mathrm{lc}_Y(R_{\nu,P_0})$ is a nonzero polynomial in $\overline{\F}_q[Z]$ of degree at most $\deg_Z R_\nu$. Moreover, Lemma~\ref{lem:ag-resultant-properties} gives
\[
    \deg_Z\mathrm{Res}_Y (R_{\nu,P_0},\partial_Y R_{\nu,P_0}) \leq (2\deg_Y R_\nu-1)\deg_Z R_\nu.
\]
Here we used $\deg_Y R_{\nu,P_0}=\deg_Y R_\nu$, $\deg_Y(\partial_Y R_{\nu,P_0})\leq\deg_Y R_\nu-1$ and the fact that evaluating coefficients and taking the derivative with respect to $Y$ do not increase the $Z$-degree. Moreover, the resultant is nonzero by the choice of $P_0$.

Remove from $S$ all parameters $z$ for which $C(z)=0$ or, for some $\nu$,
\[
    \mathrm{lc}_Y(R_{\nu,P_0})(z)=0
    \quad\text{or}\quad
    \mathrm{Res}_Y
    (R_{\nu,P_0},\partial_Y R_{\nu,P_0})(z)=0.
\]
Denote the remaining set by $S'$. As each of these nonzero polynomials has at most its degree
many roots in $\F_q$, using~\eqref{eq:ag-factor-degrees} and $\mu_\nu\geq 1$, we obtain
\[
\begin{aligned}
    |S\setminus S'|
    &\leq \deg_Z C
    +2\sum_{\nu=1}^{r}
    (\deg_Y R_\nu)(\deg_Z R_\nu)\\
    &\leq D_Z
    +2\left(\sum_{\nu=1}^{r}\deg_Y R_\nu\right)
       \left(\sum_{\nu=1}^{r}\deg_Z R_\nu\right)\\
    &\leq (2D_Y+1)D_Z.
\end{aligned}
\]
For every $z\in S'$, the factorization of $Q$, together with $Q(f_z,z)=0$ and $C(z)\neq 0$, implies that at least one index $\nu$ satisfies $R_\nu(f_z^{\mu_\nu},z)=0.$ Fix an ordering of the factors. For each $z\in S'$, choose the first index $\nu$ such that $R_\nu(f_z^{\mu_\nu},z)=0$ and assign $z$ to this $\nu$. Now fix a parameter $z$ assigned to $\nu$. Since $f_z\in\mathcal L(G)$ and $P_0\neq P_\infty$, the function $f_z$ is regular at $P_0$. Evaluating the above identity at $P_0$ gives $R_{\nu,P_0}\bigl(f_z(P_0)^{\mu_\nu},z\bigr)=0.$ Moreover, since $z\in S'$, we have $\mathrm{Res}_Y \bigl(R_{\nu,P_0},\partial_YR_{\nu,P_0}\bigr)(z)\neq0.$ Hence, the specialized polynomials $R_{\nu,P_0}(Y,z)$ and $(\partial_YR_{\nu,P_0})(Y,z)$ have no common root. Therefore, $(\partial_YR_{\nu,P_0})(f_z(P_0)^{\mu_\nu},z)\ne0.$ Thus, $f_z(P_0)^{\mu_\nu}$ is a simple root of $R_{\nu,P_0}(Y,z)$.

Choose a local parameter $t$ at the fixed point $P_0$ and fix $\nu\in\{1,\ldots,r\}$. Expanding the function coefficients of $R_\nu$ at $P_0$ gives $R_\nu(t,Y,Z)=\sum_{a\geq 0}R_{\nu,a}(Y,Z)t^a$ where $R_{\nu,a}\in\overline{\F}_q[Y,Z].$ For every $a\geq 0$, we have $\deg_Y R_{\nu,a}\leq\deg_Y R_\nu$ and $\deg_Z R_{\nu,a}\leq\deg_Z R_\nu$. The constant term in $t$ is precisely $R_{\nu,P_0}(Y,Z)=R_\nu(P_0;Y,Z)$. For a parameter $z$ assigned to $\nu$, let $y_z:=f_z(P_0)^{\mu_\nu}.$ As shown above, $R_{\nu,0}(y_z,z)=0$ and $\partial_YR_{\nu,0}(y_z,z)\neq0,$ so $y_z$ is a simple root of $R_{\nu,0}(Y,z)$.

Our goal is to lift this root from its value at $P_0$ to the full local expansion of $f_z^{\mu_\nu}$. Since $f_z^{\mu_\nu}$ is regular at $P_0$, write $f_z^{\mu_\nu}=y_z+c_1t+c_2t^2+\cdots .$ Moreover, $R_\nu(f_z^{\mu_\nu},z)=0$, so after expanding the function coefficients of $R_\nu$ at $P_0$, this identity becomes 
\[
R_\nu\!\left(t,\,y_z+c_1t+c_2t^2+\cdots,\,z\right)=0.
\]
Thus, starting from the initial value $y_z$, we would like to determine $c_1,c_2,\ldots$ successively by forcing the coefficients of $t,t^2,\ldots$ in this identity to vanish.

Rather than carrying out this lifting separately for each $z$, we perform it uniformly in two variables $Y$ and $Z$. For this purpose, we construct rational functions $\alpha_{\nu,j}(Y,Z)\in\overline{\mathbb F}_q(Y,Z)$ such that, for every parameter $z$ assigned to $\nu$, we have $[t^j]f_z^{\mu_\nu}=\alpha_{\nu,j}(y_z,z).$ Set $\alpha_{\nu,0}(Y,Z):=Y.$ For $j\ge1$, define recursively
\begin{equation}
\label{eq:ag-hensel-recursion}
    \alpha_{\nu,j}(Y,Z) = -\frac{[t^j]R_\nu\left(t,\, Y+\sum_{h=1}^{j-1}\alpha_{\nu,h}(Y,Z)t^h,\,  Z \right) }{\partial_Y R_{\nu,P_0}(Y,Z)}.
\end{equation}
Thus each $\alpha_{\nu,j}$ is a rational function in $\overline{\mathbb F}_q(Y,Z)$.

The idea of the recursion is simple. Suppose that $\alpha_{\nu,1},\ldots,\alpha_{\nu,j-1}$ have already been chosen and set $Y_{j-1}(t):=Y+\sum_{h=1}^{j-1}\alpha_{\nu,h}(Y,Z)t^h.$ By construction, after substituting $Y_{j-1}(t)$ for the $Y$-variable in $R_\nu(t,Y,Z)$, the coefficients of $t,\ldots,t^{j-1}$ have already been made zero. Thus we may write $R_\nu\bigl(t,Y_{j-1}(t),Z\bigr)=R_{\nu,P_0}(Y,Z)+A_j(Y,Z)t^j+O(t^{j+1}),$ where $A_j(Y,Z)=[t^j]R_\nu\bigl(t,Y_{j-1}(t),Z\bigr)$ is the current coefficient of $t^j$. We now add one more term and replace $Y_{j-1}(t)$ by $Y_j(t)=Y_{j-1}(t)+\alpha_{\nu,j}(Y,Z)t^j.$ Modulo $t^{j+1}$, this changes the $t^j$-coefficient by $\partial_YR_{\nu,P_0}(Y,Z)\,\alpha_{\nu,j}(Y,Z).$ Indeed, $\alpha_{\nu,j}t^j$ has order $j$, so only the linear term contributes to the coefficient of $t^j$ and higher-order terms are of order at least $j+1$. Therefore,
\[
R_\nu\bigl(t,Y_j(t),Z\bigr)=R_{\nu,P_0}(Y,Z)+\Bigl(A_j+\partial_YR_{\nu,P_0}\,\alpha_{\nu,j}\Bigr)t^j+O(t^{j+1}).
\]
Hence, choosing $\alpha_{\nu,j}=-\frac{A_j}{\partial_YR_{\nu,P_0}}$ makes the coefficient of $t^j$ vanish. This is exactly the recursion defined above.

The preceding discussion explains why the recursion removes the coefficients successively. The following lemma records this lifting property together with the denominator and degree bounds needed later.

\begin{lemma}
\label{lem:ag-hensel-lifting}
Using the above notations, the rational functions $\alpha_{\nu,j}$ defined in~\eqref{eq:ag-hensel-recursion} have the following properties.
\begin{enumerate}[label=(\roman*), leftmargin=*]
    \item For every $j\geq 1$, the product $(\partial_Y R_{\nu,P_0})^{2j-1}\alpha_{\nu,j}$ belongs to $\overline{\F}_q[Y,Z]$. Its degrees satisfy
    \[
    \begin{aligned}
    &\deg_Y\bigl( (\partial_Y R_{\nu,P_0})^{2j-1}\alpha_{\nu,j} \bigr)\leq (2j-1)(\deg_Y R_\nu-1)+1, \\
    &\deg_Z\bigl((\partial_Y R_{\nu,P_0})^{2j-1}\alpha_{\nu,j} \bigr) \leq (2j-1)\deg_Z R_\nu.
    \end{aligned}
    \]

    \item For every $z\in S'$ assigned to $\nu$ and every $j\ge 0$, $\alpha_{\nu,j}(y_z,z)$ is well defined and $[t^j]f_z^{\mu_\nu}=\alpha_{\nu,j}(y_z,z).$
    
\end{enumerate}
\end{lemma}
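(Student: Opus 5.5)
Part (i) goes by strong induction on $j$, tracking a common denominator. Write $\Delta:=\partial_Y R_{\nu,P_0}(Y,Z)$, $d_Y:=\deg_Y R_\nu$ and $d_Z:=\deg_Z R_\nu$, and set $\beta_j:=\Delta^{2j-1}\alpha_{\nu,j}$. The claim is that $\beta_j$ is a polynomial with $\deg_Y\beta_j\le(2j-1)(d_Y-1)+1$ and $\deg_Z\beta_j\le(2j-1)d_Z$. For $j=1$, the numerator in \eqref{eq:ag-hensel-recursion} is $[t^1]R_\nu(t,Y,Z)=R_{\nu,1}(Y,Z)$, whose $Y$-degree is at most $d_Y$ and $Z$-degree at most $d_Z$. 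Hence $\beta_1=-R_{\nu,1}$, which fits the bound $(d_Y-1)+1$.

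For the inductive step, substitute $\alpha_{\nu,h}=\beta_h/\Delta^{2h-1}$ into
\[
[t^j]R_\nu\Bigl(t,Y+\textstyle\sum_{h<j}\alpha_{\nu,h}t^h,Z\Bigr)=\sum_{a+c=j}\sum_{k\ge0}\frac{1}{k!}\,\partial_Y^{(k)}R_{\nu,a}(Y,Z)\,[t^c]\Bigl(\textstyle\sum_{h<j}\alpha_{\nu,h}t^h\Bigr)^k .
\]
To avoid dividing by $k!$ in characteristic $p$, I will use Hasse derivatives, or equivalently the binomial expansion of $(Y+w)^b$. A typical term is $\partial^{(k)}R_{\nu,a}\cdot\prod_{r=1}^k\alpha_{\nu,h_r}$ with $a+\sum_r h_r=j$ and each $h_r\ge1$. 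Its denominator is $\Delta^{\sum_r(2h_r-1)}=\Delta^{2(j-a)-k}$, and this exponent is at most $2j-2$ because $k\ge1$ whenever $a<j$; the terms with $k=0$, $a=j$ have no denominator. Multiplying by $\Delta^{2j-2}$ makes the numerator a polynomial, and dividing by the extra $\Delta$ in the recursion gives $\beta_j=-\Delta^{2j-2}A_j$, a polynomial.

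For the degrees, each typical term of $\Delta^{2j-2}A_j$ equals $\partial^{(k)}R_{\nu,a}\prod_r\beta_{h_r}\Delta^{2a+k-2}$. Its $Z$-degree is at most $d_Z+\sum_r(2h_r-1)d_Z+(2a+k-2)d_Z=(2j-1)d_Z$, using $\deg_Z\Delta\le d_Z$. Its $Y$-degree is at most
\[
(d_Y-k)+\sum_r\bigl((2h_r-1)(d_Y-1)+1\bigr)+(2a+k-2)(d_Y-1)=(2j-1)(d_Y-1)+1,
\]
using $\deg_Y\Delta\le d_Y-1$. The identity is an exact telescoping once one substitutes $\sum_r(2h_r-1)=2(j-a)-k$. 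The terms with $k=0$ (so $a=j$) contribute $R_{\nu,j}\Delta^{2j-2}$, and the terms with $a=0$ need $\Delta^{k-2}$. When $a=0$ we have $k\ge2$, because the $k=1$, $a=0$ contribution to $[t^j]$ comes only from $\alpha_{\nu,h}$ with $h\le j-1<j$ and is already accounted for. When $k=1$ and $a\ge1$ the exponent $2a-1$ is nonnegative. So all exponents of $\Delta$ are nonnegative, and I will check each boundary case explicitly. This bookkeeping, especially the $a=0,k=1$ terms and the Hasse-derivative degree drop $\deg_Y\partial^{(k)}R\le d_Y-k$, is the main obstacle.

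Part (ii) goes by induction on $j$. The case $j=0$ holds because $[t^0]f_z^{\mu_\nu}=f_z(P_0)^{\mu_\nu}=y_z$. Since $z\in S'$, we have $\Delta(y_z,z)\ne0$ (shown before the lemma), so every $\alpha_{\nu,j}(y_z,z)=\beta_j(y_z,z)/\Delta(y_z,z)^{2j-1}$ is well defined. Now suppose $c_h=\alpha_{\nu,h}(y_z,z)$ for all $h<j$. Expand $0=R_\nu(t,f_z^{\mu_\nu},z)$ modulo $t^{j+1}$ with $f_z^{\mu_\nu}=Y_{j-1}(t)|_{(y_z,z)}+c_jt^j+O(t^{j+1})$. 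Since specialization commutes with the coefficient extraction, the computation preceding the lemma gives
\[
0=[t^j]=A_j(y_z,z)+\Delta(y_z,z)\,c_j .
\]
Hence $c_j=\alpha_{\nu,j}(y_z,z)$.
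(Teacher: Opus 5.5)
Your proposal is correct and follows the paper's proof essentially step for step. Part (i) uses the same strong induction with common denominator $\Delta^{2(j-a)-k}$, the same observation that $2a+k\ge2$ (the paper's $2a+u\ge2$) so that $\Delta^{2j-2}A_j$ is a polynomial, and the same telescoping degree bounds; part (ii) uses the same induction via $0=A_j(y_z,z)+\Delta(y_z,z)\,[t^j]f_z^{\mu_\nu}$. One small slip: your first justification of the exponent bound (``$k\ge1$ whenever $a<j$'') fails for $a=0$, $k=1$, but your later remark that $V(t)$ has no $t^j$-term, so $a=0$ forces $k\ge2$, is exactly the fix needed.
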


\begin{proof}
We follow the Hensel-lifting approach of~\cite[Appendix~A.4]{BCIKS23}. The preceding calculation shows that each step makes the coefficient of $t^j$ zero without changing the lower order coefficients. Hence, for every $M\geq 1$,
\[
    R_\nu\left( t,\,Y+\sum_{j=1}^{M}\alpha_{\nu,j}(Y,Z)t^j,\, Z \right) \equiv R_{\nu,P_0}(Y,Z) \pmod{t^{M+1}}
\]
in $\overline{\F}_q(Y,Z)[[t]]$.

For part~(i), we use induction on $j$. The case $j=1$ follows from $(\partial_Y R_{\nu,P_0})\alpha_{\nu,1}=-R_{\nu,1}$ and the degree bounds on $R_{\nu,1}$. Now let $j\geq 2$ and assume the assertion for all smaller indices. Write $V(t)=\sum_{h=1}^{j-1}\alpha_{\nu,h}(Y,Z)t^h$ and recall that $A_j=[t^j]\sum_{a=0}^{j}t^aR_{\nu,a}(Y+V(t),Z).$ Expand $R_{\nu,a}(Y+V(t),Z)$ by the binomial theorem. The polynomial multiplying $V(t)^u$ has $Y$-degree at most $\deg_Y R_\nu-u$ and $Z$-degree at most $\deg_Z R_\nu$. A contribution to $A_j$ selects one coefficient from each of the $u$ factors of $V(t)$, with indices summing to $j-a$.

By induction, each selected $\alpha_{\nu,h}$ can be written with denominator $(\partial_Y R_{\nu,P_0})^{2h-1}$ and numerator degrees at most $(2h-1)(\deg_Y R_\nu-1)+1$ in $Y$ and $(2h-1)\deg_Z R_\nu$ in $Z$. Thus their product has denominator $(\partial_Y R_{\nu,P_0})^{2(j-a)-u}$ and numerator degrees at most $\bigl(2(j-a)-u\bigr)(\deg_Y R_\nu-1)+u$ and $\bigl(2(j-a)-u\bigr)\deg_Z R_\nu$ in $Y$ and $Z$, respectively. For $u=0$, the product is $1$ and only $a=j$ contributes.

If $a=0$, then $u\ge2$, since $V(t)$ contains no $t^j$-term and a single factor of $V(t)$ therefore cannot contribute to the coefficient of $t^j$. If $a\ge1$, we also have $2a+u\ge2$. Hence in all cases, $2(j-a)-u\le 2j-2.$ Therefore all denominators are cleared by $(\partial_YR_{\nu,P_0})^{2j-2}$ and $(\partial_YR_{\nu,P_0})^{2j-2}A_j$ is a polynomial. For each term, this multiplication removes the denominator and leaves the factor $(\partial_Y R_{\nu,P_0})^{2a+u-2}$ in the numerator. Since $\partial_Y R_{\nu,P_0}$ has $Y$-degree at most $\deg_Y R_\nu-1$ and $Z$-degree at most $\deg_Z R_\nu$, the resulting $Y$- and $Z$-degrees are bounded, respectively, by
\[
\begin{aligned}
    (\deg_Y R_\nu-u)+u+(2j-2)(\deg_Y R_\nu-1)
    &=(2j-1)(\deg_Y R_\nu-1)+1,\\
    \deg_Z R_\nu+(2j-2)\deg_Z R_\nu
    &=(2j-1)\deg_Z R_\nu.
\end{aligned}
\]
These bounds also hold for the sum. Finally, $(\partial_Y R_{\nu,P_0})^{2j-1}\alpha_{\nu,j} =-(\partial_Y R_{\nu,P_0})^{2j-2}A_j,$ which proves part~(i).

For part~(ii), fix $z\in S'$ assigned to $\nu$. Since $(\partial_Y R_{\nu,P_0})(y_z,z)\neq 0$, part~(i) ensures that all $\alpha_{\nu,j}$ are defined at $(y_z,z)$. We prove $[t^j]f_z^{\mu_\nu}=\alpha_{\nu,j}(y_z,z)$ by induction on $j$. For $j=0$, both sides equal $y_z$.

Now let $j\geq 1$ and assume the equality for all smaller indices. Thus the coefficients of orders below $j$ in the local expansion of $f_z^{\mu_\nu}$ already agree with those given by the recursion. Using the calculation with $A_j$ above, the identity $R_\nu(t,f_z^{\mu_\nu},z)=0$ gives
\[
    0=[t^j]R_\nu(t,f_z^{\mu_\nu},z) =A_j(y_z,z) +(\partial_Y R_{\nu,P_0})(y_z,z)\,[t^j]f_z^{\mu_\nu}.
\]
Since $(\partial_Y R_{\nu,P_0})(y_z,z)\neq 0$, we obtain
\[
    [t^j]f_z^{\mu_\nu} =-\frac{A_j(y_z,z)}  {(\partial_Y R_{\nu,P_0})(y_z,z)} =\alpha_{\nu,j}(y_z,z),
\]
where the last equality is the defining recursion evaluated at $(y_z,z)$. This completes the induction.
\end{proof}

Keep $\nu$ fixed. We now use the lifted coefficients to express each equality $f_z(P_i)=r_i(z)$ as a polynomial equation in $(y_z,z)$. Since $f_z\in\mathcal L(G)$, we have $f_z^{\mu_\nu}\in\mathcal L_{\overline{\F}_q}(\mu_\nu mP_\infty)$. Applying Lemma~\ref{lem:ag-local-recovery} with $a=\mu_\nu m$ gives coefficients $\lambda_{i,0},\ldots,\lambda_{i,\mu_\nu m} \in\overline{\F}_q$, independent of $z$, such that
\begin{equation}
\label{eq:ag-recovered-values}
    f_z(P_i)^{\mu_\nu}
    =\sum_{j=0}^{\mu_\nu m}\lambda_{i,j}[t^j]f_z^{\mu_\nu}
    =\sum_{j=0}^{\mu_\nu m}\lambda_{i,j}
      \alpha_{\nu,j}(y_z,z)
\end{equation}
for every $z\in S'$ assigned to $\nu$, where the last equality follows from Lemma~\ref{lem:ag-hensel-lifting}(ii). Hence, only the coefficients $[t^j]f_z^{\mu_\nu}$ for $0\le j\le \mu_\nu m$ are needed to determine $f_z(P_i)^{\mu_\nu}$.

For each $i\in[n]$, define
\begin{equation}
\label{eq:ag-coordinate-polynomial}
    F_i(Y,Z)
    :=\bigl(\partial_Y R_{\nu,P_0}(Y,Z)\bigr)^{2\mu_\nu m}
    \left(
        \sum_{j=0}^{\mu_\nu m}\lambda_{i,j}\alpha_{\nu,j}(Y,Z)
        -r_i(Z)^{\mu_\nu}
    \right)\in \overline{\F}_q[Y,Z].
\end{equation}
The power of $\partial_Y R_{\nu,P_0}$ clears the denominators
in the sum. The following lemma gives the degree bounds and
relates $F_i$ to agreement at coordinate $i$.

\begin{lemma}
\label{lem:ag-coordinate-polynomials}
For every $i\in[n]$, the expression in~\eqref{eq:ag-coordinate-polynomial} defines a polynomial $F_i\in\overline{\F}_q[Y,Z]$ satisfying $\deg_Y F_i\leq 2\mu_\nu m\deg_Y R_\nu$ and $\deg_Z F_i\leq 2\mu_\nu m\deg_Z R_\nu+\ell\mu_\nu.$ Moreover, for every $z\in S'$ assigned to $\nu$, we have
\[
    F_i(y_z,z)=0
    \quad\Longleftrightarrow\quad
    f_z(P_i)=r_i(z).
\]
\end{lemma}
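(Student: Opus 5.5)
Fix $\nu$ and write $\Delta:=\partial_Y R_{\nu,P_0}$ and $J:=\mu_\nu m$. The plan has two parts: show that $F_i$ is a polynomial with the stated degree bounds, and then prove the equivalence by evaluating at $(y_z,z)$.

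\emph{Polynomiality and degrees.} Expand the definition of $F_i$ term by term. The term $j=0$ is $\Delta^{2J}\alpha_{\nu,0}=\Delta^{2J}Y$. For $1\le j\le J$, write
\[
\Delta^{2J}\alpha_{\nu,j}=\Delta^{2J-2j+1}\cdot\bigl(\Delta^{2j-1}\alpha_{\nu,j}\bigr).
\]
By Lemma~\ref{lem:ag-hensel-lifting}(i) the bracketed factor is a polynomial, so each such term is a polynomial. For the degree bounds, use that $\deg_Y\Delta\le\deg_Y R_\nu-1$ and $\deg_Z\Delta\le\deg_Z R_\nu$, and write $d_Y:=\deg_Y R_\nu$, $d_Z:=\deg_Z R_\nu$.
\begin{itemize}
\item The $Y$-degree of the $j$-th term is at most $(2J-2j+1)(d_Y-1)+(2j-1)(d_Y-1)+1=2J(d_Y-1)+1$. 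The $j=0$ term gives the same bound.
\item The $Z$-degree of every term is at most $2Jd_Z$.
\item The last term $-\Delta^{2J}r_i(Z)^{\mu_\nu}$ has $Y$-degree at most $2J(d_Y-1)$ and $Z$-degree at most $2Jd_Z+\ell\mu_\nu$.
\end{itemize}
Since $2J(d_Y-1)+1\le 2Jd_Y$ (because $J\ge1$), we get $\deg_Y F_i\le 2\mu_\nu m\deg_Y R_\nu$ and $\deg_Z F_i\le 2\mu_\nu m\deg_Z R_\nu+\ell\mu_\nu$. Finiteness of all these bounds requires $d_Y\ge1$, which holds because $R_\nu$ has positive $Y$-degree.

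\emph{Equivalence.} Fix $z\in S'$ assigned to $\nu$. Since $F_i$ is a polynomial, evaluation at $(y_z,z)$ is legitimate. By the choice of $S'$ we have $\Delta(y_z,z)\neq0$, so each $\alpha_{\nu,j}$ is defined at $(y_z,z)$ and
\[
F_i(y_z,z)=\Delta(y_z,z)^{2J}\Bigl(\sum_{j}\lambda_{i,j}\alpha_{\nu,j}(y_z,z)-r_i(z)^{\mu_\nu}\Bigr).
\]
Lemma~\ref{lem:ag-local-recovery} applies because $f_z^{\mu_\nu}\in\mathcal L_{\overline{\mathbb F}_q}(\mu_\nu mP_\infty)$. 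Combining it with Lemma~\ref{lem:ag-hensel-lifting}(ii) gives \eqref{eq:ag-recovered-values}, so the bracket equals $f_z(P_i)^{\mu_\nu}-r_i(z)^{\mu_\nu}$. As $\mu_\nu$ is a power of $p$, this equals $(f_z(P_i)-r_i(z))^{\mu_\nu}$. The factor $\Delta(y_z,z)^{2J}$ is nonzero, so $F_i(y_z,z)=0$ if and only if $f_z(P_i)=r_i(z)$. Here injectivity of the Frobenius power on $\overline{\mathbb F}_q$ is used, and both values lie in $\mathbb F_q$.

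\emph{Main obstacle.} The only delicate step is the degree bookkeeping, specifically that the exponent $2J$ suffices for every $j\le J$. This holds because $2J-2j+1\ge1$ for $j\le J$. The Frobenius step also needs care: one must not treat $f_z(P_i)^{\mu_\nu}=r_i(z)^{\mu_\nu}$ as weaker than $f_z(P_i)=r_i(z)$, and the identity $a^{\mu}-b^{\mu}=(a-b)^{\mu}$ in characteristic $p$ settles this.
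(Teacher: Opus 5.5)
Your proposal is correct and matches the paper's proof essentially step for step. Both write $\Delta^{2J}\alpha_{\nu,j}=\Delta^{2J-2j+1}\bigl(\Delta^{2j-1}\alpha_{\nu,j}\bigr)$ and invoke Lemma~\ref{lem:ag-hensel-lifting}(i) for polynomiality and the degree bounds; both then evaluate at $(y_z,z)$ using \eqref{eq:ag-recovered-values}, the nonvanishing of $\partial_Y R_{\nu,P_0}(y_z,z)$, and the injectivity of $x\mapsto x^{\mu_\nu}$.
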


\begin{proof}
For $1\le j\le \mu_\nu m$, we have
\[
\begin{aligned}
\deg_Y\!\left(
(\partial_YR_{\nu,P_0})^{2\mu_\nu m}\alpha_{\nu,j}
\right)
\le
\deg_Y\!\left(
(\partial_YR_{\nu,P_0})^{2j-1}\alpha_{\nu,j}
\right) +
(2\mu_\nu m-2j+1)
\deg_Y(\partial_YR_{\nu,P_0}).
\end{aligned}
\]
By Lemma~\ref{lem:ag-hensel-lifting}(i) and $\deg_Y(\partial_YR_{\nu,P_0})\le \deg_YR_\nu-1,$ this is at most
\[
\begin{aligned}
&(2j-1)(\deg_YR_\nu-1)+1 +
(2\mu_\nu m-2j+1)(\deg_YR_\nu-1) \\&=
2\mu_\nu m(\deg_YR_\nu-1)+1 \le
2\mu_\nu m\deg_YR_\nu.
\end{aligned}
\]
Similarly, its $Z$-degree is at most
\[
\begin{aligned}
\deg_Z\!\left(
(\partial_YR_{\nu,P_0})^{2\mu_\nu m}\alpha_{\nu,j}
\right)
&\le
\deg_Z\!\left(
(\partial_YR_{\nu,P_0})^{2j-1}\alpha_{\nu,j}
\right) +
(2\mu_\nu m-2j+1)
\deg_Z(\partial_YR_{\nu,P_0}) \\
&\le
(2j-1)\deg_ZR_\nu
+
(2\mu_\nu m-2j+1)\deg_ZR_\nu \\
&=
2\mu_\nu m\deg_ZR_\nu.
\end{aligned}
\]
The same bounds hold for $j=0$, since $\alpha_{\nu,0}=Y$.

The remaining term, $(\partial_Y R_{\nu,P_0})^{2\mu_\nu m}r_i(Z)^{\mu_\nu}$, has $Y$-degree at most $2\mu_\nu m(\deg_Y R_\nu-1)$ and $Z$-degree at most $2\mu_\nu m\deg_Z R_\nu+\ell\mu_\nu$, because $\deg_Z r_i\leq\ell$. This proves that $F_i$ is a polynomial with the stated degree bounds.

For every $z\in S'$ assigned to $\nu$,
\eqref{eq:ag-recovered-values} gives
\[
    F_i(y_z,z)=\bigl((\partial_Y R_{\nu,P_0})(y_z,z)\bigr)^{2\mu_\nu m}\bigl(f_z(P_i)^{\mu_\nu}-r_i(z)^{\mu_\nu}\bigr).
\]
The first factor is nonzero. Since $\mu_\nu$ is a power of  the characteristic, the map $x\mapsto x^{\mu_\nu}$ is injective on $\overline{\F}_q$. Hence $F_i(y_z,z)=0$ if and only if $f_z(P_i)=r_i(z)$.
\end{proof}

Recall that each $z\in S'$ has been assigned to an index $\nu$ and satisfies $R_{\nu,P_0}(y_z,z)=0$. We now divide the parameters assigned to each $\nu$ into sets according to the irreducible factors of $R_{\nu,P_0}$. For fixed $\nu$, fix an irreducible factorization of the $Y$-primitive part of $R_{\nu,P_0}$ in $\overline{\mathbb F}_q[Y,Z]$ and let $\mathcal H_\nu$ be the set of all irreducible factors appearing in this factorization. Every $H\in\mathcal H_\nu$ has positive $Y$-degree. Since $\mathrm{Res}_Y(R_{\nu,P_0},\partial_Y R_{\nu,P_0})\neq 0$, none of these factors is repeated. The degree properties of the specialization at $P_0$ therefore give $|\mathcal H_\nu| \leq \deg_Y R_{\nu,P_0}=\deg_Y R_\nu$ and $\sum_{H\in\mathcal H_\nu}\deg_Z H \leq \deg_Z R_{\nu,P_0}\leq\deg_Z R_\nu.$

For every $z\in S'$ assigned to $\nu$, we have $\mathrm{lc}_Y(R_{\nu,P_0})(z)\neq 0$. Thus the $Y$-content of $R_{\nu,P_0}$ is nonzero at $z$ and $R_{\nu,P_0}(y_z,z)=0$ implies that $H(y_z,z)=0$ for some $H\in\mathcal H_\nu$. This factor is unique. Indeed, if two distinct factors vanished at $(y_z,z)$, this would imply $(\partial_Y R_{\nu,P_0})(y_z,z)=0$, contradicting the definition of $S'$. For each $H\in\mathcal H_\nu$, define
\[
    S_{\nu,H}:=\{z\in S':\ z\text{ is assigned to }\nu  \text{ and }H(y_z,z)=0\}.
\]
Since each parameter was assigned to exactly one $\nu$, the nonempty sets $S_{\nu,H}$ form a partition of $S'$. Moreover, by~\eqref{eq:ag-factor-degrees}, we have
\[
    \sum_{\nu=1}^{r}|\mathcal H_\nu|
    \leq\sum_{\nu=1}^{r}\deg_Y R_\nu
    \leq\sum_{\nu=1}^{r}\mu_\nu\deg_Y R_\nu
    \leq D_Y.
\]
For each $\nu\in\{1,\ldots,r\}$, let $F_i$ be the polynomials constructed in~\eqref{eq:ag-coordinate-polynomial} for this $\nu$. For every $H\in\mathcal H_\nu$, define $I_{\nu,H}:=\{i\in[n]: H\mid F_i\text{ in }\overline{\F}_q[Y,Z]\}$ and
\[
    B_{\nu,H} :=(\deg_Y H)\bigl(2\mu_\nu m\deg_Z R_\nu+\ell\mu_\nu\bigr)  +(\deg_Z H)\bigl(2\mu_\nu m\deg_Y R_\nu\bigr).
\]

\begin{thm}
\label{thm:ag-parameter-grouping}
With the notation above, for every $\nu\in\{1,\ldots,r\}$ and $H\in\mathcal H_\nu$, the following hold:
\begin{enumerate}[label=(\roman*), leftmargin=*]
    \item If $i\in I_{\nu,H}$, then $f_z(P_i)=r_i(z)$ for every $z\in S_{\nu,H}$.

    \item If $i\notin I_{\nu,H}$, then $\bigl|\{z\in S_{\nu,H}:f_z(P_i)=r_i(z)\}\bigr|\leq B_{\nu,H}.$
\end{enumerate}
Moreover, $\sum_{\nu=1}^{r}\sum_{H\in\mathcal H_\nu}B_{\nu,H} \leq 4mD_YD_Z+\ell D_Y.$
\end{thm}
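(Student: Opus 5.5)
Part (i) is the direct case. Fix $\nu$, $H\in\mathcal H_\nu$, and $i\in I_{\nu,H}$. For every $z\in S_{\nu,H}$ we have $H(y_z,z)=0$ by definition. Since $H\mid F_i$ in $\overline{\F}_q[Y,Z]$, it follows that $F_i(y_z,z)=0$. Lemma~\ref{lem:ag-coordinate-polynomials} then gives $f_z(P_i)=r_i(z)$.

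Part (ii) uses a resultant argument. Let $i\notin I_{\nu,H}$. First, $H$ is irreducible in $\overline{\F}_q[Y,Z]$ with positive $Y$-degree, and it is $Y$-primitive because it is a factor of the $Y$-primitive part of $R_{\nu,P_0}$. By Gauss's lemma, $H$ is therefore irreducible in $\overline{\F}_q(Z)[Y]$. Since $H\nmid F_i$ in $\overline{\F}_q[Y,Z]$ and $H$ is primitive, $H\nmid F_i$ in $\overline{\F}_q(Z)[Y]$ as well, so $H$ and $F_i$ are coprime there.

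Two cases arise.
\begin{itemize}
\item If $F_i\neq 0$, Lemma~\ref{lem:ag-resultant-properties} gives $\mathrm{Res}_Y(H,F_i)\neq0$. Its $Z$-degree is at most $(\deg_Y H)\deg_Z F_i+(\deg_Y F_i)\deg_Z H$. Inserting the bounds from Lemma~\ref{lem:ag-coordinate-polynomials}, this is at most $B_{\nu,H}$.
\item If $F_i=0$, then $H\mid F_i$ trivially, which is excluded. (Likewise, if $\deg_Y F_i=0$ with $F_i\neq0$, the resultant is a power of $F_i$ and still has $Z$-degree within the same bound.)
\end{itemize}
Now take $z\in S_{\nu,H}$ with $f_z(P_i)=r_i(z)$. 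Then $F_i(y_z,z)=0$ and $H(y_z,z)=0$, so $z$ is a root of $\mathrm{Res}_Y(H,F_i)$. The parameters are distinct elements of $\F_q$, hence their number is at most $\deg_Z\mathrm{Res}_Y(H,F_i)\le B_{\nu,H}$.

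For the summation bound, we sum $B_{\nu,H}$ over $H\in\mathcal H_\nu$. Using $\sum_H\deg_Y H\le\deg_Y R_\nu$ and $\sum_H\deg_Z H\le\deg_Z R_\nu$, we get
\[
\sum_{H\in\mathcal H_\nu}B_{\nu,H}\le 4\mu_\nu m(\deg_Y R_\nu)(\deg_Z R_\nu)+\ell\mu_\nu\deg_Y R_\nu .
\]
Next we sum over $\nu$ and apply \eqref{eq:ag-factor-degrees}. The first term is bounded by
\[
4m\Bigl(\sum_\nu\mu_\nu\deg_Y R_\nu\Bigr)\Bigl(\sum_\nu\deg_Z R_\nu\Bigr)\le 4mD_YD_Z.
\]
The second term is bounded by $\ell\sum_\nu\mu_\nu\deg_Y R_\nu\le\ell D_Y$.

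The main obstacle is verifying coprimality in $\overline{\F}_q(Z)[Y]$ from non-divisibility in $\overline{\F}_q[Y,Z]$. This step relies on $H$ being $Y$-primitive and irreducible, which the construction of $\mathcal H_\nu$ supplies. The remaining steps are degree bookkeeping.
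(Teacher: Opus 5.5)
Your proposal is correct and follows the paper's proof essentially step for step: part (i) uses $H\mid F_i$ together with Lemma~\ref{lem:ag-coordinate-polynomials}; part (ii) uses Gauss's lemma to get coprimality in $\overline{\mathbb F}_q(Z)[Y]$, then the nonvanishing and $Z$-degree bound of $\mathrm{Res}_Y(H,F_i)$; and the summation uses $\sum_H\deg_Y H\le\deg_Y R_\nu$, $\sum_H\deg_Z H\le\deg_Z R_\nu$ and \eqref{eq:ag-factor-degrees}. Your extra remarks on $Y$-primitivity and on the case $\deg_Y F_i=0$ only make explicit what the paper compresses into a single appeal to Gauss's lemma and Lemma~\ref{lem:ag-resultant-properties}.
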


\begin{proof}
Fix $\nu$ and $H\in\mathcal H_\nu$. If $i\in I_{\nu,H}$, then $H\mid F_i$. Since $H(y_z,z)=0$ for every $z\in S_{\nu,H}$, we also have $F_i(y_z,z)=0$. Lemma~\ref{lem:ag-coordinate-polynomials} gives $f_z(P_i)=r_i(z)$, proving~(i).

Now suppose $i\notin I_{\nu,H}$. Then $H\nmid F_i$ and $F_i\neq 0$. Since $H$ is irreducible and has positive $Y$-degree, Gauss's lemma implies that $H$ and $F_i$ are coprime in $\overline{\F}_q(Z)[Y]$. By Lemma~\ref{lem:ag-resultant-properties}, $\mathrm{Res}_Y(H,F_i)$ is a nonzero polynomial in $\overline{\F}_q[Z]$ and its degree satisfies
\[
\begin{aligned}
    \deg_Z\mathrm{Res}_Y(H,F_i)\leq (\deg_Y H)(\deg_Z F_i) +(\deg_Y F_i)(\deg_Z H)\leq B_{\nu,H},
\end{aligned}
\]
where the last inequality uses Lemma~\ref{lem:ag-coordinate-polynomials}. Whenever $z\in S_{\nu,H}$ satisfies $f_z(P_i)=r_i(z)$, we have $H(y_z,z)=F_i(y_z,z)=0$. Thus, $z$ is a root of $\mathrm{Res}_Y(H,F_i)$. This nonzero polynomial has at most $B_{\nu,H}$ distinct roots, which proves~(ii).

Finally, for each fixed $\nu$, summing $B_{\nu,H}$ over $H\in\mathcal H_\nu$ and using the degree bounds above gives
\[
\begin{aligned}
    \sum_{H\in\mathcal H_\nu}B_{\nu,H}
    &\leq (\deg_Y R_\nu)
          \bigl(2\mu_\nu m\deg_Z R_\nu+\ell\mu_\nu\bigr) +(\deg_Z R_\nu)
          \bigl(2\mu_\nu m\deg_Y R_\nu\bigr)\\
    &=4\mu_\nu m(\deg_Y R_\nu)(\deg_Z R_\nu)
      +\ell\mu_\nu\deg_Y R_\nu.
\end{aligned}
\]
Summing over $\nu$ and using~\eqref{eq:ag-factor-degrees}, we obtain
\[
\begin{aligned}
    \sum_{\nu=1}^{r}\sum_{H\in\mathcal H_\nu}B_{\nu,H}
    &\leq 4m\sum_{\nu=1}^{r}
          \mu_\nu(\deg_Y R_\nu)(\deg_Z R_\nu)
          +\ell\sum_{\nu=1}^{r}\mu_\nu\deg_Y R_\nu\\
    &\leq 4m\left(\sum_{\nu=1}^{r}\mu_\nu\deg_Y R_\nu\right)
            \left(\sum_{\nu=1}^{r}\deg_Z R_\nu\right)
          +\ell\sum_{\nu=1}^{r}\mu_\nu\deg_Y R_\nu\\
    &\leq 4mD_YD_Z+\ell D_Y.
\end{aligned}
\]
This completes the proof.
\end{proof}

\subsection{Proof of Theorem~\ref{thm:ag-johnson}}
\label{subsec:ag-main-proof}

To prove the main theorem, we first establish two counting lemmas.

\begin{lemma}
\label{lem:ag-large-group}
Let $\nu\in\{1,\ldots,r\}$ and $H\in\mathcal H_\nu$. If $e<n-m$ and $|S_{\nu,H}|> \frac{n-m}{n-m-e}\,B_{\nu,H},$ then $|I_{\nu,H}|>m$.
\end{lemma}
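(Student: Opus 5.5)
We argue by contradiction: suppose $|I_{\nu,H}|\le m$, and double count agreement incidences between parameters in $S_{\nu,H}$ and coordinates outside $I_{\nu,H}$. Set $k:=|I_{\nu,H}|\le m$.

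\emph{Lower bound.} Every $z\in S_{\nu,H}$ comes with its retained witness $(z,f_z,A_z)$, where $|A_z|\ge n-e$ and $f_z(P_i)=r_i(z)$ for all $i\in A_z$. At most $k$ of these coordinates lie in $I_{\nu,H}$, so $z$ agrees with $\mathbf r(z)$ at at least $n-e-k\ge n-e-m$ coordinates $i\notin I_{\nu,H}$. Since $e<n-m$, this quantity is positive. Summing over $S_{\nu,H}$ shows the number of pairs $(z,i)$ with $z\in S_{\nu,H}$, $i\notin I_{\nu,H}$ and $f_z(P_i)=r_i(z)$ is at least $|S_{\nu,H}|(n-e-k)$.

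\emph{Upper bound.} By Theorem~\ref{thm:ag-parameter-grouping}(ii), each coordinate $i\notin I_{\nu,H}$ carries at most $B_{\nu,H}$ such parameters. The number of pairs is therefore at most $(n-k)B_{\nu,H}$.

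Combining the two bounds gives $|S_{\nu,H}|\le \frac{n-k}{n-k-e}B_{\nu,H}$. The function $k\mapsto \frac{n-k}{n-k-e}=1+\frac{e}{n-k-e}$ is nondecreasing in $k$ on $0\le k\le m$, because the denominator stays positive there. Hence the right-hand side is at most $\frac{n-m}{n-m-e}B_{\nu,H}$, which contradicts the hypothesis. So $|I_{\nu,H}|>m$.

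The only delicate point is the monotonicity step: the bound must be evaluated at the worst case $k=m$ rather than at the actual $k$. This works because the hypothesis $e<n-m$ keeps $n-k-e>0$ for every $k\le m$. There is no real obstacle otherwise, since Theorem~\ref{thm:ag-parameter-grouping}(ii) already supplies the per-coordinate bound.
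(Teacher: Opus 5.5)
Your proof is correct and is essentially the paper's double-counting argument built on Theorem~\ref{thm:ag-parameter-grouping}(ii). The only difference is that the paper counts disagreements $f_z(P_i)\neq r_i(z)$ on a fixed set of exactly $n-m$ coordinates outside $I_{\nu,H}$: there are at least $|S_{\nu,H}|-B_{\nu,H}$ per coordinate and at most $e$ per parameter, which gives $(n-m-e)|S_{\nu,H}|\le (n-m)B_{\nu,H}$ directly and so avoids your monotonicity step in $k$.
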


\begin{proof}
For every $z\in S_{\nu,H}$, we have $d(f_z(D),\mathbf r(z))\leq e$. Thus the number of pairs $(i,z)\in[n]\times S_{\nu,H}$ with $f_z(P_i)\neq r_i(z)$ is at most $e|S_{\nu,H}|$.

Suppose that $|I_{\nu,H}|\leq m$. We can then choose $n-m$ coordinates outside $I_{\nu,H}$. By Theorem~\ref{thm:ag-parameter-grouping}(ii), at each of these coordinates, the equality $f_z(P_i)=r_i(z)$ holds for at most $B_{\nu,H}$ parameters $z\in S_{\nu,H}$. Hence, for each such $i$, there are at least $|S_{\nu,H}|-B_{\nu,H}$ parameters $z\in S_{\nu,H}$ for which $f_z(P_i)\neq r_i(z).$ Counting the pairs $(i,z)$ with $i$ among these $n-m$ coordinates and $f_z(P_i)\neq r_i(z)$, we obtain $e|S_{\nu,H}| \geq (n-m)\bigl(|S_{\nu,H}|-B_{\nu,H}\bigr).$ Rearranging, it gives $(n-m-e)|S_{\nu,H}|\leq(n-m)B_{\nu,H},$ contrary to the assumed lower bound on $|S_{\nu,H}|$. Hence $|I_{\nu,H}|>m$.
\end{proof}

The next lemma uses a counting argument similar to that in~\cite[Lemma~2.4]{BCHKS25}.

\begin{lemma}
\label{lem:ag-common-to-correlated}
Fix $\mathbf u_0,\ldots,\mathbf u_\ell\in\mathbb F_q^n$, an integer $e\geq0$ and write $\mathbf r(z)=\sum_{j=0}^{\ell}\mathbf u_jz^j$. Let $S_0\subseteq\mathbb F_q$. For each $z\in S_0$, let $(z,f_z,A_z)$ be an \emph{MCA-bad} $e$ agreement witness. Suppose there is a set $I\subseteq[n]$ with $|I|>m$ such that $f_z(P_i)=r_i(z)$ for every $z\in S_0$ and $i\in I$. Then $|S_0|\leq\ell(e+1)$.
\end{lemma}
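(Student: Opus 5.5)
Assume $|S_0|\geq\ell(e+1)+1$ and derive a contradiction. First we show that all $f_z$ lie on one polynomial curve in $\mathcal L(G)$. Then the coordinatewise counting of Lemma~\ref{lem:collinearity-combinatorial} shows that the common agreement set is too large to be MCA-bad.

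\emph{Step 1: a common representation.} Since $|S_0|\ge\ell+1$, pick distinct $z_0,\ldots,z_\ell\in S_0$. Define $v_0,\ldots,v_\ell\in\mathcal L(G)$ by Lagrange interpolation of $f_{z_0},\ldots,f_{z_\ell}$, exactly as in Lemma~\ref{lem:clean-collinearity}. This gives $f_{z_h}=\sum_j v_jz_h^j$. For each $i\in I$, the polynomials $r_i(Z)$ and $\sum_j v_j(P_i)Z^j$ have degree at most $\ell$ and agree at $z_0,\ldots,z_\ell$, so they coincide. Hence $u_{j,i}=v_j(P_i)$ for all $i\in I$ and all $j$. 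Now take any $z\in S_0$ and set $g_z=\sum_j v_jz^j\in\mathcal L(G)$. For $i\in I$ we have $f_z(P_i)=r_i(z)=g_z(P_i)$. Thus $f_z-g_z\in\mathcal L(G)$ vanishes at more than $m$ points, and since its pole degree is at most $m$, it is zero. So $f_z=g_z$ for every $z\in S_0$.

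\emph{Step 2: use MCA-badness.} Fix $z\in S_0$. Its witness set satisfies $A_z\subseteq\{i:f_z(P_i)=r_i(z)\}$ and $|A_z|\ge n-e$. If every $i\in A_z$ satisfied $(u_{0,i},\ldots,u_{\ell,i})=(v_0(P_i),\ldots,v_\ell(P_i))$, then $v_0,\ldots,v_\ell$ would certify that the witness is not MCA-bad. So each $z\in S_0$ has some coordinate $i_z\in A_z$ in the set
\[
W:=\{i:(u_{j,i})_j\neq(v_j(P_i))_j\}.
\]
At such an $i_z$, the nonzero polynomial $r_i(Z)-\sum_j v_j(P_i)Z^j$ vanishes at $z$, because $r_i(z)=f_z(P_i)=g_z(P_i)$.

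\emph{Step 3: counting.} Each $i\in W$ has a nonzero difference polynomial of degree at most $\ell$, so it is an agreement coordinate for at most $\ell$ parameters. It remains to bound $|W|$. Each $z\in S_0$ disagrees with $g_z$ at every $i\in W$ except at most $\ell$ roots. Summing errors over $S_0$ gives $|W|(|S_0|-\ell)\le e|S_0|$. When $|S_0|\ge\ell(e+1)+1$, this yields $|W|<e+1$, so $|W|\le e$, as in Corollary~\ref{cor:without-slack}. Each $z$ needs at least one root in $W$, and $W$ supplies at most $\ell|W|\le\ell e$ roots in total. Hence $|S_0|\le\ell e<\ell(e+1)+1$, a contradiction.

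I expect the main care to be needed in Step 2, namely ensuring that badness forces a root of the difference polynomial inside $W$. Step 3 then needs only the conservative bound $|W|\le e$, which is enough to reach the stated $\ell(e+1)$.
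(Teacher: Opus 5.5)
Your proposal is correct and follows essentially the same route as the paper: Lagrange interpolation plus the zero count on $I$ gives $f_z=\sum_j v_jz^j$, MCA-badness gives $|S_0|\le\ell|W|$, and error counting gives $|W|(|S_0|-\ell)\le e|S_0|$. The only difference is the last step. The paper multiplies these two inequalities directly to get $|S_0|(|S_0|-\ell)\le\ell e|S_0|$. You instead argue by contradiction and round $|W|\le e$, as in Corollary~\ref{cor:without-slack}. Both are valid.
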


\begin{proof}
If $|S_0|\leq\ell$, the conclusion is immediate. Otherwise, choose distinct $z_0,\ldots,z_\ell\in S_0$ and define $v_0,\ldots,v_\ell$ by Lagrange interpolation:
\[
    \sum_{j=0}^{\ell}v_jZ^j
    :=\sum_{h=0}^{\ell}f_{z_h}
    \prod_{\substack{0\le k\le\ell\\ k\ne h}}\frac{Z-z_k}{z_h-z_k}.
\]
All these functions belong to $\mathcal L(G)$. For every $i\in I$, the equalities $f_{z_h}(P_i)=r_i(z_h)$ for $0\le h\le\ell$ give $v_j(P_i)=u_{j,i}$ for $0\le j\le\ell$. Thus, for every $z\in S_0$, the function $f_z-\sum_{j=0}^{\ell}v_jz^j\in\mathcal L(G)$ vanishes at all $P_i$ with $i\in I$. Since $|I|>m$, the bound on the number of zeros gives $f_z=\sum_{j=0}^{\ell}v_jz^j$ for every $z\in S_0$.

Let $w$ be the number of coordinates $i$ for which $(v_0(P_i),\ldots,v_\ell(P_i))\neq(u_{0,i},\ldots,u_{\ell,i})$. At each such coordinate, $\sum_{j=0}^{\ell}(v_j(P_i)-u_{j,i})Z^j$ is a nonzero polynomial of degree at most $\ell$. Hence $f_z(P_i)=r_i(z)$ holds for at most $\ell$ values of $z\in S_0$ at this coordinate.

For each $z\in S_0$, the witness $(z,f_z,A_z)$ is \emph{MCA-bad}, so $(v_0,\ldots,v_\ell)$ cannot agree with $(\mathbf u_0,\ldots,\mathbf u_\ell)$ on all of $A_z$. Therefore, at least one of these $w$ coordinates belongs to $A_z$ where $f_z(P_i)=r_i(z)$. Since each of the $w$ coordinates can satisfy this equality for at most $\ell$ parameters, we obtain $|S_0|\leq\ell w$.

On the other hand, each of these $w$ coordinates satisfies $f_z(P_i)\neq r_i(z)$ for at least $|S_0|-\ell$ parameters. Each parameter has at most $e$ such coordinates, because $|A_z|\geq n-e$. Counting the pairs $(i,z)\in[n]\times S_0$ with $f_z(P_i)\neq r_i(z)$ gives
\[
    |S_0|(|S_0|-\ell)\leq\ell w(|S_0|-\ell)\leq\ell e|S_0|.
\]
Dividing by $|S_0|$ proves $|S_0|\leq\ell(e+1)$.
\end{proof}

We are now ready to prove the main theorem.

\begin{proof}[Proof of Theorem~\ref{thm:ag-johnson}]
Suppose, for a contradiction, that $|S|$ exceeds the stated bound. Since $0<\rho<1$, the radius assumption gives $e\leq(1-\sqrt\rho-\eta)n<(1-\rho)n=n-m$. Recall that for every $z\in S$ we fixed an \emph{MCA-bad} witness $(z,f_z,A_z)$ and that $d(f_z(D),\mathbf r(z))\leq e$.

The preceding construction gives $S'\subseteq S$ with $|S\setminus S'|\leq(2D_Y+1)D_Z$. The nonempty sets $S_{\nu,H}$ partition $S'$ and $\sum_{\nu=1}^{r}|\mathcal H_\nu|\leq D_Y$. Together with Theorem~\ref{thm:ag-parameter-grouping} and the assumed lower bound on $|S|$, these estimates give
\[
\begin{aligned}
    |S'|&\geq |S|-(2D_Y+1)D_Z\\ &>\frac{n-m}{n-m-e}\bigl(4mD_YD_Z+\ell D_Y\bigr)
       +\ell D_Y(e+1)\\
    &\geq\sum_{\nu=1}^{r}\sum_{H\in\mathcal H_\nu}
       \left(\frac{n-m}{n-m-e}B_{\nu,H}+\ell(e+1)\right).
\end{aligned}
\]
Since $|S'|=\sum_{\nu=1}^{r}\sum_{H\in\mathcal H_\nu}|S_{\nu,H}|$, there is a set $S_{\nu,H}$ such that $|S_{\nu,H}|> \frac{n-m}{n-m-e}B_{\nu,H}+\ell(e+1).$ In particular, $|S_{\nu,H}|>\frac{n-m}{n-m-e}B_{\nu,H}$ and $|S_{\nu,H}|>\ell(e+1)$.

By Lemma~\ref{lem:ag-large-group}, we have $|I_{\nu,H}|>m$. Moreover, Theorem~\ref{thm:ag-parameter-grouping}(i) gives $f_z(P_i)=r_i(z)$ for every $z\in S_{\nu,H}$ and $i\in I_{\nu,H}$. For every $z$ in this set, retain the originally chosen \emph{MCA-bad} witness $(z,f_z,A_z)$. Applying Lemma~\ref{lem:ag-common-to-correlated} with $S_0=S_{\nu,H}$ and $I=I_{\nu,H}$ gives $|S_{\nu,H}|\leq\ell(e+1)$, a contradiction. This proves the theorem.
\end{proof}

\begin{rmk}
For fixed $\rho\in(0,1)$, our parameter choices give $s=O(\eta^{-1})$, $D_Y=O(\eta^{-1})$ and $D_Z=O(\ell\eta^{-2})$. Thus Theorem~\ref{thm:ag-johnson} gives $|S|=O(\ell n/\eta^3)$ and $(\ell,e,\sigma)$-MCA with $\sigma=\min\{1,B_e/q\}=O\!\left(\frac{\ell n}{\eta^3q}\right).$ By Remark~\ref{rmk:mca-ca-pg}, $\mathcal C$ also has $(\ell,e,\sigma,0)$-CA and $(\ell,e,\sigma,0)$-PG. Taking $\ell=1$ gives the corresponding affine line statements. %Allowing slack in correlated agreement does not, in general, reduce the soundness to $O(1/q)$ throughout the Johnson regime. Indeed, the Reed-Solomon counterexample recalled in~\cite[Section~1.1]{BCHKS25} has $\Omega(n)$ parameters satisfying $d(\mathbf r(z),\mathcal C)\le \frac n2,$ while $d([\mathbf u_0,\mathbf u_1],\mathcal C^2)\ge \frac{3n}{4}$ where the radius $1/2$ is strictly below the Johnson bound. Therefore, for every $\eta<1/2$, correlated agreement with slack $\varepsilon$ still has soundness error $\Omega(n/q)$, since $(1+\eta)\frac n2<\frac{3n}{4}.$ Since Reed-Solomon codes are the $\g=0$ case of one point AG codes, this obstruction applies to any correlated agreement statement covering all one point AG codes.

%It is well known that mutual correlated agreement implies correlated agreement, which in turn implies proximity gaps. Hence our theorem also gives correlated agreement and proximity gaps with the same soundness bound and without slack. However, for correlated agreement (and hence also for mutual correlated agreement), allowing a small slack does not generally reduce the number of exceptional parameters to $O(1)$ throughout the Johnson regime. Indeed, the Reed-Solomon counterexample recalled in~\cite[Section~1.1]{BCHKS25} has $\Omega(n)$ parameters satisfying $d(\mathbf{r}(z),\cC)\le n/2$, while $d([\mathbf u_0,\mathbf u_1],\mathcal C^2)\geq 3n/4$, where the radius $1/2$ is strictly below its Johnson bound. Thus, for correlated agreement and mutual correlated agreement, Thus, allowing multiplicative slack $0\leq\varepsilon<1/2$ still leaves $\Omega(n)$ exceptional parameters, since $(1+\varepsilon)n/2<3n/4$. Since Reed-Solomon codes are the $g=0$ case of our setting, this correlated agreement/mutual correlated agreement obstruction also applies to statements covering all one point AG codes.
\end{rmk}

\section{Counterexamples for Correlated Agreement}\label{sec:counterexample}
Previous work on Reed-Solomon codes has shown that, at certain radii above half the minimum distance but below the Johnson radius, there exist two words $\mathbf u_0,\mathbf u_1$ such that $\mathbf u_0+z\mathbf u_1$ is close to the code for many values of $z\in\mathbb F_q$, while $\mathbf u_0$ and $\mathbf u_1$ cannot simultaneously agree with two codewords on a sufficiently large common set of coordinates~\cite{DEEP20,BCHKS25}. Thus, even with fixed positive slack, the soundness error for correlated agreement can remain large.

In this section, we construct counterexamples for one-point AG codes on the Garcia-Stichtenoth tower. We take the natural point at infinity as $P_\infty$ and use all the other rational points for evaluation. Our first construction uses all nonzero trace values to show that correlated agreement can fail at radii strictly between half the actual minimum distance and the Johnson radius, even with fixed positive slack. Moreover, the relative gap above half the minimum distance can be made arbitrarily small. For each fixed positive bound on this gap and fixed slack less than one, the soundness error has a positive lower bound independent of $q$ and $n$. This constant may depend on the chosen gap and slack. A second construction keeps the field fixed while this gap tends to zero, and requires $\sigma=1$ for $(1,e,\sigma,\varepsilon)$-CA.

We use affine lines, that is, $\ell=1$. Taking $\mathbf u_2=\cdots=\mathbf u_\ell=0$ gives the same counterexamples for every $\ell\ge2$. We use the notation from Section~\ref{sec:preliminaries}. Let $\mathcal C=\mathcal C_{\mathcal L}(D,G)$, where $D=P_1+\cdots+P_n$, $G=mP_\infty$ and write $\rho=m/n$. Recall that $d(\mathcal C)$ denotes the minimum distance, while $n-m$ is the designed distance.

For $\mathbf u_0,\mathbf u_1\in\mathbb F_q^n$, write $\mathbf r(Z)=\mathbf u_0+Z\mathbf u_1$ and $S=\{z\in\mathbb F_q:d(\mathbf r(z),\mathcal C)\le e\}.$ To show that $(1,e,\sigma,\varepsilon)$-CA fails, it suffices to construct $\mathbf u_0,\mathbf u_1$ such that $\frac{|S|}{q}>\sigma$ and $d([\mathbf u_0,\mathbf u_1],\mathcal C^2)>(1+\varepsilon)e.$

For the rest of this section, we use the Garcia-Stichtenoth tower from Section~\ref{sec:GS-tower}, fix one level, omit the subscript $j$, and write $x=x_0$.

\subsection{A Counterexample Using a Nonzero Trace Value}

\begin{prop}
Let $q=p^{2h}$, where $p$ is prime and $h\ge2$. Suppose that $\frac{qb}{p^3}\le m<\frac{qb}{p^2}.$ Set $\mathbf u_0=x^{q/p}(D)$, $\mathbf u_1=x^{q/p^2}(D)$ and $e=n-\frac{qb}{p}.$ Then, for every $z\in(\mathbb F_q^*)^{p-1}$, we have $d(\mathbf r(z),\mathcal C)=e,$ while $d([\mathbf u_0,\mathbf u_1],\mathcal C^2)\ge n-\frac{qb}{p^2}\ge\left(1+\frac1p\right)e.$

Consequently, for every $0\le\varepsilon<1/p$, $(1,e,\sigma,\varepsilon)$-CA requires $\sigma\ge\frac{q-1}{(p-1)q}.$ Moreover, we have $\frac{d(\mathcal C)}2<e<(1-\sqrt\rho)n.$
\end{prop}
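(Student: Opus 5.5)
The plan is to realise the affine line through the trace structure of the Garcia–Stichtenoth function $x=x_0$, and then to use only pole-order counting together with the fact from Lemma~\ref{lem:GS-tower} that each nonzero value of $x$ is taken at exactly $b$ evaluation points. Write $\mathrm{Tr}=\mathrm{Tr}_{q/p}(Y)=\sum_{k=0}^{2h-1}Y^{p^k}$. For $\beta\in\mathbb F_q^*$ and $c\in\mathbb F_p^*$, the polynomial $\mathrm{Tr}(\beta Y)-c$ has degree $q/p$, leading terms $\beta^{q/p}Y^{q/p}+\beta^{q/p^2}Y^{q/p^2}$, and all its remaining terms have degree at most $q/p^3$. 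It has exactly $q/p$ distinct roots in $\mathbb F_q$, and all of them are nonzero because $c\neq0$.

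\textbf{Step 1: the line points are close to the code.} Dividing by $\beta^{q/p}$ gives $Y^{q/p}+zY^{q/p^2}+L_\beta(Y)$ with $z=\beta^{q/p^2-q/p}$ and $\deg L_\beta\le q/p^3$. Since $(x)_\infty=bP_\infty$ and $m\ge qb/p^3$, the function $f:=-L_\beta(x)$ lies in $\mathcal L(G)$. The difference $\mathbf r(z)-f(D)$ is the evaluation of the polynomial at $x$, so it vanishes exactly at the evaluation points with $x(P)$ a root. There are $q/p$ such roots, each nonzero and hence hit by exactly $b$ points, so $d(\mathbf r(z),\mathcal C)\le n-qb/p=e$.

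For the reverse inequality, take any $g\in\mathcal L(G)$. Because $m<qb/p^2<qb/p$, the function $x^{q/p}+zx^{q/p^2}-g$ has pole divisor exactly $(qb/p)P_\infty$. It therefore has at most $qb/p$ zeros, which gives equality $d(\mathbf r(z),\mathcal C)=e$.

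Next I check which challenges $z$ arise. The exponent $q/p^2-q/p=-(p-1)q/p^2$ is $q/p^2$ (a Frobenius power, hence a bijection) times $-(p-1)$. So the $z$'s obtained as $\beta$ varies are exactly $(\mathbb F_q^*)^{p-1}$, a set of size $(q-1)/(p-1)$.

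\textbf{Step 2: the interleaved distance.} Suppose $g_0,g_1\in\mathcal L(G)$ agree with $\mathbf u_0,\mathbf u_1$ on a common set $I$. Then $x^{q/p^2}-g_1$ is nonzero with pole order exactly $qb/p^2>m$, so $|I|\le qb/p^2$. Hence $d([\mathbf u_0,\mathbf u_1],\mathcal C^2)\ge n-qb/p^2$.

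The comparison with $(1+1/p)e$ is the identity
\[
n-\frac{qb}{p^2}-\Bigl(1+\frac1p\Bigr)\Bigl(n-\frac{qb}{p}\Bigr)=\frac{qb-n}{p},
\]
which is nonnegative since $n\le qb$. For the CA consequence, note that $e>0$ because $n>(q-1)b\ge qb/p$. If $\sigma<(q-1)/((p-1)q)$, then CA would force interleaved distance at most $(1+\varepsilon)e<(1+1/p)e$, a contradiction.

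\textbf{Step 3: the radius window.} This is the only place needing care, because it requires an \emph{upper} bound on the true minimum distance $d(\mathcal C)$, not just the designed one. I plan to get it by exhibiting a low-weight codeword. Take any polynomial $P$ of degree $k=q/p^3$ with $k$ distinct nonzero roots in $\mathbb F_q$. Then $P(x)\in\mathcal L(G)$, it is a nonzero codeword, and it has weight $n-kb$. Hence $d(\mathcal C)\le n-qb/p^3$.

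With this bound, $d(\mathcal C)/2<e$ reduces to $n>qb(2/p-1/p^3)$. This follows from $n>(q-1)b$ together with $q(1-2/p+1/p^3)\ge1$, which holds since $q\ge16$.

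For the Johnson bound, $e/n<1-\sqrt\rho$ is equivalent to $\sqrt{m/n}<qb/(pn)$, i.e. $mn<q^2b^2/p^2$. This follows from $m<qb/p^2$ and $n\le qb$.
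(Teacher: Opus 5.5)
Your proposal is correct and follows the paper's proof essentially step by step. It uses the same normalized trace polynomial evaluated at $x$, with pole-order counting, to get $d(\mathbf r(z),\mathcal C)=e$, the same pole argument on $\mathbf u_1$ for the interleaved distance, and the same inequality $mn<q^2b^2/p^2$ for the Johnson bound. The differences are cosmetic: you identify the challenge set via the Frobenius bijection $\beta\mapsto\beta^{q/p^2}$ instead of the paper's gcd computation, and you bound $d(\mathcal C)$ with a codeword vanishing on $q/p^3$ fibres of $x$ (weight $n-qb/p^3$) instead of the paper's $(x-a_1)(x-a_2)$ (weight $n-2b$); both versions close the argument.
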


\begin{proof}
Let $T(Y)=\sum_{i=0}^{2h-1}Y^{p^i}.$ For every $\beta\in\mathbb F_q^*$, expanding the two highest-degree terms gives $T(\beta Y)=\beta^{q/p}Y^{q/p}+\beta^{q/p^2}Y^{q/p^2}+H_\beta(Y)$ where $\deg_Y H_\beta=\frac{q}{p^3}.$ Set $z=\beta^{q/p^2-q/p}$, fix one such $\beta$ for each $z$, and let $f_z=\frac{1-H_\beta(x)}{\beta^{q/p}}\in\mathcal L(G),$ where the inclusion follows from $m\ge qb/p^3$. For every $i\in[n]$, we have $r_i(z)-f_z(P_i)=\frac{T(\beta x(P_i))-1}{\beta^{q/p}}.$

Since $T(\beta a)=\operatorname{Tr}_{\mathbb F_q/\mathbb F_p}(\beta a)$, the equation $T(\beta a)=1$ has exactly $q/p$ solutions in $\mathbb F_q$, all of which are nonzero. By Lemma~\ref{lem:GS-tower}, each such solution corresponds to exactly $b$ evaluation points. Thus $\mathbf{r}(z)$ agrees with $f_z(D)$ on exactly $qb/p$ coordinates, giving $d(\mathbf r(z),\mathcal C)\le e.$ Moreover, $\gcd\left(\frac{q}{p^2}-\frac qp,q-1\right)=p-1.$ Hence, as $\beta$ ranges over $\mathbb F_q^*$, the values of $z$ range over $(\mathbb F_q^*)^{p-1}$, which has size $(q-1)/(p-1)$.

We next show that $d(\mathbf r(z),\mathcal C)\ge e$. Fix any $f\in\mathcal L(G)$. Since $m<qb/p^2<qb/p$, the function $x^{q/p}+zx^{q/p^2}-f$ has pole divisor $(qb/p)P_\infty$. It therefore has at most $qb/p$ zeros among the evaluation points, so $d(\mathbf{r}(z),f(D))\ge e$. Consequently, $d(\mathbf{r}(z),\mathcal C)=e$ for every $z\in(\mathbb F_q^*)^{p-1}$.

Similarly, for every $f\in\mathcal L(G)$, the function $x^{q/p^2}-f$ has pole divisor $(qb/p^2)P_\infty$. Hence $d([\mathbf u_0,\mathbf u_1],\mathcal C^2)\ge d(\mathbf u_1,\mathcal C)\ge n-\frac{qb}{p^2}.$ Since $e=n-qb/p$ and $n\le qb$, we obtain $n-\frac{qb}{p^2}=e+\left(\frac1p-\frac1{p^2}\right)qb\ge\left(1+\frac1p\right)e.$ Thus, for every $0\le\varepsilon<1/p$, the correlated agreement conclusion fails, while at least $(q-1)/(p-1)$ values of $z$ satisfy $d(\mathbf{r}(z),\mathcal C)\le e$. This proves the claimed soundness lower bound.

Finally, we verify the radius bounds. Since $m\ge qb/p^3\ge2b$, choose distinct $a_1,a_2\in\mathbb F_q^*$. The function $(x-a_1)(x-a_2)\in\mathcal L(G)$ gives a nonzero codeword with exactly $2b$ zero coordinates. Therefore, $d(\mathcal C)\le n-2b<2e,$ where the strict inequality follows from $n\ge(q-1)b$. On the other hand, $mn<\frac{qb}{p^2}n\le\frac{q^2b^2}{p^2}=(n-e)^2.$ Thus $e<(1-\sqrt{\rho})n,$ completing the proof.
\end{proof}

\begin{cor}
Fix a prime $p$, $\rho_0\in(p^{-3},p^{-2})$ and $0\le\varepsilon<1/p$. For every sufficiently large $q=p^{2h}$, consider the Garcia-Stichtenoth tower over $\mathbb F_q$. At each level $j$, set $m_j=\lfloor\rho_0n_j\rfloor$ and $e_j=n_j-\frac{qb_j}{p},$ and let $\mathcal C_j=\mathcal C_{\mathcal L}(D_j,m_jP_{\infty,j})$. For all sufficiently large $j$, these codes satisfy $2\g_j-1\le m_j<n_j$ and have rate tending to $\rho_0-\frac1{\sqrt q-1}>0.$ Moreover, $(1,e_j,\sigma,\varepsilon)$-CA requires soundness $\sigma\ge\frac{q-1}{(p-1)q}.$
\end{cor}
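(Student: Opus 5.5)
This corollary follows from the preceding proposition applied level by level, together with Lemma~\ref{lem:GS-tower}. The only work is to check that, for $q$ large and $j$ large, the hypotheses of the proposition hold and the codes have the stated rate.

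\textbf{Step 1: the degree window.} By Lemma~\ref{lem:GS-tower}, $n_j/b_j\to q-1$. Hence
\[
\frac{m_j}{qb_j}\longrightarrow \frac{\rho_0(q-1)}{q}.
\]
We need this limit to lie strictly inside $(p^{-3},p^{-2})$, so that $\frac{qb_j}{p^3}\le m_j<\frac{qb_j}{p^2}$ for all large $j$. The upper inequality is automatic, since $\rho_0(q-1)/q<\rho_0<p^{-2}$. For the lower one we need $\rho_0(1-1/q)>p^{-3}$. This holds for every sufficiently large $q$, because $\rho_0>p^{-3}$ strictly. This is the first place where "sufficiently large $q$" is used. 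Also $h\ge2$ holds once $q\ge p^4$.

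\textbf{Step 2: admissibility and rate.} Again by Lemma~\ref{lem:GS-tower}, $\g_j/n_j\to 1/(\sqrt q-1)$. So $2\g_j-1\le m_j$ holds for large $j$ provided $\rho_0>2/(\sqrt q-1)$. Since $\rho_0$ is a fixed positive number and $q\to\infty$, this is true for all sufficiently large $q$. This is the second largeness condition on $q$, and it forces the rate to be positive.

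Once $m_j\ge 2\g_j-1$, the rate is
\[
\frac{m_j+1-\g_j}{n_j}\longrightarrow \rho_0-\frac{1}{\sqrt q-1}>0.
\]
Finally, $m_j<n_j$ because $\rho_0<1$.

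\textbf{Step 3: apply the proposition.} With the window verified, the proposition applies at each such level $j$ with $b=b_j$, $n=n_j$, $m=m_j$ and $e=e_j$. It gives a line for which $d(\mathbf r(z),\mathcal C_j)=e_j$ for all $(q-1)/(p-1)$ values $z\in(\mathbb F_q^*)^{p-1}$. At the same time, $d([\mathbf u_0,\mathbf u_1],\mathcal C_j^2)\ge(1+1/p)e_j>(1+\varepsilon)e_j$, using $\varepsilon<1/p$ and $e_j>0$.

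For this last strict inequality we need $e_j>0$. This holds because $n_j>(q-1)b_j\ge qb_j/p$ whenever $p\ge2$. Hence any $(1,e_j,\sigma,\varepsilon)$-CA guarantee needs $\sigma\ge (q-1)/((p-1)q)$.

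\textbf{Main obstacle.} The only delicate point is Step 1, keeping $m_j$ above $qb_j/p^3$ in the limit. It is handled by the strict inequality $\rho_0>p^{-3}$ and by taking $q$ large. Because the inequalities are strict in the limit, floor effects are absorbed for large $j$.
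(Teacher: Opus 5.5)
Your proposal is correct and is the intended derivation. The paper gives no separate proof of this corollary and treats it as a direct consequence of the preceding proposition together with Lemma~\ref{lem:GS-tower}; you use exactly these for the window $qb_j/p^3\le m_j<qb_j/p^2$, for $2\g_j-1\le m_j<n_j$ with the limiting rate, and for $e_j>0$, the last of which gives $(1+1/p)e_j>(1+\varepsilon)e_j$ and hence the bound $\sigma\ge\frac{q-1}{(p-1)q}$.
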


\subsection{Counterexamples Near Half the Minimum Distance}
\label{sec:trace-counterexamples}

We next construct counterexamples at radii that can be arbitrarily close to half the actual minimum distance. The key idea is to use the polynomial $T(Y)^{p^a-1}-1$, where $T$ is the trace polynomial to $\mathbb F_{p^a}$, so that all elements with nonzero trace become roots.

\begin{thm}\label{thm:GS-tower}
Let $p$ be a prime, $a\ge1$, $h\ge2$ be integers and set $q=p^{2ah}$. At a fixed level of the Garcia-Stichtenoth tower in Section~\ref{sec:GS-tower}, let $m=\left(1-\frac{2}{p^a}+\frac{1}{p^{3a}}\right)qb$ and $e=n-\left(1-\frac{1}{p^a}\right)qb,$ and consider the code $\mathcal C=\mathcal C_{\mathcal L}(D,mP_\infty)$. Set $\mathbf u_0=x^{(1-2/p^a+1/p^{2a})q}(D)$ and $\mathbf u_1=x^{(1-1/p^a)q}(D).$

For every $z\in-(\mathbb F_q^*)^{p^a-1}$, we have $d(\mathbf r(z),\mathcal C)=e,$ while $d([\mathbf u_0,\mathbf u_1],\mathcal C^2)\geq \left(2-\frac{1}{p^a}\right)e.$ Consequently, for every $0\le\varepsilon<1-1/p^a$, $(1,e,\sigma,\varepsilon)$-CA requires $\sigma\ge\frac{q-1}{(p^a-1)q}.$ Moreover, $0<\frac en-\frac{d(\mathcal C)}{2n}\le\frac{1}{2p^{3a}}$ and $e<(1-\sqrt\rho)n.$
\end{thm}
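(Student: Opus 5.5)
The plan is to adapt the proof of the preceding proposition, replacing the trace polynomial by the $(p^a-1)$-th power of the trace to $\mathbb F_{p^a}$. Write $s=p^a$ and let $T(Y)=\sum_{i=0}^{2h-1}Y^{s^i}$, so that $T(\beta y)=\mathrm{Tr}_{\mathbb F_q/\mathbb F_s}(\beta y)$ for $\beta,y\in\mathbb F_q$. For $\beta\in\mathbb F_q^*$, the polynomial $T(\beta Y)^{s-1}-1$ vanishes exactly at the $q-q/s$ elements $y$ with $\mathrm{Tr}(\beta y)\neq 0$. All of these are nonzero.

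\emph{Step 1: expand the top two terms.} Write $T(\beta Y)=\alpha+\beta'+\gamma$ with $\alpha=\beta^{q/s}Y^{q/s}$, $\beta'=\beta^{q/s^2}Y^{q/s^2}$ and $\deg_Y\gamma=q/s^3$. In the multinomial expansion of $(\alpha+\beta'+\gamma)^{s-1}$:
\begin{itemize}
\item the leading term is $\alpha^{s-1}$, of degree $(1-1/s)q$;
\item the next term is $(s-1)\alpha^{s-2}\beta'=-\alpha^{s-2}\beta'$, of degree $(1-2/s+1/s^2)q$;
\item every other term has degree at most $\max\{(1-2/s+1/s^3)q,\,(1-3/s+2/s^2)q\}=(1-2/s+1/s^3)q=m/b$. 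Here I compare the terms $\alpha^{s-2}\gamma$ and $\alpha^{s-3}\beta'^2$; the inequality $-2/s+1/s^3>-3/s+2/s^2$ holds.
\end{itemize}
Hence $T(\beta Y)^{s-1}-1=c_1Y^{(1-1/s)q}-c_0Y^{(1-2/s+1/s^2)q}+H_\beta(Y)$ with $\deg H_\beta\le m/b$. Dividing by $-c_0$ and substituting $Y=x$ gives $\mathbf r(z)-f_z(D)$ with $z=-\beta^{q/s-q/s^2}$ and $f_z=H_\beta(x)/c_0\in\mathcal L(G)$, since $(x)_\infty=bP_\infty$. Because $\gcd(q/s^2(s-1),q-1)=s-1$, the values of $z$ run over $-(\mathbb F_q^*)^{s-1}$, a set of size $(q-1)/(s-1)$.

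\emph{Step 2: compute the distances.} By Lemma~\ref{lem:GS-tower}, each nonzero root corresponds to exactly $b$ evaluation points. So $\mathbf r(z)$ agrees with $f_z(D)$ in exactly $(1-1/s)qb$ coordinates, which gives $d(\mathbf r(z),\mathcal C)\le e$. Conversely, for any $f\in\mathcal L(G)$, the function $x^{(1-2/s+1/s^2)q}+zx^{(1-1/s)q}-f$ has pole order $(1-1/s)qb>m$. It therefore has at most $(1-1/s)qb$ zeros, and so $d(\mathbf r(z),\mathcal C)=e$. For the interleaved distance, $\mathbf u_0$ comes from a function of pole order $(1-2/s+1/s^2)qb>m$, so
\[
d([\mathbf u_0,\mathbf u_1],\mathcal C^2)\ge d(\mathbf u_0,\mathcal C)\ge n-(1-2/s+1/s^2)qb=e+(1/s-1/s^2)qb.
\]
The bound $d(\mathbf u_0,\mathcal C)\ge(2-1/s)e$ is then equivalent to $qb/s\ge e$, i.e.\ $n\le qb$, which holds by Lemma~\ref{lem:GS-tower}. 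The soundness claim follows because more than $\sigma q$ challenges are close while CA with slack $\varepsilon<1-1/s$ fails.

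\emph{Step 3: radius bounds.} This is the step needing the most care, since it requires an upper bound on the actual minimum distance. Because $h\ge2$, $m/b$ is an integer at most $q-1$. The function $\prod_{k=1}^{m/b}(x-a_k)$, with distinct $a_k\in\mathbb F_q^*$, lies in $\mathcal L(G)$ and has exactly $m$ zeros, so $d(\mathcal C)=n-m$. Then
\[
e-\frac{d(\mathcal C)}2=\frac{n-qb+qb/s^3}{2}.
\]
This is positive since $n>(q-1)b$ and $qb/s^3\ge b$. It is at most $n/(2s^3)$ since $(n-qb)(1-1/s^3)\le0$. Finally, the Johnson condition holds because
\[
mn\le(1-2/s+1/s^3)q^2b^2<(1-1/s)^2q^2b^2=(n-e)^2 .
\]
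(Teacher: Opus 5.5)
Your proposal is correct and follows the paper's proof essentially step for step. It uses the same expansion of the top two terms of $T(\beta Y)^{p^a-1}-1$ (using $p^a-1\equiv-1$), the same choice $z=-\beta^{q/p^a-q/p^{2a}}$ with the $\gcd$ count, and the same pole-order arguments for $d(\mathbf r(z),\mathcal C)=e$ and for $d(\mathbf u_0,\mathcal C)$; it also uses the same product $\prod(x-a_k)$ to get $d(\mathcal C)=n-m$ before the gap and Johnson estimates, where you additionally spell out the bound $\frac en-\frac{d(\mathcal C)}{2n}\le\frac{1}{2p^{3a}}$ via $(n-qb)(1-p^{-3a})\le0$, which the paper only asserts.
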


\begin{proof}
Let $T(Y)=\sum_{i=0}^{2h-1}Y^{p^{ai}}.$ For every $c\in\mathbb F_q$, we have $T(c)=\operatorname{Tr}_{\mathbb F_q/\mathbb F_{p^a}}(c).$ Thus, for every $\beta\in\mathbb F_q^*$, the polynomial $T(\beta Y)^{p^a-1}-1$ has exactly $(1-1/p^a)q$ roots in $\mathbb F_q$, all of which are nonzero.

We first expand its two highest-degree terms. Since $p^a-1=-1$ in $\mathbb F_q$, we obtain
\[
T(\beta Y)^{p^a-1}-1
=\beta^{(1-1/p^a)q}Y^{(1-1/p^a)q}-\beta^{(1-2/p^a+1/p^{2a})q}
Y^{(1-2/p^a+1/p^{2a})q}
+H_\beta(Y),
\]
where $\deg_Y H_\beta\le \frac{(p^a-2)q}{p^a}+\frac{q}{p^{3a}}=\frac mb.$ To see the degree bound, note that the highest degree term comes from choosing $Y^{q/p^a}$ in all $p^a-1$ factors. The second highest comes from choosing $Y^{q/p^a}$ in $p^a-2$ factors and $Y^{q/p^{2a}}$ in the remaining factor. Every other term either uses a monomial of degree at most $q/p^{3a}$, or uses $Y^{q/p^a}$ in at most $p^a-3$ factors. In the latter case, its degree is at most $\frac{(p^a-3)q}{p^a}+\frac{2q}{p^{2a}}<\frac{(p^a-2)q}{p^a}+\frac{q}{p^{3a}}.$ %When $p^a=2$, only the first case occurs.

Now set $z=-\beta^{q/p^a-q/p^{2a}}$ and $f_z=\frac{H_\beta(x)}{\beta^{(1-2/p^a+1/p^{2a})q}}.$ Since $\deg_Y H_\beta\le m/b$ and $(x)_\infty=bP_\infty$, we have $f_z\in\mathcal L(G)$. The expansion above gives, for every $i\in[n]$,
\[
r_i(z)-f_z(P_i)
=-\frac{T(\beta x(P_i))^{p^a-1}-1}
{\beta^{(1-2/p^a+1/p^{2a})q}}.
\]
The condition $T(\beta c)^{p^a-1}=1$, equivalently $T(\beta c)\ne0$, holds for exactly $(1-1/p^a)q$ values of $c\in\mathbb F_q$, all of which are nonzero. By Lemma~\ref{lem:GS-tower}, each such value corresponds to exactly $b$ evaluation points. Hence $\mathbf r(z)$ agrees with $f_z(D)$ on exactly $(1-1/p^a)qb$ coordinates, giving $d(\mathbf r(z),\mathcal C)\le e.$ Moreover, $\gcd\left(\frac q{p^a}-\frac q{p^{2a}},q-1\right)=p^a-1.$ Thus, as $\beta$ ranges over $\mathbb F_q^*$, the values of $z$ range over $-(\mathbb F_q^*)^{p^a-1}$, which has size $(q-1)/(p^a-1)$.

We next show that $d(\mathbf r(z),\mathcal C)\ge e$. Fix any $f\in\mathcal L(G)$. Since $m<\left(1-\frac2{p^a}+\frac1{p^{2a}}\right)qb<\left(1-\frac1{p^a}\right)qb,$ the function $x^{(1-2/p^a+1/p^{2a})q}+zx^{(1-1/p^a)q}-f$ has pole divisor $\left(1-\frac1{p^a}\right)qbP_\infty.$ It can therefore vanish at no more than $(1-1/p^a)qb$ evaluation points. Hence $d(\mathbf r(z),f(D))\ge e$, and together with the upper bound above, we obtain $d(\mathbf r(z),\mathcal C)=e$ for every $z\in-(\mathbb F_q^*)^{p^a-1}$.

Next, since $m<\left(1-\frac2{p^a}+\frac1{p^{2a}}\right)qb,$ no function in $\mathcal L(G)$ can cancel the highest-order pole of $x^{(1-2/p^a+1/p^{2a})q}$. Therefore,
\[
d([\mathbf u_0,\mathbf u_1],\mathcal C^2)\ge d(\mathbf u_0,\mathcal C)\ge n-\left(1-\frac2{p^a}+\frac1{p^{2a}}\right)qb.
\]
Since $n\le qb$, we have $e\le qb/p^a$, and thus
\[
n-\left(1-\frac2{p^a}+\frac1{p^{2a}}\right)qb=e+\left(\frac1{p^a}-\frac1{p^{2a}}\right)qb\ge\left(2-\frac1{p^a}\right)e.
\]
Consequently, for every $0\le\varepsilon<1-1/p^a$, the correlated agreement conclusion fails, while at least $(q-1)/(p^a-1)$ values of $z$ satisfy $d(\mathbf r(z),\mathcal C)\le e$. This proves the claimed soundness lower bound.

Finally, we verify the minimum distance and the radius bounds. Since $m/b$ is an integer smaller than $q-1$, choose a set $A\subseteq\mathbb F_q^*$ of size $m/b$. The function $f_A(x)=\prod_{c\in A}(x-c)\in\mathcal L(G)$ vanishes at exactly $m$ evaluation points. Hence $d(\mathcal C)\le n-m$, and the designed distance bound gives $d(\mathcal C)=n-m.$ Substituting the definitions of $m$ and $e$, we obtain $e-\frac{d(\mathcal C)}2=\frac{n-qb+qb/p^{3a}}2.$ Since $(q-1)b\le n\le qb$ and $q>p^{3a}$, this quantity is strictly positive. Moreover, $0<\frac en-\frac{d(\mathcal C)}{2n}\le\frac1{2p^{3a}}.$ To verify the Johnson bound, observe that $1-\frac2{p^a}+\frac1{p^{3a}}<\left(1-\frac1{p^a}\right)^2.$ Using $n\le qb$, we obtain
\[
mn\le\left(1-\frac2{p^a}+\frac1{p^{3a}}\right)q^2b^2<\left(1-\frac1{p^a}\right)^2q^2b^2
=(n-e)^2.
\]
Thus $e<(1-\sqrt\rho)n$, completing the proof.
\end{proof}

\begin{cor}
Fix a prime $p$ and an integer $a\ge1$. For every sufficiently large $q=p^{2ah}$, consider the codes $\mathcal C_j$ obtained from Theorem~\ref{thm:GS-tower} at each level $j$ of the Garcia-Stichtenoth tower. For all sufficiently large $j$, these codes satisfy $2\g_j-1\le m_j<n_j$, have $\g_j=\Theta(n_j)$, and have positive rate with $\liminf_{j\to\infty}\frac{\dim\mathcal C_j}{n_j}\ge1-\frac{2}{p^a}+\frac{1}{p^{3a}}-\frac{1}{\sqrt q-1}>0.$

Moreover, for every $0\le\varepsilon<1-1/p^a$, $(1,e_j,\sigma,\varepsilon)$-CA requires $\sigma\ge\frac{q-1}{(p^a-1)q}.$ The radius $e_j$ satisfies $0<\frac{e_j}{n_j}-\frac{d(\mathcal C_j)}{2n_j}\le\frac{1}{2p^{3a}},$ while its relative distance below the Johnson radius is at least $1-\frac1{p^a}-\sqrt{1-\frac2{p^a}+\frac1{p^{3a}}}>0.$
\end{cor}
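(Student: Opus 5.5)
The plan is to apply Theorem~\ref{thm:GS-tower} at every level $j$ of the tower with fixed $p,a,h$, and to check the asymptotic assertions using Lemma~\ref{lem:GS-tower}. The soundness lower bound $\sigma\ge\frac{q-1}{(p^a-1)q}$, the window $0<\frac{e_j}{n_j}-\frac{d(\mathcal C_j)}{2n_j}\le\frac1{2p^{3a}}$, and $e_j<(1-\sqrt{\rho_j})n_j$ all hold level by level, since Theorem~\ref{thm:GS-tower} holds at any fixed level. Two items remain: the parameter conditions ($2\g_j-1\le m_j<n_j$, $\g_j=\Theta(n_j)$, positive rate) and the quantitative Johnson margin.

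For the parameters, Lemma~\ref{lem:GS-tower} gives $n_j/b_j\to q-1$ and $\g_j/n_j\to1/(\sqrt q-1)$. The latter yields $\g_j=\Theta(n_j)$. Write $c:=1-\frac2{p^a}+\frac1{p^{3a}}$, so that $m_j=cqb_j$. Then
\[
\frac{m_j}{n_j}\to\frac{cq}{q-1}>c .
\]
For $p^a\ge2$ we have $c>0$: if $p^a=2$ then $c=1/8$, and if $p^a\ge3$ then $1-2/p^a>0$. Hence
\[
\frac{2\g_j-1}{n_j}\to\frac{2}{\sqrt q-1}.
\]
This is smaller than $c$ once $q$ is large, which gives $m_j\ge2\g_j-1$ for large $j$. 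Also $m_j<(q-1)b_j<n_j$, since $c<1-1/q$. Since $m_j\ge2\g_j-1$, we get $\dim\mathcal C_j=m_j+1-\g_j$. Dividing by $n_j$ gives
\[
\liminf_j\frac{\dim\mathcal C_j}{n_j}\ge c-\frac1{\sqrt q-1},
\]
which is positive for $q$ sufficiently large, using $cq/(q-1)\ge c$. This is where ``every sufficiently large $q$'' is used.

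For the Johnson margin, compute
\[
1-\sqrt{\rho_j}-\frac{e_j}{n_j}=\frac{(1-1/p^a)qb_j}{n_j}-\sqrt{\frac{cqb_j}{n_j}}.
\]
Set $x_j:=qb_j/n_j\ge1$, using $n_j\le qb_j$. The expression becomes $(1-1/p^a)x_j-\sqrt{c\,x_j}$. This function is increasing in $x\ge1$ because $(1-1/p^a)^2>c$, which was verified in the proof of Theorem~\ref{thm:GS-tower}. Its value at $x=1$ is
\[
1-\frac1{p^a}-\sqrt{1-\frac2{p^a}+\frac1{p^{3a}}},
\]
which is the claimed positive lower bound.

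The main obstacle is the monotonicity claim. The derivative is $(1-1/p^a)-\frac{\sqrt c}{2\sqrt x}$, and this is positive for $x\ge1$ since $\sqrt c<1-1/p^a$. So the claim holds, and everything else is bookkeeping with limits.
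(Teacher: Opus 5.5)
Your proposal is correct and follows the argument the paper leaves implicit. Theorem~\ref{thm:GS-tower} supplies the soundness bound, the gap window and $e_j<(1-\sqrt{\rho_j})n_j$ at every level. The limits $n_j/b_j\to q-1$ and $\g_j/n_j\to 1/(\sqrt q-1)$ from Lemma~\ref{lem:GS-tower} then give $2\g_j-1\le m_j<n_j$, $\g_j=\Theta(n_j)$ and the rate bound for large $q$. Your Johnson-margin computation via $x_j=qb_j/n_j\ge1$ is also right; equivalently, $(1-1/p^a)x-\sqrt{cx}=\sqrt{x}\bigl((1-1/p^a)\sqrt{x}-\sqrt{c}\bigr)\ge 1-1/p^a-\sqrt{c}$ for $x\ge1$, with no derivative needed.
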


Since $(q-1)b\le n\le qb$, we have $\rho=\frac{m}{n}=1-\frac{2}{p^a}+\frac{1}{p^{3a}}+O(1/q)$ and $\frac{e}{n}=\frac{1}{p^a}+O(1/q).$ The following table gives several choices of $p^a$. The values of $\rho$ and $e/n$ are their leading terms as $q$ grows, and the gap bound refers to $e/n-d(\mathcal C)/(2n)$.

\begin{table}[t]
\centering
\begingroup
\renewcommand{\arraystretch}{1.2}
\setlength{\tabcolsep}{10pt}
\begin{tabular}{ccccc}
\toprule
$p^a$ & $\rho$ & $e/n$ & Gap bound & Slack range \\
\midrule
$2$ & $\tfrac18$ & $\tfrac12$ & $\tfrac1{16}$
    & $\varepsilon<\tfrac12$ \\
$3$ & $\tfrac{10}{27}$ & $\tfrac13$ & $\tfrac1{54}$
    & $\varepsilon<\tfrac23$ \\
$4$ & $\tfrac{33}{64}$ & $\tfrac14$ & $\tfrac1{128}$
    & $\varepsilon<\tfrac34$ \\
$8$ & $\tfrac{385}{512}$ & $\tfrac18$ & $\tfrac1{1024}$
    & $\varepsilon<\tfrac78$ \\
\bottomrule
\end{tabular}
\endgroup
\caption{Selected parameters for the counterexamples in Theorem~\ref{thm:GS-tower}. The entries for $\rho$ and $e/n$ are their leading terms as $q$ grows, and the gap bound refers to $e/n-d(\mathcal C)/(2n)$.}
\label{tab:gs-ca-counterexamples}
\end{table}

By choosing $a$ sufficiently large, we can make the relative gap above half the actual minimum distance arbitrarily small. For each fixed $a$, the soundness error is bounded below by a positive constant independent of $q$ and $n$, but this lower bound decreases as $a$ grows.

\begin{cor}\label{cor:counterexample}
Fix a prime $p$, $0\le\varepsilon<1$, and $\zeta>0$. For sufficiently large integers $a$ and $h$, let $q=p^{2ah}$. The Garcia-Stichtenoth tower over $\mathbb F_q$ gives a family of one-point AG codes $\mathcal C_j$ with positive rate, unbounded length, and $\g_j=\Theta(n_j)$.

For all sufficiently large $j$, these codes satisfy $2\g_j-1\le m_j<n_j$, and their counterexample radii satisfy $0<\frac{e_j}{n_j}-\frac{d(\mathcal C_j)}{2n_j}<\zeta$ and $\frac{e_j}{n_j}<1-\sqrt{\rho_j}.$ Nevertheless, correlated agreement at radius $e_j$ with slack $\varepsilon$ requires $\sigma\ge\frac{q-1}{(p^a-1)q}\ge\frac{1}{2(p^a-1)}.$
\end{cor}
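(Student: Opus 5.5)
The plan is to derive Corollary~\ref{cor:counterexample} directly from Theorem~\ref{thm:GS-tower}. The three things to arrange are the choice of $a$, the choice of $h$, and the asymptotics along the tower.

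\textbf{Choice of $a$.} Given $\zeta>0$ and $\varepsilon<1$, pick $a$ so large that
\[
\frac{1}{2p^{3a}}<\zeta \qquad\text{and}\qquad \varepsilon<1-\frac{1}{p^a}.
\]
Theorem~\ref{thm:GS-tower} then gives, at every level $j$:
\begin{itemize}
\item the gap bound $0<e_j/n_j-d(\mathcal C_j)/(2n_j)\le 1/(2p^{3a})<\zeta$;
\item the Johnson bound $e_j/n_j<1-\sqrt{\rho_j}$;
\item the soundness lower bound $\sigma\ge (q-1)/((p^a-1)q)$ for slack $\varepsilon$.
\end{itemize}
Here $m_j=(1-2/p^a+1/p^{3a})qb_j$ is an integer. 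This holds because $b_j=s^j$ with $s=p^{ah}$, and $q/p^{3a}=p^{a(2h-3)}$ is an integer once $h\ge2$. The last inequality, $(q-1)/q\ge 1/2$, is immediate.

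\textbf{Choice of $h$.} Next take $h$ so large that
\[
\frac{2}{\sqrt q-1}<1-\frac{2}{p^a}+\frac{1}{p^{3a}}.
\]
This is possible since the right side is a positive constant depending only on $p$ and $a$. (It is positive because $1-2/p^a+1/p^{3a}>0$ for every $p^a\ge2$; it equals $1/8$ when $p^a=2$.)

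\textbf{Asymptotics along the tower.} By Lemma~\ref{lem:GS-tower}, $\g_j/n_j\to 1/(\sqrt q-1)$ and $n_j/b_j\to q-1$. Hence
\[
\frac{m_j}{n_j}\to \Bigl(1-\frac{2}{p^a}+\frac{1}{p^{3a}}\Bigr)\frac{q}{q-1},
\]
which is strictly larger than $2/(\sqrt q-1)$. Therefore $2\g_j-1\le m_j$ for all sufficiently large $j$. Also $m_j<n_j$, since $m_j<(q-1)b_j<n_j$.

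It follows that $\dim\mathcal C_j=m_j+1-\g_j$. The rate tends to a limit of at least $1-2/p^a+1/p^{3a}-1/(\sqrt q-1)>0$. Moreover $\g_j=\Theta(n_j)$ and $n_j\to\infty$.

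\textbf{Main obstacle.} The only delicate point is bookkeeping the order of quantifiers: $a$ is fixed from $\zeta$ and $\varepsilon$ first, then $h$, then $j$. The positivity of the rate requires $h$ large enough relative to $a$. No new estimate beyond Theorem~\ref{thm:GS-tower} and Lemma~\ref{lem:GS-tower} is needed.
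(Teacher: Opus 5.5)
Your proposal is correct and follows essentially the same route as the paper. The paper deduces this corollary directly from Theorem~\ref{thm:GS-tower}, which supplies the gap bound $1/(2p^{3a})$, the Johnson bound, and the soundness bound for $\varepsilon<1-1/p^a$: it first fixes $a$ from $\zeta$ and $\varepsilon$, then takes $h$ large so that Lemma~\ref{lem:GS-tower} gives $2\g_j-1\le m_j<n_j$, $\g_j=\Theta(n_j)$, and a rate bounded below by $1-2/p^a+1/p^{3a}-1/(\sqrt q-1)>0$, exactly as in the corollary stated just after Theorem~\ref{thm:GS-tower}; your quantifier ordering and your integrality check for $m_j$ handle these steps correctly.
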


The counterexamples in this subsection use trace polynomials and can have relatively low code rates. The next subsection instead adapts the subspace annihilator polynomial construction of~\cite{LXYZ26}. Although this construction has higher code rates, it gives stronger soundness lower bounds: $\sigma=1$ for $(1,e,\sigma,\varepsilon)$-CA and $\sigma\ge 1-1/q$ for $(1,e,\sigma,\varepsilon)$-PG. Moreover, over a fixed field, the relative gap above half the actual minimum distance tends to zero as the code length grows.

\subsection{Counterexamples over a Fixed Field}\label{sec:fixed-field-counterexamples}

We next give counterexamples for which the relative gap above half the actual minimum distance tends to zero over a fixed field. Our construction is inspired by the subspace-polynomial counterexample for Gabidulin codes~\cite[Section~3.3]{LXYZ26}. Here we use the polynomial $(Y^q-Y)/A_U(Y)$, where $A_U(Y)=\prod_{u\in U}(Y-u)$ and $U$ is a two-dimensional subspace over a subfield. Its roots are exactly the elements outside $U$. We first recall that the second highest coefficient of $A_U$ can take every value in $\mathbb F_q$.

\begin{lemma}[Lemma~3.10~\cite{LXYZ26}]\label{lem:two-dimensional-coefficients}
Let $Q$ be a prime power and set $q=Q^6$. For each two-dimensional $\mathbb F_Q$-subspace $U\subseteq\mathbb F_q$, write
\[
A_U(Y)=\prod_{u\in U}(Y-u)=Y^{Q^2}+\alpha(U)Y^Q+\beta(U)Y.
\]
Then $\{\alpha(U):U\subseteq\mathbb F_q,\ \dim_{\mathbb F_Q}U=2\}=\mathbb F_q$.
\end{lemma}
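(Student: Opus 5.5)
The plan is to compute $\alpha(U)$ explicitly for a spanning pair, use the scaling action $U\mapsto cU$ to reduce surjectivity to hitting every coset of a small subgroup, and then hit each coset (and $0$) by an explicit or counting argument.

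\emph{Closed form.} Write $U=\mathrm{span}_{\mathbb F_Q}(u_1,u_2)$ and set $B(Y)=Y^Q-u_1^{Q-1}Y=\prod_{c\in\mathbb F_Q}(Y-cu_1)$. The polynomial $B$ is $\mathbb F_Q$-linear, and $B(U)=\mathbb F_Q B(u_2)$. Hence
\[
A_U(Y)=B(Y)^Q-B(u_2)^{Q-1}B(Y),
\]
since the right side is monic, $\mathbb F_Q$-linear of degree $Q^2$, and vanishes on $U$. Expanding gives
\[
\alpha(U)=-\bigl(u_1^{Q^2-Q}+(u_2^Q-u_1^{Q-1}u_2)^{Q-1}\bigr).
\]
In particular, with $u_1=1$ and $w:=u_2^Q-u_2$, we get $\alpha=-(1+w^{Q-1})$. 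Here $w$ ranges over the nonzero elements of $\ker\mathrm{Tr}_{\mathbb F_q/\mathbb F_Q}$, because the image of $u_2\mapsto u_2^Q-u_2$ is exactly that kernel.

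\emph{Scaling reduction.} For $c\in\mathbb F_q^*$ we have $A_{cU}(Y)=c^{Q^2}A_U(Y/c)$, so $\alpha(cU)=c^{Q^2-Q}\alpha(U)$. One checks $\gcd(Q^2-Q,Q^6-1)=Q-1$, so $\{c^{Q^2-Q}\}$ is the subgroup of $(Q-1)$-th powers. This subgroup equals the kernel of $N:=N_{\mathbb F_q/\mathbb F_Q}$. It therefore suffices to show two things: that $0$ is attained, and that $N(\alpha(U))$ takes every value in $\mathbb F_Q^*$.

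\emph{The value $0$.} Take $\gamma\ne0$ with $\gamma^{Q-1}=-1$; such $\gamma$ exists in $\mathbb F_{Q^2}\subseteq\mathbb F_q$. Then $\gamma^{Q^i}=(-1)^i\gamma$, so $\mathrm{Tr}_{\mathbb F_q/\mathbb F_Q}(\gamma)=\gamma(1-1+1-1+1-1)=0$. Hence $w=\gamma$ is admissible and gives $\alpha=0$. This is where the exponent $6$ (even) is used.

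\emph{All norm classes.} I need $N(1+w^{Q-1})$, together with the extra freedom in $u_1$ when needed, to cover all of $\mathbb F_Q^*$. I would first try explicit choices inside subfields. For $w$ in $\mathbb F_{Q^2}$ or $\mathbb F_{Q^3}$ with vanishing trace (the trace condition factors through the subfield trace times the degree, which is $0$ in characteristic $p$ only in special cases), the norm is a power of a small-field norm that can be computed. If that does not cover every class, I would use the general formula with $u_1$ free: the norm class of $\alpha$ becomes a polynomial function of $(u_1,u_2)$ on a variety of dimension about $2\cdot 6$ over $\mathbb F_Q$. Then a character-sum count (Weil bound, summing multiplicative characters of $\mathbb F_Q^*$ composed with $N$) shows each fibre is nonempty once $Q^{8}$ dominates the error terms.

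I expect this last step, hitting every norm class, to be the main obstacle. The fallback is the counting argument: the $\approx Q^8$ subspaces map into $Q$ classes, so I would show the distribution is near-uniform via the character sum estimate.
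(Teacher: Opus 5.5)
Your reductions are correct: the closed form $\alpha(U)=-(u_1^{Q^2-Q}+B(u_2)^{Q-1})$, the rule $\alpha(cU)=c^{Q^2-Q}\alpha(U)$ with $\{c^{Q^2-Q}\}=\ker N$, and the value $0$ from $\gamma\in\mathbb F_{Q^2}$ with $\gamma^{Q-1}=-1$ all check out. The paper gives no proof here; it imports the lemma from \cite{LXYZ26}.

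The gap is the step that carries all the content: that $N(\alpha(U))$ takes every value in $\mathbb F_Q^*$. You never prove it, and both routes you sketch are misdirected.
\begin{itemize}
\item The variable $u_1$ adds nothing. Writing $u_2=u_1v$ and $w=v^Q-v$ gives $\alpha(U)=-u_1^{Q^2-Q}(1+w^{Q-1})$. Since $N(-1)=N(u_1^{Q^2-Q})=1$, the norm class depends only on $w$ in the $5$-dimensional trace-zero hyperplane. So the ``variety of dimension about $12$'' and the $Q^8$ count of subspaces are not the relevant quantities.
\item Subfield choices cannot suffice. For odd $p$, a trace-zero $w\in\mathbb F_{Q^3}$ gives $N_{\mathbb F_q/\mathbb F_Q}(1+w^{Q-1})=N_{\mathbb F_{Q^3}/\mathbb F_Q}(1+w^{Q-1})^2$, which is always a square. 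For $p\ne 3$, a trace-zero $w\in\mathbb F_{Q^2}$ satisfies $w^Q=-w$ and only returns $\alpha=0$. Hence for odd $p\neq 3$ the nonsquare classes need $w$ outside every proper intermediate field.
\end{itemize}

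What is missing is an actual one-variable character-sum estimate, and this is where $q=Q^6$ is genuinely used, not only in the zero case. Put $g(v)=1+(v^Q-v)^{Q-1}$ for $v\in\mathbb F_q$, a monic polynomial of degree $Q^2-Q$. Its derivative $g'(v)=(v^Q-v)^{Q-2}$ vanishes only on $\mathbb F_Q$, where $g=1$. So $g$ is squarefree and is not a constant times a $d$-th power for any $d\ge2$.

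For each nontrivial character $\chi$ of $\mathbb F_Q^*$, $\chi\circ N$ is a nontrivial multiplicative character of $\mathbb F_q^*$. Weil's bound then gives $\bigl|\sum_{v\in\mathbb F_q}\chi(N(g(v)))\bigr|\le(Q^2-Q-1)Q^3$. By orthogonality, for every $\lambda\in\mathbb F_Q^*$,
\[
\#\{v\in\mathbb F_q\setminus\mathbb F_Q:\ N(g(v))=\lambda\}\ \ge\ \frac{Q^6-(Q^2-Q)-(Q-2)(Q^2-Q-1)Q^3}{Q-1}-Q\ >\ 0,
\]
since the numerator equals $3Q^5-Q^4-2Q^3-Q^2+Q$. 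Any such $v$ gives $U=\mathrm{span}_{\mathbb F_Q}(1,v)$ with $N(\alpha(U))=\lambda$, and your scaling step then finishes.

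The margin is thin: the total error is about $Q^3\sqrt q$ against a main term $q$, so it is exactly $\sqrt q=Q^3$ that makes this work. Without this computation, or an equivalent one, your argument shows only that $0$ is attained and that the image is a union of cosets of $\ker N$, not that every coset occurs.
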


\begin{thm}\label{thm:GS-fixed-field}
Let $Q$ be a prime power and set $q=Q^6$. At a fixed level of the Garcia-Stichtenoth tower in Section~\ref{sec:GS-tower}, let $m=(q-2Q^2+1)b$ and $e=n-(q-Q^2)b.$ The code $\mathcal C=\mathcal C_{\mathcal L}(D,mP_\infty)$ has minimum distance $d(\mathcal C)=n-m$. Moreover, there exist words $\mathbf u_0,\mathbf u_1\in\mathbb F_q^n$ such that $d(\mathbf u_0+z\mathbf u_1,\mathcal C)=e$ for every $z\in\mathbb F_q,$ but
\[
d([\mathbf u_0,\mathbf u_1],\mathcal C^2)
\ge d(\mathbf u_1,\mathcal C)
\ge e+(Q^2-Q)b.
\]
Consequently, for every $0\le\varepsilon<(Q^2-Q)b/e$, $(1,e,\sigma,\varepsilon)$-CA requires $\sigma=1$.
\end{thm}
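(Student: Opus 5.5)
The plan follows the intro sketch. First, the minimum distance: since $m/b=q-2Q^2+1<q-1$, pick $A\subseteq\mathbb F_q^*$ with $|A|=m/b$. The function $\prod_{c\in A}(x-c)\in\mathcal L(mP_\infty)$ vanishes at exactly $m$ evaluation points by Lemma~\ref{lem:GS-tower}. Hence $d(\mathcal C)\le n-m$, and the designed distance bound gives equality.

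Next, the line. For each two-dimensional $\mathbb F_Q$-subspace $U$, consider
\[
F_U(Y):=\frac{Y^q-Y}{A_U(Y)}=\prod_{c\in\mathbb F_q\setminus U}(Y-c).
\]
Its degree is $q-Q^2$ and its $q-Q^2$ roots are distinct and nonzero. I will do the long division $Y^q-Y=A_U(Y)F_U(Y)$ with $A_U=Y^{Q^2}+\alpha Y^Q+\beta Y$ explicitly. The leading term of $F_U$ is $Y^{q-Q^2}$, and the next nonzero term is $-\alpha Y^{q-2Q^2+Q}$. To justify this, I will show by comparing coefficients that the coefficients of $F_U$ in degrees strictly between $q-2Q^2+Q$ and $q-Q^2$ vanish. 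The only lower monomials of $A_U$ sit at degree offsets $Q^2-Q$ and $Q^2-1$ below the top. So the quotient's coefficients are supported on degrees $q-Q^2-(\text{nonnegative combinations of }Q^2-Q,\ Q^2-1)$. The next such degrees after $q-2Q^2+Q$ are $q-2Q^2+1$ and lower. This gives $F_U=Y^{q-Q^2}-\alpha Y^{q-2Q^2+Q}+H_U$ with $\deg H_U\le q-2Q^2+1=m/b$.

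Set $\mathbf u_0=x^{q-Q^2}(D)$ and $\mathbf u_1=x^{q-2Q^2+Q}(D)$. Given $z\in\mathbb F_q$, Lemma~\ref{lem:two-dimensional-coefficients} supplies $U$ with $\alpha(U)=-z$. Take $f_z=-H_U(x)\in\mathcal L(G)$. Then $r_i(z)-f_z(P_i)=F_U(x(P_i))$ vanishes at exactly $(q-Q^2)b$ evaluation points, using Lemma~\ref{lem:GS-tower} and the fact that the roots are nonzero. So $d(\mathbf r(z),\mathcal C)\le e$.

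For the reverse inequality, take any $f\in\mathcal L(G)$. Since $m<(q-Q^2)b$, the function $x^{q-Q^2}+zx^{q-2Q^2+Q}-f$ has pole divisor exactly $(q-Q^2)b\,P_\infty$. It therefore has at most $(q-Q^2)b$ zeros, giving distance exactly $e$.

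Similarly, $(q-2Q^2+Q)b>m$, so $x^{q-2Q^2+Q}-f$ has at most $(q-2Q^2+Q)b$ zeros. This gives
\[
d(\mathbf u_1,\mathcal C)\ge n-(q-2Q^2+Q)b=e+(Q^2-Q)b.
\]
The CA consequence follows because every $z\in\mathbb F_q$ lies in $S$. Then $\Pr[z\in S]=1>\sigma$ for every $\sigma<1$, while the interleaved distance exceeds $(1+\varepsilon)e$ whenever $\varepsilon<(Q^2-Q)b/e$.

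The main obstacle is the division step: pinning down the second coefficient and the gap in the support of $F_U$. The cleanest route is to write $F_U=\sum_k c_kY^{q-Q^2-k}$ and use the recursion $c_k=-\alpha c_{k-(Q^2-Q)}-\beta c_{k-(Q^2-1)}$ for $k$ below $q-Q^2$, where the $-Y$ term of $Y^q-Y$ only matters at the bottom. From this recursion, $c_k=0$ for $0<k<Q^2-Q$ and $c_{Q^2-Q}=-\alpha$ follow immediately.
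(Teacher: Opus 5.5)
Your proof is correct and matches the paper's argument step for step: the same words $\mathbf u_0,\mathbf u_1$, the same choice of $U$ via Lemma~\ref{lem:two-dimensional-coefficients} (with $\alpha(U)=-z$), the same pole-order zero counting for both distance bounds, and the same minimum-distance witness $\prod_{c\in A}(x-c)$. The only cosmetic difference is in the division step. You read off the second coefficient and the bound $\deg H_U\le q-2Q^2+1$ from the coefficient recursion. The paper instead subtracts $A_U(Y)\bigl(Y^{q-Q^2}+zY^{q-2Q^2+Q}\bigr)$ from $Y^q-Y$, observes that the difference has degree at most $q-Q^2+1$, and divides by $A_U$. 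Both ways are valid.
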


\begin{proof}
Set
\[
\mathbf u_0=x^{q-Q^2}(D),\qquad
\mathbf u_1=x^{q-2Q^2+Q}(D).
\]
By Lemma~\ref{lem:two-dimensional-coefficients}, for every $z\in\mathbb F_q$ there is a two-dimensional $\mathbb F_Q$-subspace $U\subseteq\mathbb F_q$ whose annihilator polynomial has the form $A_U(Y)=Y^{Q^2}-zY^Q+\beta Y.$ Consider
\[
F_U(Y):=\frac{Y^q-Y}{A_U(Y)}
=\prod_{c\in\mathbb F_q\setminus U}(Y-c).
\]
Polynomial division gives $F_U(Y)=Y^{q-Q^2}+zY^{q-2Q^2+Q}+H_U(Y)$ with $\deg_Y H_U\le q-2Q^2+1=\frac mb.$ Indeed, subtracting $A_U(Y)(Y^{q-Q^2}+zY^{q-2Q^2+Q})$ from $Y^q-Y$ leaves a polynomial of degree at most $q-Q^2+1$. Dividing by $A_U$ gives the degree bound.

For such a $z$, define $f_z=-H_U(x)$. Since $(x)_\infty=bP_\infty$ and $\deg_Y H_U\le m/b$, we have $f_z\in\mathcal L(G)$ and $\mathbf u_0+z\mathbf u_1-f_z(D)=F_U(x)(D).$ The roots of $F_U$ are exactly the $q-Q^2$ elements of $\mathbb F_q\setminus U$, all of which are nonzero. By Lemma~\ref{lem:GS-tower}, each root corresponds to exactly $b$ evaluation points. Thus
\[
d(\mathbf u_0+z\mathbf u_1,f_z(D))
=n-(q-Q^2)b=e,
\]
and hence $d(\mathbf r(z),\mathcal C)\le e$.

We next show that $d(\mathbf r(z),\mathcal C)\ge e$. Fix any $f\in\mathcal L(G)$. Since $m<(q-2Q^2+Q)b<(q-Q^2)b$, the function $x^{q-Q^2}+zx^{q-2Q^2+Q}-f$ has pole divisor $(q-Q^2)bP_\infty$. It therefore vanishes at no more than $(q-Q^2)b$ evaluation points, giving $d(\mathbf r(z),f(D))\ge e$. Together with the upper bound, this proves $d(\mathbf r(z),\mathcal C)=e$ for every $z\in\mathbb F_q$.

To bound $d([\mathbf{u}_0,\mathbf{u}_1],\cC^2)$, it suffices to consider $\mathbf u_1$. For every $f\in\mathcal L(G)$, the function $x^{q-2Q^2+Q}-f$ has pole divisor $(q-2Q^2+Q)bP_\infty$. Therefore,
\[
d([\mathbf u_0,\mathbf u_1],\mathcal C^2)
\ge d(\mathbf u_1,\mathcal C)
\ge n-(q-2Q^2+Q)b
=e+(Q^2-Q)b.
\]
For $0\le\varepsilon<(Q^2-Q)b/e$, this is greater than $(1+\varepsilon)e$, while all $q$ parameters satisfy $d(\mathbf r(z),\mathcal C)=e$. Thus $(1,e,\sigma,\varepsilon)$-CA requires $\sigma=1$.

Finally, since $m/b$ is an integer smaller than $q-1$, choose a set $A\subseteq\mathbb F_q^*$ of size $m/b$. The function $f_A(x)=\prod_{c\in A}(x-c)\in\mathcal L(G)$ vanishes at exactly $m$ evaluation points. Hence $d(\mathcal C)\le n-m$, and the designed distance bound gives $d(\mathcal C)=n-m$.
\end{proof}

The same construction also gives a lower bound on the soundness error for proximity gaps. For the same range of $\varepsilon$, consider the line $\widetilde{\mathbf r}(z)=\mathbf u_1+z\mathbf u_0$. Since $\mathcal C$ is $\mathbb F_q$-linear, for every $z\ne0$ we have
\[
d(\widetilde{\mathbf r}(z),\mathcal C)
=d(\mathbf u_0+z^{-1}\mathbf u_1,\mathcal C)=e,
\]
whereas $d(\widetilde{\mathbf r}(0),\mathcal C) =d(\mathbf u_1,\mathcal C)>(1+\varepsilon)e$. Thus exactly $q-1$ parameters give points within distance $e$, so $(1,e,\sigma,\varepsilon)$-PG requires $\sigma\ge1-1/q$.

To compare $e$ with half the minimum distance and the Johnson radius, write $t=n-(q-1)b$, which is positive by Lemma~\ref{lem:GS-tower}. Substituting the definitions of $m$ and $e$ gives $e=(Q^2-1)b+t$ and $d(\mathcal C)=2(Q^2-1)b+t.$ Consequently, $0<\frac en-\frac{d(\mathcal C)}{2n}=\frac{t}{2n}.$ Moreover,
\[
(n-e)^2-mn
=(Q^2-1)^2b^2-(q-2Q^2+1)bt.
\]
Thus $e<(1-\sqrt\rho)n$ whenever
$(q-2Q^2+1)t<(Q^2-1)^2b$.

\begin{cor}\label{cor:GS-fixed-field}
Fix a prime power $Q$, let $q=Q^6$, and fix $0\le\varepsilon<Q/(Q+1)$. Consider the codes $\mathcal C_j$ obtained from Theorem~\ref{thm:GS-fixed-field} at each level $j$ of the Garcia-Stichtenoth tower over $\mathbb F_q$. For all sufficiently large $j$, these codes satisfy $2\g_j-1\le m_j<n_j$, have positive rate and $\g_j=\Theta(n_j)$. Moreover, their counterexample radii satisfy $0<\frac{e_j}{n_j}-\frac{d(\mathcal C_j)}{2n_j}\longrightarrow 0$ and $e_j<(1-\sqrt{\rho_j})n_j.$ Nevertheless, $(1,e_j,\sigma,\varepsilon)$-CA requires $\sigma=1$, while $(1,e_j,\sigma,\varepsilon)$-PG requires $\sigma\ge 1-1/q$.
\end{cor}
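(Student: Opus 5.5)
The plan is to deduce everything from Theorem~\ref{thm:GS-fixed-field}, the proximity-gap remark that follows it, and the asymptotics in Lemma~\ref{lem:GS-tower}. Here $q=Q^6$, so the tower parameter is $s=Q^3$. At level $j$, set
\[
m_j=(q-2Q^2+1)b_j,\qquad e_j=n_j-(q-Q^2)b_j,\qquad t_j=n_j-(q-1)b_j>0 .
\]
Lemma~\ref{lem:GS-tower} supplies the three limits I need:
\[
\frac{n_j}{b_j}\to q-1,\qquad \frac{t_j}{b_j}\to0,\qquad \frac{\g_j}{n_j}\to\frac{1}{Q^3-1}.
\]
Consequently
\[
\rho_j=\frac{m_j}{n_j}\to\rho_\ast:=\frac{q-2Q^2+1}{q-1}\in(0,1),\qquad \frac{e_j}{b_j}=Q^2-1+\frac{t_j}{b_j}\to Q^2-1 .
\]

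\textbf{Code parameters.} Since $\g_j/n_j\to 1/(Q^3-1)>0$, we have $\g_j=\Theta(n_j)$. The condition $2\g_j-1\le m_j$ holds for large $j$ once
\[
\frac{2}{Q^3-1}<\rho_\ast .
\]
This holds because $\rho_\ast\ge 1-2Q^2/(Q^6-1)$, which is close to $1$, while $2/(Q^3-1)\le 2/7$. Also $m_j<(q-1)b_j<n_j$. The rate $(m_j+1-\g_j)/n_j$ then tends to $\rho_\ast-1/(Q^3-1)>0$.

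\textbf{Radii.} The computation after Theorem~\ref{thm:GS-fixed-field} gives
\[
0<\frac{e_j}{n_j}-\frac{d(\mathcal C_j)}{2n_j}=\frac{t_j}{2n_j},
\]
and this tends to $0$ because $t_j/b_j\to0$ while $n_j/b_j\to q-1$. For the Johnson bound, it suffices to have
\[
(q-2Q^2+1)\,t_j<(Q^2-1)^2 b_j .
\]
Dividing by $b_j$, the left side tends to $0$ and the right side is the positive constant $(Q^2-1)^2$, so the inequality holds for all large $j$.

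\textbf{Soundness bounds.} I will check the admissible slack range of Theorem~\ref{thm:GS-fixed-field}:
\[
\frac{(Q^2-Q)b_j}{e_j}\to\frac{Q^2-Q}{Q^2-1}=\frac{Q}{Q+1}.
\]
Since $\varepsilon<Q/(Q+1)$ is fixed, $\varepsilon<(Q^2-Q)b_j/e_j$ for all large $j$. The theorem then forces $\sigma=1$ for $(1,e_j,\sigma,\varepsilon)$-CA. The same inequality gives $d(\mathbf u_1,\mathcal C_j)\ge e_j+(Q^2-Q)b_j>(1+\varepsilon)e_j$. Hence the reversed line $\mathbf u_1+z\mathbf u_0$ in the remark after the theorem has $q-1$ parameters within distance $e_j$ and one parameter beyond $(1+\varepsilon)e_j$, which yields $\sigma\ge1-1/q$ for PG.

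\textbf{Main obstacle.} There is little genuine difficulty. The only point needing care is the limit $n_j/b_j\to q-1$, which gives $t_j=o(b_j)$. Lemma~\ref{lem:GS-tower} states this directly via the genus asymptotics, so I would cite it rather than rederive it. After that, every ``sufficiently large $j$'' claim is a strict inequality between positive limits, and I will take $j$ beyond the maximum of the finitely many resulting thresholds.
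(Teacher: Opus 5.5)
Your proposal is correct and follows the paper's proof essentially step by step. Both arguments take $n_j/b_j\to q-1$ (hence $t_j/b_j\to0$) from Lemma~\ref{lem:GS-tower} and use $e_j=(Q^2-1)b_j+t_j$ together with $(n-e)^2-mn=(Q^2-1)^2b^2-(q-2Q^2+1)bt$ for the radius claims. Both then use $(Q^2-Q)b_j/e_j\to Q/(Q+1)$ to invoke Theorem~\ref{thm:GS-fixed-field} and the reversed-line remark. The only cosmetic difference is in checking $2\g_j-1\le m_j$: you use $\g_j/n_j\to 1/(Q^3-1)<\rho_\ast/2$, while the paper uses the equivalent $\g_j/b_j\to Q^3+1$ and $q-2Q^2+1>2(Q^3+1)$.
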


\begin{proof}
By Lemma~\ref{lem:GS-tower}, we have $n_j/b_j\to q-1$ and $\g_j/b_j\to Q^3+1$. Since $q-2Q^2+1>2(Q^3+1)$, it follows that $2\g_j-1\le m_j<n_j$ for all sufficiently large $j$. The Riemann--Roch theorem then gives $\dim\mathcal C_j=m_j+1-\g_j$, and hence
\[
\frac{\dim\mathcal C_j}{n_j}
\longrightarrow
1-\frac{2}{Q^4+Q^2+1}-\frac{1}{Q^3-1}>0.
\]
We also have $\g_j=\Theta(n_j)$.

Set $t_j=n_j-(q-1)b_j$. By Lemma~\ref{lem:GS-tower}, we have $t_j\ge1$ and $t_j/b_j\to0$. Theorem~\ref{thm:GS-fixed-field} gives $0<\frac{e_j}{n_j}-\frac{d(\mathcal C_j)}{2n_j}=\frac{t_j}{2n_j}\longrightarrow0.$ Moreover, since $t_j/b_j\to0$, we have $(q-2Q^2+1)t_j<(Q^2-1)^2b_j$ for all sufficiently large $j$. Thus $e_j<(1-\sqrt{\rho_j})n_j$.

Finally, since $e_j=(Q^2-1)b_j+t_j$, we obtain
\[
\frac{(Q^2-Q)b_j}{e_j}
\longrightarrow
\frac{Q^2-Q}{Q^2-1}
=\frac{Q}{Q+1}>\varepsilon.
\]
The soundness lower bounds then follow from Theorem~\ref{thm:GS-fixed-field}.
\end{proof}

For any fixed prime $p$ and $0\le\varepsilon<1$, choose an integer $a\ge1$ such that $\varepsilon<p^a/(p^a+1)$, and set $Q=p^a$ and $q=p^{6a}$. Corollary~\ref{cor:GS-fixed-field} then gives counterexamples over this fixed field with a positive limiting rate, while the relative gap above half the actual minimum distance tends to zero. Thus, for fixed $p$ and $\varepsilon$, the field size can be chosen independently of how small this gap is required to be, while the soundness lower bounds remain unchanged. The limiting rate here is at least $16/21$, whereas Section~\ref{sec:trace-counterexamples} also gives lower-rate examples.

\paragraph{Acknowledgment of AI assistance.}
    ChatGPT 6 was used for language editing and to assist with the results concerning the Johnson bound for one-point AG codes. All other results in this paper were obtained without AI assistance. The authors take responsibility for the correctness of the results and the quality of exposition.

%\appendix
%\input{appendix}

\bibliographystyle{alpha}
\bibliography{refs}

\end{document}